\documentclass[11pt]{article}

\usepackage[margin=1in]{geometry}
\usepackage{setspace}
\usepackage{fancyhdr}

\usepackage[utf8]{inputenc}

\pagestyle{fancy}
\fancyhf{}
\rfoot{\thepage} 
\usepackage{amsmath,amssymb,amsfonts,amsthm}
\usepackage{bbm}
\usepackage{graphicx}
\usepackage{fullpage}
\usepackage[backref=page]{hyperref}
\usepackage{color}
\usepackage{wrapfig}
\usepackage{tikz}
\usepackage{setspace}
\usepackage[noend]{algpseudocode}
\usepackage[framemethod=tikz]{mdframed}
\usepackage{xspace}
\usepackage{pgfplots}
\usepackage{framed}
\usepackage{subcaption}
\usetikzlibrary{arrows.meta,positioning, calc}
\usepackage{cleveref}
\usepackage[ruled,vlined]{algorithm2e}
 \pgfplotsset{compat=1.5}



\def\final{0}  
\def\iflong{\iffalse}
\ifnum\final=0  
\newcommand{\tanote}[1]{{\color{purple}[\small {Tamalika: \bf #1}]}}

\else 
\newcommand{\tanote}[1]{}

\fi

\newcommand{\Lap}{\ensuremath{{\textsf{Lap}}}\xspace}
\newcommand{\E}[0]{\mathop{\bbE}\xspace}
\newcommand{\bfc}{\ensuremath{\mathbf{c}}\xspace}

\DeclareMathOperator*{\poly}{poly}


  \newcommand{\eps}[0]{\ensuremath{\varepsilon}}
  \let\epsilon\eps
  
  \newcommand{\cA}{\ensuremath{{\mathcal A}}\xspace}
  \newcommand{\cB}{\ensuremath{{\mathcal B}}\xspace}
  \newcommand{\cC}{\ensuremath{{\mathcal C}}\xspace}
  
  \newcommand{\cE}{\ensuremath{{\mathcal E}}\xspace}
  \newcommand{\cF}{\ensuremath{{\mathcal F}}\xspace}

  \newcommand{\cO}{\ensuremath{{\mathcal O}}\xspace}

  \newcommand{\cS}{\ensuremath{{\mathcal S}}\xspace}
  
  \newcommand{\cU}{\ensuremath{{\mathcal U}}\xspace}

  \newcommand{\cX}{\ensuremath{{\mathcal X}}\xspace}
  \newcommand{\cY}{\ensuremath{{\mathcal Y}}\xspace}


  \newcommand{\bbE}{\ensuremath{{\mathbb E}}\xspace}

  \newcommand{\bbR}{\ensuremath{{\mathbb R}}\xspace}

  \newcommand{\bbZ}{\ensuremath{{\mathbb Z}}\xspace}


\makeatletter
\newtheorem*{rep@theorem}{\rep@title}
\newcommand{\newreptheorem}[2]{%
\newenvironment{rep#1}[1]{%
 \def\rep@title{\theoremref{##1} Restated}%
 \begin{rep@theorem}}%
 {\end{rep@theorem}}}
\makeatother

\makeatletter
\newtheorem*{rep@lemma}{\rep@title}
\newcommand{\newreplemma}[2] {%
\newenvironment{rep#1}[1]{%
 \def\rep@title{\lemmaref{##1} Restated}%
 \begin{rep@lemma}}%
 {\end{rep@lemma}}}
\makeatother


\newtheorem{theorem}{Theorem}
\newreptheorem{theorem}{Theorem}

\newtheorem{definition}{Definition}
\newtheorem{lemma}{Lemma}
\newtheorem{proposition}{Proposition}
\newreplemma{lemma}{Lemma}

\theoremstyle{definition}
\newtheorem*{remark}{Remark}




\crefname{theorem}{Theorem}{Theorems}
\crefname{lemma}{Lemma}{Lemmas}
\crefname{definition}{Definition}{Definitions}
\crefname{proposition}{Proposition}{Propositions}
\crefname{corollary}{Corollary}{Corollaries}

\crefname{algocf}{Algorithm}{Algorithms}
\Crefname{algocf}{Algorithm}{Algorithms}
\crefname{section}{Section}{Sections}
\crefname{figure}{Figure}{Figures}
\crefname{table}{Table}{Tables}
\crefname{appendix}{Appendix}{Appendixes}

\title{Differentially Private Clustering in Data Streams}
\date{}

\author{
Alessandro Epasto
\\Google 
\and
Tamalika Mukherjee\thanks{Tamalika Mukherjee was supported in part by the Bilsland Dissertation Fellowship. Work partially done while a Ph.D. Student at Purdue University and a Student Researcher at Google.}
\\Columbia University
\and
Peilin Zhong
\\Google 
}

\begin{document}

\maketitle

\begin{abstract}

Clustering problems (such as $k$-means and $k$-median) are fundamental unsupervised machine learning primitives, and
streaming clustering algorithms have been extensively studied in the past.
However, since data privacy becomes a central concern in many real-world applications, non-private clustering algorithms may not be as applicable in many scenarios.

In this work, we provide the first differentially private algorithms for $k$-means and $k$-median clustering of $d$-dimensional Euclidean data points over a stream with length at most $T$ using space that is \emph{sublinear} (in $T$) in the \emph{continual release} setting where the algorithm is required to output a clustering at every timestep.

We achieve (1) an $O(1)$-multiplicative approximation with $\tilde{O}(k^{1.5} \cdot \poly(d,\log(T)))$ space and $\poly(k,d,\log(T))$ additive error, or (2) a $(1+\gamma)$-multiplicative approximation with $\tilde{O}_\gamma(\poly(k,2^{O_\gamma(d)},\log(T)))$ space for any $\gamma>0$, and the additive error is \sloppy $\poly(k,2^{O_\gamma(d)},\log(T))$.
Our main technical contribution is a differentially private clustering framework for data streams which only requires an offline DP coreset or clustering algorithm as a blackbox.

\end{abstract}
\thispagestyle{empty}
\newpage
\setcounter{page}{1}
\pagestyle{plain}
\section{Introduction}

In real-world applications, a major challenge in dealing with large-scale data is that the entire datasets are too large to be stored in the computing system.
The need to address this challenge and the success of large-scale systems (such as Spark Streaming~\cite{zaharia2012discretized}) that process data in streams have driven the study of the \emph{streaming model}, introduced by the seminal work~\cite{alon1999space}.
In this model, there is a stream of data points --- a data point arrives at each timestamp, and a streaming algorithm can only access these data points by a single pass.
The goal is to output an (approximate) solution to a problem with respect to the set of data points that are arrived while using as small space as possible.
If an algorithm is required to output at every timestamp when a new data point arrives, then it is called the \emph{continual release} setting.
Otherwise, the algorithm only needs to output at the end of the stream, which is called the \emph{one-shot} setting.
The streaming model has attracted a lot of attention from different areas in the past decades.
In particular, streaming clustering algorithms have been extensively studied by the clustering literature.

Clustering is an essential primitive in unsupervised machine learning, and its geometric formulations, such as $k$-means and $k$-median, have been studied extensively, e.g.,~\cite{AGKMMP01, charikar2002constant, har2004coresets, chen2006k, chen2008constant, awasthi2010stability, ostrovsky2013effectiveness, li2016approximating, ahmadian2019better}. 

In the streaming $k$-clustering problem, there is a stream of points in $\mathbb{R}^d$, and the goal is to output a set of $k$ centers at each timestamp $t$, and minimize the $k$-clustering cost with respect to data points arrived before $t$. There is a long list of work (e.g., a subset includes~\cite{har2004coresets,har2005smaller,Chen09,feldman2011unified,braverman2019streaming,cohen2022towards,cohen2023streaming}) studying $k$-means and $k$-median problem in the streaming setting. 
The state-of-the-art result is achieved by~\cite{cohen2023streaming} which uses $\tilde{O}\left(\frac{kd}{\gamma^2}\right)\cdot \min\left(\frac{1}{\gamma^z},k\right)\cdot \poly(\log\log T)$ space to obtain a $(1+\gamma)$-approximation with probability $0.9$ at the end of the stream.
However, none of these algorithms are private, which means they are not applicable when the dataset involves personal information and privacy is considered to be a major concern in real-world applications.

Differential privacy (DP)~\cite{DMNS06} has become the de facto standard for preserving data privacy due to its compelling privacy guarantees and mathematically rigorous definition. 
DP $k$-clustering algorithms in the offline setting have been studied for years~\cite{NissimRS07,FeldmanFKN09,FeldmanXZR17, GuptaLMRT10,BalcanDLMZ17, HuangL18,NissimS18, StemmerK18,GhaziKM20,cohen21c}, where the main focus is to improve the approximation ratio and achieve fast sequential running time.
DP $k$-clustering problem has also been studied in other computational models which are more relevant to large-scale computations such as sublinear-time~\cite{blocki2021differentially} and massively parallel computing (MPC)~\cite{CELMMSSV22, EZNMC22}, and distributed computing setting~\cite{xia2020distributed,chang2021locally}.
However, the landscape of the DP $k$-clustering problem in the streaming model is still mysterious. 
In fact, to the best of our knowledge, there is no previously known DP streaming algorithm achieving $O(1)$-multiplicative error using sublinear space even in the one-shot setting.

In this work, we present the \emph{first} DP streaming algorithms for Euclidean $k$-means and $k$-median clustering using $\poly(k,d,\log(T))$ space to achieve an $O(1)$-multiplicative error and a $\poly(k,d,\log(T))$-additive error. 
Our algorithms are DP under the \emph{continual release} setting.
Note that any DP algorithm under the continual release setting is always DP under the one-shot setting.

\subsection{Differential Privacy  Model and Clustering Problem}
We first formally define differential privacy and the streaming model under continual release setting and then define the clustering problem. The input is a stream of points $x_1,x_2,\cdots,x_T\in\mathbb{R}^d$, where each $x_i\in\mathbb{R}^d$ satisfies $\|x_i\|_2\leq \Lambda$, i.e., we assume all input points are within a ball of radius $\Lambda$. We work in the \emph{insertion-only} setting, where points arrive sequentially but never leave the stream. That is, at time $t$ the dataset consists of all points ${x_1,\ldots,x_t}$ seen so far, and future outputs must remain consistent with this growing prefix.

In this paper, we study streaming algorithms $\cA$ under the continual release setting, i.e., the entire output of $\cA$ is $(s_1,s_2,\cdots,s_T)$ where $s_t$ is the output of $\cA$ at timestamp $t$ with respect to the data arrived no later than $t$.
We work under \emph{event-level privacy} where the streams $\cS=(x_1,\ldots, x_T)$ and $\cS'=(x'_1,\ldots, x'_T)$ are {neighboring} if there exists at most one timestamp $t^* \in [T]$ for which $x_{t^*} \neq x'_{t^*}$ and $x_{t} = x'_{t}$ for all $t \neq t^*$.  

If we do not specify the timestamp, then the output of $\cA(\cS)$ indicates the entire output of $\cA$ over the stream $\cS$ at every timestamp.

\begin{definition}[Differential privacy \cite{DMNS06}]\label{def:DP}
A randomized algorithm $\cA$ is $(\eps,\delta)$-DP if for every pair of neighboring streams $\cS \sim \cS'$, and for all sets $\cO$ of possible outputs, we have that $\Pr[\cA(\cS) \in \cO ] \leq e^\eps \Pr [\cA(\cS') \in \cO] +\delta  $. When $\delta=0$ we simply say that the algorithm is $\eps$-DP.
\end{definition}

 For points $x,y \in \bbR^d$, we let $d(x,y) = \|x-y\|_2$ be the Euclidean distance between $x$ and $y$. Given a set $\cC$, we define $d(x,C):= \min_{c \in C} d(x,c)$. 

For a set of centers $C$, we define the cost of clustering for the set $\cS$ w.r.t. $C$ as  
$$\textsf{cost}(C, \cS) = \sum_{x \in \cS} d^z(x,C)$$
where $z=1$ for $k$-median, and $z=2$ for $k$-means.

Our goal in DP clustering is to produce a set of $k$ centers $C_\cS$ for input stream $\cS$ such that (1) $C_\cS$ is $(\eps,\delta)$-DP wrt $\cS$, and (2) $\textsf{cost}(C_\cS, \cS) \leq \alpha \cdot \textsf{cost}(C^{opt}_\cS, \cS) + \beta$.

\subsection{Our Results}
Before presenting our main results, we first define some useful terminology. The notation $O_x(\cdot)$ ignores factors involving $x$. We use the term \emph{semicoreset} to describe a relaxed version of a coreset that allows an additive error proportional to the optimal cost. This notion is not intended as a canonical definition, but rather as a convenient abstraction that simplifies the presentation of our results. In particular, it enables a unified statement of \cref{thm:main-inf} that can work using both static DP coreset algorithms and static DP clustering algorithms as a black-box.

\begin{definition}[$(\kappa,\eta_1,\eta_2)$-semicoreset]\label{def:semicoreset}
Given a point set $P$ in $\bbR^d$, a (multi)set $Q$ is a $(\kappa,\eta_1,\eta_2)$-semicoreset of $P$ for $k$-clustering ($k$-means or $k$-median) if 
$$ \frac{1}{\kappa}\cdot \textsf{cost}(C,P)-\eta_1 \cdot \textsf{cost}(C^{opt}_P,P) -\eta_2 \leq \textsf{cost}(C,Q) \leq \kappa \cdot \textsf{cost}(C,P)+\eta_2 $$
for any set of $k$-centers $C \subseteq \bbR^d$.
\end{definition}

In particular, a coreset that has additive error that is not proportional to the optimal cost is simply a $(\kappa,0,\eta_2)$-semicoreset. We also abuse notation and refer to a coreset with \emph{no} additive error, i.e., $(\kappa,0,0)$-semicoreset as a \emph{$\kappa$-coreset}. 

Given an offline DP $k$-clustering algorithm one can obtain an offline DP semicoreset by using a transformation (stated in \cref{thm:clustering-to-semicoreset}) by~\cite{EZNMC22} --- we use this transformation as a blackbox in the sequel. 
We state our results for $k$-means, but note that our results easily generalize to $k$-median. As is standard in DP clustering literature~(e.g.,\cite{StemmerK18}), we assume $\Lambda$ is an upper bound on the diameter of the space of input points. For ease of presentation, we assume $\Lambda=1$ in this part.

Our main result is a general framework for DP $k$-clustering in the streaming setting which utilizes an offline DP semicoreset algorithm as a black-box (see \cref{thm:main-inf}). Using existing results from the DP clustering literature, the cost of the resulting clustering output by our framework achieves (1) an $O(1)$-multiplicative error with space complexity having a $k^{1.5}$ dependency  --- using the DP clustering algorithm from~\cite{StemmerK18} (see~\cref{thm:final-stemmer}), or (2) a $(1+\gamma)$-multiplicative error with space complexity having a $\poly(k)$ dependency --- using the DP coreset algorithm from~\cite{GhaziKM20}  (see~\cref{thm:final-ghazi}). 

We emphasize that our work establishes the theoretical feasibility of sublinear-space DP clustering in streams—a fundamental question that was previously open. 
As offline DP clustering algorithms achieve better additive error bounds, our black-box approach immediately inherits these improvements, potentially leading to more practical streaming variants.

We assume we are given a non-DP algorithm in the offline setting that can compute a $(1+\gamma)$-approximation to $k$-means (e.g.~\cite{SaulpicCF19}) in the following statements.

\begin{theorem}[Main]\label{thm:main-inf}
 Given dimension $d$, clustering parameter $k$, arbitrary parameter $C_M$, a non-DP $(1+\gamma)$-coreset algorithm, an $(\eps,\delta)$-DP $(\kappa,\eta_1,\eta_2)$-semicoreset algorithm $\cA$ that outputs a semicoreset of size $SZ_\cA(\cdot)$ and using space $S_\cA(\cdot)$. 
 Then, there exists a streaming algorithm $\cA'$ for $k$-means that outputs a set of centers $\cC_{\hat{\cY}}$ at every timestep $t \in [T]$ such that
\begin{enumerate}
    \item \label{it:dp-main}(Privacy) $\cA'$ is $(3\eps,\delta)$-DP under the continual release setting. 
    \item \label{it:acc-main}(Accuracy) With probability $1-\frac{1}{T^2}-\frac{1}{k^{O(\poly(k)}}$, 
  \begin{align*}
  \mathsf{cost}(\cC_{\hat{\cY}},\cS) 
  &\leq {\kappa }\cdot \left((1+\gamma)C_M +\eta_1
  +\frac{(1+\gamma)^4}{(1-\gamma)^3}(\kappa+C_M)\right) \cdot \mathsf{cost}(\cC^{opt}_\cS, \cS)+ V(d,k,\eps,\delta, T,\gamma)
\end{align*}
where $\cS$ denotes the set of all points given by the stream before timestamp $t$, $\cC_{\cS}^{opt}$ is the optimal solution, and $V(d,k,\eps,\delta,T,\gamma) =  \Tilde{O}_\gamma(\kappa (\frac{C_M k^2}{d} +{kM})\frac{1}{\eps}\poly\log(T))$.
\item(Space) \label{it:space-main} $\cA'$ consumes $\Tilde{O}_\gamma(S_\cA(M)+SZ_\cA(M)+M+k\poly\log(\frac{T}{M}))$ space. 
\end{enumerate} 
 where $M=O( \frac{d^3 \eta_2 }{C_M}) $.
\end{theorem}

\begin{remark}
The space complexity achieved by our framework depends on the additive error (i.e., the term $\eta_2$) of the offline DP $(\kappa,\eta_1,\eta_2)$-semicoreset algorithm used as a black-box. In particular, if there exists a DP semicoreset algorithm which achieves a better additive error in the offline setting, then our framework immediately gives a DP clustering algorithm with better space bounds in the continual release setting.
\end{remark}
\begin{remark}
The parameter $C_M$ in \cref{thm:main-inf} is assigned an appropriate value depending on the error of the offline DP semicoreset algorithm used as a black-box in order to obtain the desired multiplicative approximation of the resulting algorithm $\cA'$. For e.g., we can set $C_M=O(1)$ to obtain an $O(1)$-multiplicative approximation in \cref{thm:final-stemmer}, and we can set $C_M=\gamma/2$ to obtain a $(1+O(\gamma))$-multiplicative approximation in \cref{thm:final-ghazi}. 
\end{remark}

We first present \cref{thm:final-stemmer} which is the result of applying our framework to the DP $k$-means algorithm given by~\cite{StemmerK18} in the offline setting that achieves an $O(1)$ multiplicative error. 
We note that~\cite{StemmerK18} gave a DP coreset algorithm which has an additive error with $\poly(k)$ dependency whereas their DP $k$-means algorithm achieves an additive error of $k^{1.5}$. This is why we choose to use their DP $k$-means algorithm and compute a DP semicoreset albeit with a slight loss in privacy to save on space (in terms of $k$) instead of directly using their DP coreset algorithm.

\begin{theorem}\label{thm:final-stemmer}
Given dimension $d$, clustering parameter $k$, privacy parameters $\eps,\delta$, approximation parameter $\gamma>0$. There exists a streaming algorithm $\cA'$ for $k$-means that outputs a set of $k$ centers $\cC_{\hat{\cY}}$ at every timestep $t \in [T]$ such that
\begin{enumerate}
    \item(Privacy) $\cA'$ is $(5\eps,\delta)$-DP under the continual release setting. 
\item(Accuracy) With probability $1-\frac{1}{T^2}-\frac{1}{k^{O(\poly(k))}}$, 
  \begin{align*}
  \mathsf{cost}(\cC_{\hat{\cY}},\cS) 
  &\leq O_{\gamma}(1) \cdot \mathsf{cost}(\cC^{opt}_\cS, \cS)+ V(d,k,\eps,\delta, T)
\end{align*}
where $\cS$ denotes the set of all points given by the stream before timestamp $t$, $\cC^{opt}_\cS$ is the optimal solution, and \sloppy $V(d,k,\eps,\delta,T)= \tilde{O}_\gamma((k^{2.5}d^{3.51})\cdot \poly(\log(T),\log(\frac{1}{\delta}),\frac{1}{\eps}))$.
\item(Space) $\cA'$ consumes $\tilde{O}((k^{1.5} \cdot d^{4.51})\cdot \poly(\log(T),\log(\frac{1}{\delta}),\frac{1}{\eps}))$ 
space. 
\end{enumerate} 
\end{theorem}

Next, we present the result of applying our framework to the DP coreset for $k$-means of~\cite{GhaziKM20} which achieves a multiplicative error of $1+\gamma$ and an additive error of $O_\gamma(\frac{k^2 2^{{O_\gamma}(d)}}{\eps}\poly\log(T))$ in \cref{thm:final-ghazi}. 
\begin{theorem}\label{thm:final-ghazi}
Given dimension $d$, clustering parameter $k$, privacy parameter $\eps$, approximation parameter $\gamma>0$. 
There exists a streaming algorithm $\cA'$ for $k$-means that outputs a set of $k$ centers $\cC_{\hat{\cY}}$ at every timestep $t \in [T]$ such that
\begin{enumerate}
    \item(Privacy) $\cA'$ is $3\eps$-DP under the continual release setting. 
    \item(Accuracy) With probability $1-\frac{1}{T^2}-\frac{1}{k^{O(\poly(k))}}$, 
    \begin{align*}
 \mathsf{cost}(\cC_{\hat{\cY}},\cS) &\leq (1+\gamma)\mathsf{cost}(\cC^{opt}_\cS, \cS)  +  V(d,k,\eps,T)
\end{align*}
where $\cS$ denotes the set of all points given by the stream before timestamp $t$, $\cC^{opt}_\cS$ is the optimal solution, 
$V(d,k,\eps,T) = \tilde{O}_\gamma (\frac{ k^3 d^3 2^{O_\gamma(d)}}{\eps}\cdot\poly\log(T))$.
\item(Space) $\cA'$ consumes $\tilde{O}\left(\poly\left(\frac{2^{O_\gamma(d)}}{\eps} \log(T) ,k,d\right) \right)$ space.
\end{enumerate} 
\end{theorem}

\begin{remark}
We note that the exponential dependency in $d$ in \cref{thm:final-ghazi} is a \emph{direct consequence} of the additive error of the DP coreset having the same dependency in \cite{GhaziKM20}. 
\end{remark}

\subsection{Related Work}
In the offline setting, private clustering was first studied by \cite{GuptaLMRT10}, and~\cite{FeldmanFKN09}. In particular,~\cite{GuptaLMRT10} used the exponential mechanism to produce a pure DP polynomial-time algorithm that achieves $(O(1),\Tilde{O}(k^2\Lambda))$-approximation in discrete spaces. However, their algorithm is highly inefficient in Euclidean space (see~\cite{StemmerK18} for a detailed exposition).
\cite{BalcanDLMZ17, StemmerK18} focused on designing an efficient polynomial time algorithm for clustering that achieves a constant (multiplicative) factor approximation in high-dimensional Euclidean space by adopting the techniques of \cite{GuptaLMRT10} while maintaining efficiency. \cite{FeldmanFKN09} introduced the notion of private coresets and a recent line of work has adopted their techniques to give clustering algorithms with better approximation guarantees and efficiency~\cite{FeldmanXZR17, NissimS18}. \cite{StemmerK18} gave a pure DP $k$-means algorithm with $O(1)$-multiplicative error, \cite{ChaturvediNX21} achieved the same multiplicative error but improved the additive error to $\Tilde{O}(k \sqrt{d}\Lambda^2/\eps)$. \cite{Nguyen20} and \cite{GhaziKM20} gave pure DP clustering algorithms that achieve optimal multiplicative error of $(1+\gamma)w^*$, where $w^*$ is the best approximation ratio for non-private $k$-means clustering.

\paragraph{Concurrent Works.} \cite{latour2023differential} concurrently released results on differentially private $k$-means clustering in the continual release setting. 
 Their work is mostly incomparable with ours --- their main focus is to optimize the approximation ratio under the more general setting of continual observation with the insertion \emph{and} deletion of points, but their algorithm does not optimize for space and requires storing all the input points in the stream. More recently~\cite{TourHS24} showed that a twenty year old greedy algorithm for clustering can be made DP and gives DP clustering algorithms for both the static and continual release setting that achieves the same multiplicative error as \cite{GhaziKM20} and additive error that has a $\log^{1.5}(T)$ dependency. This algorithm also does not optimize for space and inherently stores all the points in the stream.

 In contrast, we work in the insertion-only streaming setting and our main focus is to optimize the approximation ratio using low space, i.e., $\poly\log(T)$, in the continual release setting. Note that we achieve the same multiplicative error as~\cite{latour2023differential} in \cref{thm:final-ghazi}.

\subsection{Our Techniques}
Our techniques apply to both $k$-means and $k$-median clustering, but for simplicity, we explain our techniques for the problem of $k$-means. We first outline the challenges to designing a DP $k$-means clustering algorithm in the continual release setting and then discuss the main ideas behind our DP clustering framework. 

\paragraph{Naive Merge and Reduce approaches fail. }
A standard streaming technique for clustering is the Merge-and-Reduce framework~\cite{har2004coresets,AgarwalHV04,feldman2020turning}. The idea is to maintain coresets $C_1,\dots,C_L$ for $L=\log T$ levels (using a static coreset algorithm as a black-box) such that their union forms a coreset of all points seen so far.~\cite{cohen2022towards} showed that any point set has a $(1+\gamma)$-coreset with size $k\cdot \poly(1/\gamma)$, thus each level has capacity threshold $M = k \cdot \poly(\log T)$. When a new point $x$ arrives, it is inserted into $C_1$. For $i\in [L]$, if some $C_i$ exceeds size $M$, we compress it into a $(1+1/\poly(\log T))$-approximate coreset $C'$, reset $C_i \gets \emptyset$, and insert $C'$ into $C_{i+1}$. This process repeats until all levels respect the size bound. By induction, each $C_i$ is a $(1+1/\poly(\log T))^i$-coreset of its underlying input points, and hence $C_1\cup C_2\cup \cdots \cup C_L$ is a $(1+1/\poly(\log(T)))^L=(1+O(\log(T))/\poly(\log(T)))$-coreset of the entire input stream.

We can adapt the classical merge-and-reduce method to the DP setting as follows. At level $1$, whenever the buffer $C_1$ reaches size $M$, we run a DP $k$-means coreset routine to compress it. The resulting summary is inserted into $C_2$, and higher levels merge summaries using a non-private coreset algorithm. By the post-processing property, this preserves differential privacy while avoiding repeated privacy loss at higher levels. Note that the size checks that trigger compression at each level can be implemented privately with the Sparse Vector Technique (SVT)~\cite{DworkNRRV09, HardtR10} (see \cref{sec:dp-m-r} for details). 

The key point is that any DP coreset must introduce some additive error $\eta=\Omega(\Lambda^2)$~\cite{GuptaLMRT10,FeldmanFKN09}. If we apply this framework to the entire stream, the additive error introduced at the base still accumulates across levels: each merge adds the error from its children. After $\log T$ levels, this leads to a total additive error of $T^{\Omega(1)}\cdot \Lambda^2$, which is too large to yield meaningful guarantees. 

\paragraph{Our Approach. }Our main technical innovation is to design methods that prevent the additive error from growing exponentially, as in the naive implementation of the DP merge-and-reduce framework described above.

Rather than applying a single instance of the DP merge-and-reduce framework to the entire input stream, we first partition the space $\mathbb{R}^d$ so that nearby input points are grouped together, and apply multiple instances of the DP merge-and-reduce framework to disjoint groups in parallel. To do this, we compute a set of candidate centers forming a \emph{bicriteria approximation}\footnote{Here, a bicriteria approximation means that the algorithm may output more than $k$ centers while still approximating the $k$-means cost.} for $k$-means, and then assign input points to groups based on these centers. Within each group, we run a DP merge-and-reduce framework: the first level uses a DP clustering or DP coreset algorithm, while higher levels are computed using a non-DP coreset algorithm. This produces a DP \emph{semicoreset} (see \cref{def:semicoreset}). The semicoreset abstraction makes our framework flexible, since it can incorporate either a DP clustering algorithm or a DP coreset algorithm as the base primitive in the merge-and-reduce process.

For every timestamp $t \in [T]$, on a high-level, our algorithm (\cref{alg:extend-cluster}) does the following --- 
\begin{enumerate}
    \item Compute a set of centers $\cF$ in an online DP fashion that satisfies a bicriteria approximation (\cref{alg:dp-find-center-online}) to $k$-means.
    \item Maintain DP semicoresets of the points assigned to centers in $\cF$ in parallel via multiple DP Merge-and-Reduce (\cref{alg:dp-merge-reduce}) instances
    \item Output the union of these semicoresets called $\hat{\cY}$. 
\end{enumerate}
In a post-processing step --- Compute a non-DP $k$-means $(1+\gamma)$-approximation algorithm on the output $\hat{\cY}$. 

In the remainder of this section, we briefly discuss the different components of our approach.

\paragraph{Bicriteria Approximation. }\sloppy We first design a sublinear space online $\eps$-DP algorithm that outputs a set of $\Tilde{O}(k\log(T))$ candidate centers satisfying a $d^{O(1)}$-multiplicative approximation and $ \tilde{O}(\frac{d^2 \Lambda^2 k}{\eps}\cdot \poly\log (T))$ additive error. The bicriteria approximation algorithm uses two main ingredients --- \emph{quadtrees} and \emph{heavy hitters}. A quadtree creates a nested series of grids that partition $\mathbb{R}^d$ and can be used to embed input points into a Hierarchically Separated Tree (HST) metric, which often simplifies the analysis of $k$-means cost. We initialize our quadtree with $\log(\Lambda)$ levels, and use this embedding to map every input point to the center of a grid cell at each level. 

For a fixed level, our goal is to approximately choose $\tilde{O}(k)$ cells that have the most points and store them as candidate centers in set $\cF$ while using low space. To do this, we first hash cells of the fixed quadtree level to $w=O(k)$ buckets, and at every timestep, we track the DP size of each bucket through the standard Binary Mechanism~\cite{DworkNPR10, ChanSS11}, as well as privately compute the heavy hitters of each bucket using the DP heavy hitter algorithm from~\cite{EMMMVZ23} . Note that we need to track the DP size of each bucket using the Binary Mechanism separately to ensure that we can prune false positives among candidate centers. Without this pruning step, cells with very small counts could be mistakenly retained as heavy hitters. To the best of our knowledge this is the first use of DP Heavy hitters for DP clustering \emph{in the continual release setting}.

Finally, the algorithm outputs the cumulative set of candidate centers $\cF$ across each of the $\log \Lambda$ levels of the quadtree, yielding a bicriteria approximation. See \cref{thm:final-acc-dp-hh} for a formal statement.

\paragraph{Grouping points and applying Merge and Reduce. }
Instead of running a single instance of the DP Merge and Reduce framework over the entire stream (as described in the naive approach), our strategy is to partition the space $\bbR^d$ into groups such that input points close to a candidate center (of the bicriteria solution $\cF$) are in the same group and then run $\log(\Lambda)$ parallel instances of the DP Merge and Reduce framework \emph{per group}. We use a partitioning technique introduced by~\cite{Chen09} and define \emph{rings} $R_r$ (where $1\leq r \leq \log \Lambda$) based on their distance from the current centers in $\cF$ and map input points to these rings (see \cref{def:ring-center} for a formal definition).

The intuition is that any DP coreset necessarily incurs additive error that scales with the square of the diameter of the set it is applied to. If we ran a single DP Merge and Reduce on all points assigned to a center, the diameter could be as large as $\Lambda$, leading to $\Omega(\Lambda^2)$ additive error. By splitting the group into $\log(\Lambda)$ concentric rings, each ring has diameter only $O(2^r)$, so the additive error of the DP routine on ring $R_r$ is $O(2^{2r})$. This error is now proportional to the cost contributed by points in that ring, which allows us to keep the total error under control.

Now we can run the DP Merge and Reduce framework and obtain DP semicoresets for points in each ring, and by taking the union of these semicoresets over all rings, we obtain a DP semicoreset for the stream of points seen so far. Since the DP clustering algorithms we use as a black-box in our Merge and Reduce framework from~\cite{GhaziKM20,StemmerK18} achieve constant multiplicative error, this technique will result in constant multiplicative error of the resulting DP semicoreset.

\paragraph{Charging Additive Error to Multiplicative Error. }
Because we are taking the union of DP semicoresets over all rings, the additive error resulting from this union depends on $\sum_r N_r/M$, where $N_r$ is the number of points in ring $r$ and $M$ controls when the DP Merge and Reduce framework is applied. At first glance, this sum could be as large as $T/M$, which seems too large to be useful.

The crucial insight is that we use a structural property of the bicriteria solution\footnote{Intuitively, this property holds because the bicriteria solution already ``pays'' for the squared distances that define the rings: every point is assigned to a nearby candidate center, and the radius of its ring is proportional to this distance. } (\cref{lem:radius-opt-cost}): the total sum of squared ring radii over all points is bounded, up to constants, by the $k$-means cost of the bicriteria solution $F$ (i.e., $\text{cost}(F,S) = \sum_{x\in S} d(x,F)^2$), which in turn is within a constant factor of the optimal $k$-means cost plus a small additive term. This property ensures that the additive error accumulated by the DP Merge and Reduce instances is controlled by the clustering cost. In turn, this allows us to charge the additive error to the multiplicative error, so that with an appropriate choice of $M$ the overall error is just a small constant-factor blow-up in the approximation plus a polylogarithmic additive term (see \cref{app:framework-accuracy} for details).

Finally, observe that the bicriteria solution $\cF$ can change over time, i.e., new centers can be added to $\cF$. Thus whenever a new center (or centers) are added to $\cF$, we redefine the rings (based on distance to the set $\cF$) and initiate new DP Merge and Reduce instances for each ring.

\paragraph{Comparison of our techniques to~\cite{EZNMC22}. }\cite{EZNMC22} gave DP $k$-means and $k$-median algorithms in the Massively Parallel Computation (MPC) model which achieved $O(1)$-multiplicative and $\poly(k,d)$ additive error. Although their techniques are similar to ours, in the sense that they reduce the problem of solving DP clustering on the entire dataset to solving multiple instances of the DP clustering problem on partitioned input data, there are several differences in our work and new challenges that we have to deal with as a result.

First, their work does not tackle sublinear space in the continual release setting. 
Second, our bicriteria approximation algorithm uses a black-box continual release \emph{DP heavy hitters} algorithm from~\cite{EMMMVZ23} to find the candidate centers and thus requires a completely different analysis compared to the greedy approach used in~\cite{EZNMC22}. More importantly, we have to implement a novel DP Merge and Reduce framework that works with both an offline DP clustering or DP coreset algorithm, which was not required in the MPC setting.

\section{Differentially Private Clustering Framework}\label{sec:dp-framework}
In this section, we first present our differentially private clustering framework in the continual release setting which is given by \cref{alg:extend-cluster} and discuss the different components of the algorithm as well as its analysis in detail (we defer some of the proofs to \cref{app:dp-framework-proofs}). Notably, our framework allows us to plug in any existing offline DP coreset or clustering algorithm to obtain a corresponding DP clustering algorithm in the continual release setting. The proofs of \cref{thm:main-inf}, \cref{thm:final-stemmer} and \cref{thm:final-ghazi} are stated in \cref{sec:main-proofs}.  

\begin{algorithm}[!htb]
\caption{Main Algorithm}
\label{alg:extend-cluster}

\KwData{ Stream $\cS$ of points $x_1, \ldots, x_T \in \bbR^d$, Privacy parameters $\eps,\eps_1,\delta_1$}

Call $\textbf{Initialize}(\eps)$ of $\mathsf{\textbf{BicriteriaDPCenters}}$\;\Comment{See \cref{alg:dp-cluster-online}}

Initialize bicriteria solution $\cF=\emptyset$\;
Initialize flag for new centers being added $flag_{new}=0$, and DP semicoreset $\hat{\cY}=\emptyset $\;

\For{new point $x_t$}{
$\cF_t \leftarrow $\textbf{Update}$(x_t)$ of $\mathsf{\textbf{BicriteriaDPCenters}}$\;

\If{$\cF_t \neq \emptyset$ and $\vert \cF \cap \cF_t \vert < \vert \cF_t \vert $}{ \Comment{New centers added to $\cF$ --- need to redefine rings}

$flag_{new}=1$;

}
$\cF \leftarrow \cF \cup \cF_t$\;

\For{$1 \leq r \leq \log \Lambda$, run the following in parallel}{
Let $R_r$ represent the ring centered at $\cF$ (see \cref{def:ring-center})\;

\If{$flag_{new}=1$}{
Create new instance $\textbf{DP-Merge-Reduce}_{R_r}$ by calling $\textbf{Initialize}(\eps,\eps_1,\delta_1)$ of $\textbf{DP-Merge-Reduce}$\;\Comment{See \cref{alg:dp-merge-reduce}}
 }
 
\eIf{$ x_t \in R_r$ }{
$\hat{\cY}_{r} \leftarrow$ \textbf{Update}$(x_t)$ of $\textbf{DP-Merge-Reduce}_{R_r}$\; }
{
$\hat{\cY}_{r} \leftarrow$ \textbf{Update}$(\perp)$ of $\textbf{DP-Merge-Reduce}_{R_r}$\;
}}
$flag_{new}=0$\;

{Set $\hat{\cY} \leftarrow \hat{\cY} \cup (\cup_{r} \hat{\cY}_{r})$\;} 

{Run non-DP $(1+\gamma$)-coreset algorithm on $\hat{\cY}$ and store this semicoreset as the new $\hat{\cY}$\;}

Output the semicoreset $\hat{\cY}$\;

{Delete existing input points (if any) in memory\;}
}
\end{algorithm}

Before presenting the main algorithm, we formally define a ring centered at a set as we will use this to partition the space $\bbR^d$ in our algorithm. 
\begin{definition}[Ring centered at a Set]\label{def:ring-center}
Let $r \in \bbR$. Ring $R_r$ for set $\cF$ contains the set of points $\{x_i\}_{i \in [T]}$ such that $2^{r-1} \leq d(x_i,\cF)< 2^{r}$.
\end{definition}

\paragraph{Main Algorithm (\cref{alg:extend-cluster}). }When a new point $x_t$ arrives, our algorithm does the following
\begin{enumerate}
    \item \label{bicriteria} Update the bicriteria solution $\cF$ (see \cref{alg:dp-cluster-online})
    \item If new centers have been added to $\cF$ then create rings according to \cref{def:ring-center} for the set $\cF$ and add $x_t$ to a  ring\footnote{Notice that in the pseudo code \cref{alg:extend-cluster} the symbol $\perp$ represents an empty update that is effectively ignored. This is needed for technical reasons to ensure DP by avoiding the value of the input affecting the number of events in the sub-streams.}. For each ring $1 \leq r \leq \log(\Lambda)$ create an instance of $\textsf{DP-Merge-Reduce}_r$ (see \cref{alg:dp-merge-reduce}) which outputs a DP semicoreset per ring. \\If no new centers have been added to $\cF$ in this timestep, then instead of creating new rings, the algorithm adds $x_t$ to an existing ring (and corresponding \textsf{DP-Merge-Reduce} instance).
    \item Release the union of these DP semicoresets as $\hat{\cY}$. 
\end{enumerate}

In order to keep our space usage small, we apply a $(1+\gamma)$-approximation non-DP coreset algorithm to the union of semicoresets of these rings as $\hat{\cY}$.

Finally, in an offline postprocessing step, we apply a $\rho$-approximate non-DP clustering algorithm to $\hat{\cY}$.

\paragraph{Analysis. }
For the sake of analysis, we split the entire stream of points into \emph{epochs} dictated by the addition of \emph{new} centers to the bicriteria solution $\cF$. 
Let $T_1, \ldots, T_e$ be the epochs such that for a fixed $i$, the set of bicriteria centers $\cF$ over the timesteps $t \in T_i$ is fixed. Clearly $T_1 \cup \ldots \cup T_e = [T]$ and $T_1 \cap \ldots \cap T_e = \emptyset$.

We first state the theoretical guarantees of \textsf{BicriteriaDPCenters} and \textsf{DP-Merge-Reduce} as we need these results to state the guarantees of our Main Algorithm (see \cref{alg:extend-cluster}). The proofs for these statements can be found in \cref{sec:final-acc-dp-hh} and \cref{sec:dp-merge-reduce}.  The accuracy and space analysis for \cref{alg:extend-cluster} can be found in \cref{app:dp-framework-proofs}.

\begin{theorem} \label{thm:final-acc-dp-hh}[\textsf{BicriteriaDPCenters}]
Let $\cS:= \{x_1,\ldots, x_T\}$ be the stream of input points in Euclidean space. For $t\in [T]$, let $\cF_t$ be the set of centers until time step $t$. Let $cost(\cF, \cS):=\sum^T_{t=1} cost(\cF_t)$ where $cost(\cF_t) := \min_{f \in \cF_t} dist^2(x_t,f)$. There exists an algorithm \textsf{BicriteriaDPCenters} (see \cref{alg:dp-cluster-online}) that outputs a set of centers $\cF$ at every timestep $t\in [T]$ such that 
\begin{enumerate}
    \item(Privacy) \textsf{BicriteriaDPCenters} is $\eps$-DP. 
    \item(Accuracy) With probability at least  $1 - k^{-\operatorname{poly}(k,\log \Lambda)}$, 
    \begin{align*}
     cost(\cF, \cS) &\leq  O(d^3)cost(C^{opt}_{\cS}, \cS) + \tilde{O}\left(\frac{d^{2} \Lambda^2 k}{\eps} \poly\left(\log\left({T\cdot k \cdot\Lambda}\right)\right)\right)
\end{align*}
where $cost(C^{opt}_{\cS}, \cS)$ is the optimal $k$-means cost for $\cS$.

\item(Space) \textsf{BicriteriaDPCenters} uses $\tilde{O}(k \poly\left( \log \left({T}{\Lambda}k\right)\right))$ space.

\item (Size) $\cF$ has at most $\tilde{O}({k}\log T)$
centers.
\end{enumerate} 
\end{theorem}

Recall that our main algorithm runs multiple concurrent instances of \textsf{DP-Merge-Reduce} at any given timestep, so $N$ in the theorem statement below represents the number of points seen by a specific instance of \textsf{DP-Merge-Reduce}. We use a non-DP coreset algorithm in \textsf{DP-Merge-Reduce} which we fix in the analysis to be the current state-of-the-art clustering result by\cite{Cohen-AddadSS21} (see formal guarantees in~\cref{thm:nondp-coreset}).

\begin{theorem}[\textsf{DP-Merge-Reduce}] \label{thm:dp-merge-reduce}
Let $0<\xi<1$, $T$ be the length of the entire stream, $\eps,\delta$ be privacy parameters, $M$ be an arbitrary parameter such that  $M>\frac{12}{\eps}\log(\frac{2T}{\xi})$, and $P$ be a sub-stream of non-empty points with length $N$. 

Suppose we are given black-box access to an offline $(\eps_1,\delta_1)$-DP algorithm $\cA$ that computes a $(\kappa,\eta_1,\eta_2)$-semicoreset of $X \subseteq \bbR^d$ of size $SZ_\cA(N,k,d,\eps, \delta, \kappa,\eta_1,\eta_2, \xi_A)$ using space $S_\cA(N,k,d,\eps, \delta, \kappa,\eta_1,\eta_2, \xi_A)$ with failure probability $\xi_A$. And we are given black-box access to an offline non-DP algorithm $\cB$ that computes a $(1+\gamma)$-coreset of $X \subseteq \bbR^d$ of size $SZ_{\cB}(N,k,d,\gamma,\xi_B)$ using space $S_\cB(N,k,d,\gamma, \xi_B)$ with failure probability $\xi_B$. 
Then there exists an algorithm \textsf{DP-Merge-Reduce} (see \cref{alg:dp-merge-reduce}) in the streaming model such that
\begin{itemize}
    \item (Privacy) \textsf{DP-Merge-Reduce} is $(\eps+\eps_1,\delta_1)$-DP. 
    \item (Accuracy) With probability $1-\xi_A-\xi_B-\xi$, the semicoreset released by \textsf{DP-Merge-Reduce} is  a $((1+\gamma)\kappa,(\frac{4N}{M}-1)(1+\gamma)\eta+\tilde{M})$-semicoreset of $P$. Where $\tilde{M}:= {M}+\frac{6}{\eps}\log(\frac{2T}{\xi})$.
    \item (Space) \textsf{DP-Merge-Reduce} requires 
    
    \sloppy$S_\cA(M,k,d,\eps, \delta, \kappa,\eta_1,\eta_2, \xi_A)+ S_\cB(SZ_\cA(M,k,d,\eps, \delta, \kappa,\eta_1,\eta_2, \xi_A),k,d,\gamma, \xi_B)+\lceil \log(2N/M) \rceil \cdot S_\cB(SZ_\cB(M,k,d,\gamma,  \xi_B),k,d,\gamma, \xi_B) +3M/2$ space.

    \item (Size) The resulting coreset has size at most  $O(k\log k \cdot \gamma^{-4})$.
\end{itemize}
\end{theorem}

\begin{remark}\label{rem:param-M}
The parameter $M$ denotes the block size of the base level in \textsf{DP-Merge-Reduce} (see \cref{alg:dp-merge-reduce}). We treat $M$ as an arbitrary parameter in \cref{thm:dp-merge-reduce}, \cref{lem:main-space} and \cref{lem:yhat-dp-merge-reduce-output-semicoreset}. However we set $M$ to be a function of $\alpha$ (the multiplicative approximation error of the bicriteria solution), $\eta_2$ (the additive approximation error of the DP semicoreset algorithm) and a parameter $C_M$ in the proof of \cref{thm:yhat-semicoreset}, i.e., $ M:= \frac{\alpha \eta_2 }{C_M}$. To obtain an $O(1)$-multiplicative approximation in \cref{thm:final-stemmer}, we assign $C_M=O(1)$. To obtain 
a $(1+\gamma')$-multiplicative approximation in \cref{thm:final-ghazi}, we assign $C_M=\gamma/2$. See proof of \cref{thm:dp-clustering-stemmer} and \cref{thm:dp-coreset-ghazi} for details.
\end{remark}

\paragraph{Privacy. }We first show that the output of our main algorithm is indeed differentially private.  
\begin{lemma}\label{lem:dp-main}
Let the underlying DP semicoreset algorithm used as a blackbox be $(\eps_1,\delta_1)$-DP, then \cref{alg:extend-cluster} is $(\eps_1+2\eps,\delta_1)$-DP under the continual release setting.
\end{lemma}
\begin{proof}
We first observe that for a fixed input point $x_t$, we can view \cref{alg:extend-cluster} as two main steps (1) $x_t$ is first used to update the subroutine \textsf{BicriteriaDPCenters} which produces a set of candidate centers $\cF$ (2) $x_t$ is assigned to an appropriate ring $R_r$ (defined according to $\cF$) and added to the corresponding \textsf{DP-Merge-Reduce}$_{R_r}$ instance of that ring which produces a DP semicoreset (3) Finally, the algorithm releases the union of these DP semicoresets over disjoint rings at timestep $t$. 

Step (1) involves $x_t$ being processed by \textsf{BicriteriaDPCenters} which is $\eps$-DP under the continual release setting (by \cref{thm:final-acc-dp-hh}). In Step (2), since the rings partition the input space by definition, the corresponding \textsf{DP-Merge-Reduce}$_{R_r}$ instances are disjoint. Since the semicoreset output by each \textsf{DP-Merge-Reduce}$_{R_r}$ instance is $(\eps_1+\eps,\delta_1)$-DP under the continual release setting (by \cref{thm:dp-merge-reduce}), then by parallel composition (Item \ref{parallel} of \cref{thm:composition}), the total privacy over all semicoresets output by \textsf{DP-Merge-Reduce}$_{R_r}$ instances (where $1 \leq r \leq \Lambda$) is $(\eps_1+\eps,\delta_1)$-DP. In Step (3), the union of DP semicoresets is DP by postprocessing. 

Thus, each step is private under the continual release setting and note that each step is carried out in the same fixed timestep $t$.  By sequential composition (Item \ref{sequential} of \cref{thm:composition}) of the privacy for Steps (1) and (2), and postprocessing of Step (3), our claim follows. 
\end{proof}

\subsection{Proofs of \cref{thm:main-inf}, \cref{thm:final-stemmer} and \cref{thm:final-ghazi} }\label{sec:main-proofs}
We now present the proof of the guarantees of our DP clustering framework (\cref{alg:extend-cluster}) stated in \cref{thm:main-inf} below. 
\begin{proof}
The privacy guarantee follows from \cref{lem:dp-main}, and the accuracy guarantee follows from \cref{thm:final-accuracy-main}. The space guarantee follows from \cref{lem:main-space}, where we simplify the second term $S_\cB (SZ_\cA(\cdot))$ in the statement of \cref{lem:main-space} by noting that the space used by the non-DP coreset algorithm $\cB$ is linear in the semicoreset size of the algorithm $\cA$. 
\end{proof}

Next we present the proofs of \cref{thm:final-stemmer} and \cref{thm:final-ghazi}. We first restate the existing offline DP clustering and coreset results that we use as a blackbox from \cite{StemmerK18} and \cite{GhaziKM20} in \cref{thm:dp-clustering-stemmer} and \cref{thm:dp-coreset-ghazi}. Note that in~\cref{thm:final-stemmer}, we need to convert the set of $k$ centers obtained from applying the DP clustering algorithm in~\cref{thm:dp-clustering-stemmer} into a semicoreset via~\cref{thm:clustering-to-semicoreset}. 
\begin{theorem}[Theorem 3.10 in~\cite{StemmerK18}]\label{thm:dp-clustering-stemmer}
There is an $(\eps,\delta)$-DP algorithm that takes a database $S$ containing $n$ points in the $d$-dimensional ball $\cB(0,\Lambda)$, and outputs with probability $1-\beta$, a $(\gamma,\eta)$-approximation for the $k$-means of $S$ where $\gamma=O(1)$ and $\eta={O}((k^{1.01} \cdot d^{0.51} + k^{1.5})\cdot \poly(\log(n),\log(\frac{1}{\beta}),\log(\frac{1}{\delta}),\frac{1}{\eps}) )$
\end{theorem}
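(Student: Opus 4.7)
Since the statement is imported verbatim from~\cite{StemmerK18}, the plan is to follow the three-stage framework developed in that line of work rather than re-derive everything from scratch. The three stages are: (i) dimension reduction, (ii) private construction of a bicriteria solution, and (iii) private reduction to exactly $k$ centers together with a lift back to $\bbR^d$. I will describe each stage and flag where each piece of the additive error originates.

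First, I would apply a data-independent Johnson--Lindenstrauss projection $\Pi\colon \bbR^d \to \bbR^{d'}$ with target dimension $d' = O(\log k)$. Because the map is oblivious to the input, it incurs no privacy loss, and standard arguments show that for every configuration of $k$ centers the $k$-means cost is preserved up to a constant multiplicative factor. Working in $\bbR^{d'}$ is what ultimately produces the fractional $d^{0.51}$ exponent: one trades grid granularity in the low-dimensional ball against the scale of the histogram noise, and only the projected dimension (raised to a fractional exponent via this optimization) propagates into the additive cost.

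Second, in the projected space I would build a bicriteria solution by recursively partitioning $\cB(0,\Lambda) \subseteq \bbR^{d'}$ into a hierarchy of grid cells, releasing a Laplace-noised count per cell at each level, and retaining cells whose noisy counts exceed a calibrated threshold. Since the cells at a given level are disjoint, parallel composition (Item~\ref{parallel} of \theoremref{composition}) keeps the per-level privacy cost bounded; standard histogram concentration then shows that the retained cells carry a near-optimal fraction of the mass, giving an $O(1)$-approximation with more than $k$ centers.

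Finally, I would reduce the bicriteria set to exactly $k$ centers using iterative private selection: at each of $k$ rounds, use the exponential mechanism with score equal to the marginal decrease in $k$-means cost and per-score sensitivity scaling as $\Lambda^2$; then lift each selected low-dimensional center to $\bbR^d$ via a Gaussian-mechanism noisy average over its cluster in the original space. The principal obstacle, and the source of the $k^{1.5}$ term, is composing privacy over the $k$ selection rounds: advanced composition contributes a $\tilde{O}(\sqrt{k})$ noise blow-up which, combined with the $k$ rounds of score-level sensitivity, yields the $k^{1.5}$ dependence. Balancing this with the projection-granularity optimization is what produces the more delicate $k^{1.01}\,d^{0.51}$ contribution; getting the right constants in that balance is the hardest accounting step.
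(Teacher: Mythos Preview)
The paper does not prove this statement at all: it is quoted verbatim as an imported result from~\cite{StemmerK18} and used purely as a black box (it is invoked only inside the proof of Corollary~\ref{corol:final-stemmer}, where it is fed through \theoremref{clustering-to-semicoreset}). There is therefore no ``paper's own proof'' to compare your proposal against, and in the context of this paper no proof is expected or required.

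As a side remark on the sketch itself: your three-stage outline captures the overall shape of the Stemmer--Kaplan argument (oblivious dimension reduction, a private bicriteria step, then private selection of $k$ centers with a noisy-average lift), and your attribution of the $k^{1.5}$ term to advanced composition over $k$ selection rounds is the right intuition. The second stage is somewhat off in its specifics, though: \cite{StemmerK18} builds candidate centers via locality-sensitive hashing rather than a recursive grid with Laplace-noised histogram counts; the grid/histogram picture you describe is closer to the~\cite{GhaziKM20} coreset construction. This does not affect anything in the present paper, since the theorem is only consumed as a black box, but if you intend to actually reproduce the bound you would need to follow the LSH-based candidate generation and the associated accounting rather than the grid-based one.
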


The proof of \cref{thm:final-stemmer} is stated below. 

\begin{proof}
We first note that using \cref{thm:clustering-to-semicoreset} together with \cref{thm:dp-clustering-stemmer} gives us an algorithm $\cA$ that produces a $(\kappa,\eta_1,\eta_2)$-semicoreset where $\kappa=O(1), \eta_1=O(1)$ and $\eta_2={O}((k^{1.01} \cdot d^{0.51} + k^{1.5}+k)\cdot \poly(\log(T) \log(\frac{1}{\delta}),\frac{1}{\eps}) )$ and that we can use as a black-box in our DP clustering framework. 
The privacy guarantee follows by observing that the resulting algorithm from \cref{thm:clustering-to-semicoreset} is $(3\eps,\delta)$-DP and applying Item~\ref{it:dp-main} of \cref{thm:main-inf}. 
We choose $C_M=100$ and the accuracy guarantee follows from Item~\ref{it:acc-main} of \cref{thm:main-inf}. 

\sloppy Finally, for the space usage, we first determine the value of $M$ in \cref{thm:main-inf}. Because $\eta_2={O}((k^{1.01} \cdot d^{0.51} + k^{1.5})\cdot \poly(\log(T) \log(\frac{1}{\delta}),\frac{1}{\eps}) )$ and $C_M=100$, we have that $M= {O}((k^{1.01} \cdot d^{3.51} + d^3 \cdot k^{1.5})\cdot \poly(\log(T), \log(\frac{1}{\delta}),\frac{1}{\eps}) )$. 

Note that $S_\cA(M,k,d)= O((M+k)d)+O(k^{1+\eta} \cdot d \cdot \log(M))$ for some arbitrarily chosen constant $\eta>0$ where the second term comes from the space usage of~\cref{thm:clustering-to-semicoreset}. Also $SZ_\cA(\cdot) = k$. 
The final space complexity is obtained by plugging in the value of $M$ and setting $\eta=0.01$.  
\end{proof}

\begin{theorem}[Lemma 16 in~\cite{GhaziKM20}]\label{thm:dp-coreset-ghazi}
There is a $2^{O_\gamma(d)}\poly(n)$-time $\eps$-DP algorithm that with probability 0.99 produces an $\left(1+\gamma, O_\gamma\left(\frac{k^2 2^{O_\gamma(d)}}{\eps} \poly\log n\right)\right)$-coreset for $k$-means (and $k$-median). The size of the coreset is $2^{O_\gamma(d)} \cdot \poly(k,\log n)$. 
\end{theorem}

The proof of \cref{thm:final-ghazi} is stated below. 
\begin{proof}
First note that the $\eps$-DP coreset from \cref{thm:dp-coreset-ghazi}~\cite{GhaziKM20} is a $(\kappa,0,\eta)$-semicoreset where $\kappa=1+\gamma$ and $\eta = O_\gamma\left(\frac{k^2 2^{O_\gamma(d)}}{\eps} \poly\log (T)\right)$ and thus we can use it directly as a blackbox in our DP clustering framework. The privacy guarantee follows from Item~\ref{it:dp-main} of \cref{thm:main-inf}. The accuracy guarantee follows from Item~\ref{it:acc-main} of \cref{thm:main-inf} where we choose $C_M=\gamma/2$. 

Finally, for the space usage, we first note that in \cref{thm:main-inf}, by plugging in the values of $C_M$ and $\eta$, we have that $M=  O_\gamma\left(\frac{d^3k^2 2^{O_\gamma(d)}}{\eps} \poly\log T \right)$. By observing that $S_\cA(M,d,k)= \poly(M,d,k)$ and total coreset size $SZ_\cA(\cdot) =2^{O_\gamma(d)} \cdot \poly(k,\log T)$ and plugging in the value of $M$ in Item~\ref{it:space-main} of \cref{thm:main-inf}, the space guarantee follows.
\end{proof}

\section{Conclusion and Future Work}
To the best of our knowledge, in this paper we designed the \emph{first sublinear space differentially private} algorithms for $k$-clustering in the continual release setting (sublinear in stream length $T$). Our paper gives a general DP clustering framework that uses any existing offline DP clustering or coreset algorithm as a black-box and the space guarantee of the final algorithm depends on the additive error of the offline DP algorithm. In particular, we achieve a $(1+\gamma)$-multiplicative approximation using the DP coreset algorithm from~\cite{GhaziKM20} with $\tilde{O}_\gamma(\poly(k,2^{O_\gamma(d)},\log(T)))$ space and an $O(1)$-multiplicative approximation with $\tilde{O}(k^{1.5} \cdot \poly(d,\log(T)))$ space using the DP $k$-means algorithm from~\cite{StemmerK18}.  

While our work achieves sublinear space in the insertion-only streaming setting, recently~\cite{latour2023differential} studied the DP clustering problem under insertions and deletions but their results do not optimize for space --- an exciting open direction is to design sublinear space differentially private algorithms for $k$-clustering in this model. Finally proving information-theoretic space lower bounds for any DP problem in the streaming setting is another exciting open direction. The only known space lower bound~\cite{DinurSWZ23} is under cryptographic assumptions, and does not apply to clustering problems.

\bibliographystyle{plain}
\bibliography{references}

\appendix 
\section{Preliminaries}

\paragraph{Norms and heavy hitters. }Let $p \geq 1$, the $\ell_p$-norm of a vector $\mathbf{x}=(x_1, \ldots, x_t)$ is defined as $\|\mathbf{x}\|_p = (\sum^t_{i=1} \vert x_i \vert^p)^{1/p}$. Given a multiset $\cS$, denote the frequency of an item $x$ appearing in $\cS$ as $f(x)$. We say that an item $x$ is an $\theta$-heavy hitter ($\theta$-HH for short) if $f(x) \geq \theta \|\cS \|_1$.

\paragraph{Differential Privacy background. }
We state some DP theorems that we frequently utilize below. For a more detailed background on DP refer to~\cite{DworkR14}.
\begin{theorem}[Composition Theorems~\cite{mcsherry2009privacy}]\label{thm:composition}
\begin{enumerate}
    \item \label{sequential} (Sequential) Let $M_i$ each provide $(\eps_i,\delta_i)$-DP. The sequence of $M_i(X)$ provides $(\sum_i\eps_i, \sum_i \delta_i)$-DP.
    \item \label{parallel}(Parallel) Let $M_i$ each provide $(\eps,\delta)$-DP. Let $D_i$ be arbitrary disjoint subsets of input domain $D$. The sequence of $M_i(X \cap D_i)$ provides $(\eps,\delta)$-DP. 
\end{enumerate}
\end{theorem}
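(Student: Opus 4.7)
The plan is to prove the two parts separately, since they rely on different structural facts: sequential composition rests on a chain-rule argument for probability ratios, while parallel composition rests on the fact that a single record change touches only one of the disjoint blocks.

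For part 1 (sequential composition), I would first dispatch the pure-DP case $\delta_i = 0$ as a warm-up. Fix neighboring streams $\cS \sim \cS'$ and any candidate output tuple $(o_1, \ldots, o_k)$. Since later $M_i$ may adaptively depend on the earlier outputs, factor the joint density by the chain rule as
\[
\Pr[M(\cS) = (o_1, \ldots, o_k)] \;=\; \prod_{i=1}^{k} \Pr[M_i(\cS) = o_i \mid o_1, \ldots, o_{i-1}],
\]
apply the $(\eps_i, 0)$-DP guarantee of $M_i$ conditionally to each factor, and multiply to obtain an overall ratio of at most $e^{\sum_i \eps_i}$; integrating over any measurable output set yields the claim. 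To extend to $\delta_i > 0$, I would pass to the privacy loss random variable $\cL_i := \log \bigl(\Pr[M_i(\cS) \in \cdot] / \Pr[M_i(\cS') \in \cdot]\bigr)$. The standard equivalence says that $(\eps_i, \delta_i)$-DP of $M_i$ is captured (up to constants) by $|\cL_i| \le \eps_i$ except on an event of probability at most $\delta_i$. A union bound across the $k$ rounds shows the aggregate privacy loss exceeds $\sum_i \eps_i$ with probability at most $\sum_i \delta_i$, which I would then translate back to $(\sum_i \eps_i, \sum_i \delta_i)$-DP.

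For part 2 (parallel composition), suppose $\cS$ and $\cS'$ differ in a single record that lies in exactly one block, call it $D_{i^*}$. For every $i \neq i^*$, we have $\cS \cap D_i = \cS' \cap D_i$, so the marginal distribution of $M_i(\cS \cap D_i)$ is identical under the two inputs. For $i = i^*$, the two inputs to $M_{i^*}$ are themselves neighboring, so the $(\eps, \delta)$-DP guarantee of $M_{i^*}$ applies directly. Using independence of the internal randomness of the $M_i$'s across disjoint inputs, the joint distribution factors, and the bound from the single nontrivial factor transfers to the joint sequence with no additive summation over $i$.

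The main obstacle is the approximate-DP half of part 1: each $M_i$'s guarantee must be applied conditionally on a realized transcript of earlier outputs, and the failure probabilities $\delta_i$ must be aggregated into a single $\sum_i \delta_i$ bound that does not itself depend on the transcript. I would handle this by defining the good event for round $i$ pointwise in the transcript, observing that the DP inequality for $M_i$ holds uniformly over all neighboring input pairs (in particular over all transcripts that can be induced by earlier rounds), and then unioning these pointwise good events so that the final tail bound on the aggregate privacy loss is transcript-free.
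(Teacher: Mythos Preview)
The paper does not prove this theorem at all: it is stated in the preliminaries as an imported result with a citation to \cite{mcsherry2009privacy} and is used only as a black box in later privacy arguments. So there is no ``paper's own proof'' to compare against; your proposal is supplying a proof where the paper simply quotes the literature.

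On the merits, your outline is the standard textbook argument and is essentially correct. Two small points are worth tightening. First, in part~1 for approximate DP, the hedge ``up to constants'' is unnecessary and slightly misleading: basic composition yields exactly $(\sum_i \eps_i,\sum_i\delta_i)$-DP, and the cleanest way to get the $\delta$'s to add without picking up $e^{\eps}$ factors is not the privacy-loss tail bound you sketch but a direct hybrid argument (switch the input of one $M_i$ at a time from $\cS$ to $\cS'$, using that for any $[0,1]$-valued post-processing $g$ of $M_i$'s output one has $\E_{M_i(\cS)}[g]\le e^{\eps_i}\E_{M_i(\cS')}[g]+\delta_i$). Second, in part~2 your sentence ``lies in exactly one block'' should be ``lies in at most one block'' (the $D_i$ need not cover the domain), and under the replacement-neighbor model used in this paper the changed point $x_{t^*}$ and its replacement $x'_{t^*}$ could in principle fall in \emph{different} blocks, which would make two of the $M_i$'s see neighboring inputs; the clean $(\eps,\delta)$ conclusion holds without caveat under add/remove neighbors, which is the setting of McSherry's original statement.
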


\begin{theorem}[Binary Mechanism \textsf{BM}~\cite{ChanSS11, DworkNPR10}]\label{thm:binarymechanism}
Let $\eps \geq 0, \gamma \in (0,0.5)$, there is an $\eps$-DP algorithm for the sum of the stream in the  continual release model. With probability $1-\xi$, the additive error of the output for every timestamp $t \in [T]$ is always at most $O(\frac{1}{\eps} \log^{2.5}(T) \log(\frac{1}{\xi}))$ and uses $O(\log T)$ space.
\end{theorem}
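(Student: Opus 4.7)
The plan is to maintain the canonical dyadic-interval data structure of Dwork--Naor--Pitassi--Rothblum and Chan--Shi--Song. Conceptually, we organize the timestamps $1,2,\ldots,T$ as the leaves of a complete binary tree of depth $\lceil \log_2 T \rceil$. For each internal node $v$, let $I_v\subseteq[T]$ be its dyadic interval and $s_v=\sum_{i\in I_v}x_i$ its true partial sum. As soon as the last element of $I_v$ arrives in the stream, I would release $\hat{s}_v := s_v + \mathrm{Lap}(\log T/\eps)$ and discard the stored partial sum. To answer a query at timestamp $t$, I would deterministically decompose the prefix $[1,t]$ into at most $L\leq\lceil\log_2 T\rceil+1$ dyadic intervals $I_{v_1},\ldots,I_{v_L}$ (the standard greedy decomposition) and output $\sum_{j=1}^{L}\hat{s}_{v_j}$.

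For the privacy analysis, I would observe that any single input $x_{t^*}$ affects the partial sum $s_v$ for exactly the $\lceil\log_2 T\rceil+1$ nodes $v$ on the root-to-leaf path of $t^*$, and influences none of the other $\hat{s}_v$'s. Hence the joint release $\{\hat{s}_v\}_v$ is a composition of Laplace-mechanism outputs on at most $O(\log T)$ unit-sensitivity queries, each with noise scale $\log T/\eps$. Applying sequential composition (\theoremref{composition}) yields $\eps$-differential privacy for the joint release; the per-timestamp outputs, and therefore the whole continual-release transcript, follow by post-processing.

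For the accuracy analysis at a fixed timestamp $t$, the additive error is $E_t = \sum_{j=1}^{L}Z_j$ with $Z_j$ i.i.d.\ $\mathrm{Lap}(\log T/\eps)$ and $L=O(\log T)$. I would apply a standard sub-exponential (Bernstein-type) tail bound for sums of independent Laplace variables, which gives $\Pr[|E_t|>u]\leq 2\exp(-\Omega(\min(u^2/(Lb^2),u/b)))$ for $b=\log T/\eps$. Choosing $u = O(b\sqrt{L\log(T/\xi)}+b\log(T/\xi))$ yields failure probability at most $\xi/T$ per timestamp, and a union bound over $t\in[T]$ gives the claimed bound of $O(\eps^{-1}\log^{2.5}(T)\log(1/\xi))$ simultaneously at every timestamp (absorbing the $\sqrt{\log T}$ cross-terms into the overall $\log^{2.5}T$ factor).

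For the space bound, the key observation is the standard streaming implementation of the tree: at any moment during the stream, only the $O(\log T)$ nodes lying on the root-to-current-leaf path have been ``opened'' but not yet ``closed'', so we only need to store one running partial sum per such node together with its pre-sampled noise (and a counter for the current position), for a total of $O(\log T)$ words. The main (mild) obstacle is making the concentration step tight enough to produce the advertised $\log^{2.5}T$ exponent after the union bound over $T$ query times; this is handled by using the sub-exponential tail inequality rather than a naive maximum-of-Laplacians bound (which would cost an extra logarithmic factor).
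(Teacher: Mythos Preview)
The paper does not give its own proof of this statement: \theoremref{binarymechanism} is stated in the preliminaries as an imported result from~\cite{ChanSS11,DworkNPR10} and is used as a black box throughout (e.g., in \algoref{dp-find-center-online} and \claimref{historical-hh}). So there is nothing to compare against on the paper's side.

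Your reconstruction is the standard binary-tree argument and is correct. The privacy step (each stream element touches $O(\log T)$ node sums, Laplace noise of scale $\Theta(\log T/\eps)$ per node, sequential composition) and the accuracy step (Bernstein/sub-exponential concentration for a sum of $O(\log T)$ i.i.d.\ Laplace variables, then a union bound over $T$ timestamps to get the $\log^{2.5}T$ dependence) match the original analyses in~\cite{ChanSS11,DworkNPR10}. One small implementation remark on space: when you write that you ``discard the stored partial sum'' once a node closes, note that to answer the prefix query at time $t$ you still need access to the $O(\log T)$ already-released noisy sums $\hat s_{v_j}$ in the dyadic decomposition of $[1,t]$; in the streaming implementation one keeps these (one per level, corresponding to the $1$-bits of $t$) in addition to the running true partial sums for the currently open intervals. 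This is still $O(\log T)$ words total, so the bound stands, but your description as written could be read as discarding information needed to answer subsequent queries.
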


\begin{theorem}[\textsf{DP-HH} algorithm~\cite{EMMMVZ23}]\label{thm:dp-hh-cr} Let $\eps>0$, $\gamma_h \in (0,0.5)$, $0< \theta < 1$, $\xi \in (0,0.5)$. There is an $\eps$-DP algorithm in the streaming continual release model such that with probability at least $1-\xi$, it always outputs a set $H \subseteq \cU$ and a function $\hat{f} : H \to \bbR$ for every timestamp $t \in [T]$ such that 
\begin{enumerate}
    \item $\forall a \in H$, $\hat{f}(a) \in (1\pm \gamma_h) \cdot f_a$ where $f_a$ is the frequency of $a$ in the stream $\cS = (a_1, a_2, \ldots, a_t)$
    \item $\forall a \in \cU$, if $f_a \geq \frac{1}{\eps \gamma_h} \poly\bigl(\log\bigl( \frac{T \cdot\vert \cU \vert }{\theta \xi \gamma_h }\bigr)\bigr)$ and $f^1_a \geq \theta  \|\cS \|_1$ then $a \in H$
    \item The size of $H$ is at most $O((\log(T/\xi)+\log \vert \cU \vert ) \cdot (\frac{1+\gamma_h}{1-\gamma_h}) \cdot \frac{1}{\theta})$
\end{enumerate}
 The algorithm uses $\frac{1 }{\gamma_h^2\theta^3} \poly \left( \log\left( \frac{T \cdot  \vert \cU \vert}{\xi \theta }\right)\right)$ space. 
\end{theorem}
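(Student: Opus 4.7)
The stated result is imported verbatim from~\cite{EMMMVZ23}, so the paper itself will not reproduce a proof; however, a natural plan for establishing such a continual-release DP heavy-hitters guarantee proceeds as follows.

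The plan is to combine a small-space non-private frequency-estimation sketch with differentially private primitives that already work under continual release. First, I would instantiate a classical heavy-hitters sketch (Misra--Gries, CountSketch, or CountMin) with $O\bigl(\frac{1}{\gamma_h\theta}\bigr)$ counters; this guarantees that for every item its estimate differs from the true frequency by at most $O(\gamma_h \theta \|\cS\|_1)$, which is a $\gamma_h$-fraction of the true count for any $\theta$-heavy item. Because every insertion updates at most a constant number of counters, I can view the stream as inducing a vector of counter values, and privacy can be added one counter at a time.

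Second, I would lift this to the continual release model by driving each counter (or each hashed sub-stream) through the Binary Mechanism of~\theoremref{binarymechanism}, which releases all running sums $\eps$-DP with only $O(\frac{1}{\eps}\log^{2.5}(T)\log(1/\xi))$ additive error at every timestamp. Disjoint counters consume privacy budget only in parallel (Theorem~\ref{thm:composition}, item~\ref{parallel}), so the total privacy cost is $\eps$ rather than scaling with the number of counters. Summing the sketch error and the noise error, every reported estimate $\hat f(a)$ satisfies $|\hat f(a)-f_a|\le \gamma_h f_a$ whenever $f_a\ge \frac{1}{\eps\gamma_h}\poly(\log(T|\cU|/(\xi\theta\gamma_h)))$, yielding the multiplicative accuracy in item~(1) and the completeness threshold in item~(2).

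Third, to output the set $H$ I would post-process: scan the $O(1/\theta)$ candidates retained by the sketch, privately test each candidate's noisy estimate against the threshold $\theta\|\hat N\|_1$ (where $\hat N$ is a privately maintained noisy stream length), and admit those passing the test. An above-threshold / sparse-vector style argument together with a union bound over the $O(\log T+\log|\cU|)$ distinct timestamps and candidates controls the failure probability by $\xi$ and bounds $|H|$ as claimed. The space bound follows because the sketch uses $\frac{1}{\gamma_h^2\theta^2}$ counters each maintained by a binary tree of depth $O(\log T)$ with polylog-sized entries, giving the stated $\frac{1}{\gamma_h^2\theta^3}\poly(\log(T|\cU|/(\xi\theta)))$ space.

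The hardest part is calibrating the noisy above-threshold step so that truly heavy items are admitted while suppressing false positives from the $O(\log^{2.5} T)$ binary-mechanism noise, given that the threshold $\theta\|\cS\|_1$ is itself a noisy quantity whose fluctuation over time must be absorbed. This is typically handled by taking a slightly inflated threshold of the form $\theta\|\cS\|_1 + \frac{1}{\eps\gamma_h}\poly(\log(T|\cU|/(\xi\theta\gamma_h)))$ in the test, which matches the lower bound on $f_a$ required in item~(2) and explains the exact form of the accuracy guarantee stated in the theorem.
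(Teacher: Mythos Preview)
You are correct that this theorem is simply imported from~\cite{EMMMVZ23} and the paper provides no proof of its own; there is nothing in the paper to compare your sketch against. Your outline is a reasonable high-level description of how such a continual-release DP heavy-hitters result is typically obtained, but since the paper treats it as a black box, no comparison is possible.
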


\begin{remark}
In \cref{alg:dp-find-center-online} we use the notation \textbf{Initialize} and \textbf{Update} for the Binary Mechanism (\textsf{BM}) and DP-Heavy Hitters (\textsf{DP-HH}) primitives. For clarity:
\begin{itemize}
    \item \textbf{Initialize} sets up the respective data structure with its privacy budget and parameters (e.g., stream length, thresholds).
    \item \textbf{Update} processes a single incoming item (or the empty symbol $\perp$) and returns the current private estimate. For BM this is a differentially private count, while for DP-HH it is a set of candidate heavy hitters together with approximate frequencies.
\end{itemize}
We do not re-write these algorithms explicitly, since our use is a direct adaptation of the continual release algorithms already stated in \cref{thm:binarymechanism}, \cref{thm:dp-hh-cr} and in the cited references.
\end{remark}
\paragraph{Useful Clustering Theorems. } We present some useful clustering theorems that we frequently utilize below.

We clarify that the statement below concerns sequential algorithms in the MPC model. However, when the entire input resides on a single machine, a sequential algorithm is trivially static. Since our use case involves static algorithms in such a setting, we restate the lemma accordingly to reflect this context more accurately.
\begin{theorem}[Lemma C.1 in~\cite{EZNMC22}] \label{thm:clustering-to-semicoreset}
Let $\cA$ be an $(\eps,\delta)$-DP (static) algorithm that takes as input a dataset $P$ of size at least $\Omega(k \cdot \eps^{-1} \log(n))$ contained in a ball $\cB(0,\Lambda)$ (with known center) and outputs a set of $k$ centers $\cC$ contained in $\cB$ such that $cost(\cC,\cX ) \leq O(1) cost(\cC^{opt},\cX)+V(n,d,k,\eps,\delta) \cdot \Lambda^p$. Then for any fixed $\eta>0$ there exists a $(3\eps,\delta)$-DP (sequential) algorithm $\cB$ on the dataset $\cY$ that outputs a set $\hat{P} \subset B$ such that for any set $\cC$ of size at most $k$. 
\begin{align}
    cost(\cC,\hat{P}) &\leq O(1) \cdot \left( cost(\cC^{opt}_P,P) + cost(\cC,P)+(V(n,d,k,\eps,\delta)+k \eps^{-1} \log(n) )\cdot \Lambda^p \right) \\
    cost(\cC,P) &\leq O(1) \cdot \left( cost(\cC^{opt}_P,P) + cost(\cC,\hat{P}) +(V(n,d,k,\eps,\delta)+k \eps^{-1} \log(n))\cdot \Lambda^p \right)
\end{align}
In addition constructing the set requires only $O(k^{1+\eta} \cdot d \cdot \log(n))$ space and at most $O(k^\eta \cdot d \cdot \log(n))$ time per point in $P$ for some arbitrarily chosen constant $\eta>0$. 
\end{theorem}

\begin{theorem}[\cite{Cohen-AddadSS21}]
\label{thm:nondp-coreset}
There exists a non-DP $(1+\gamma)$-coreset of size $O(k\log k \cdot (\gamma^{-2-\max(2,z)}) \cdot 2^{O(z \log z)} \cdot \poly\log(\gamma^{-1}))$ for $(k,z)$-clustering in Euclidean spaces. 
\end{theorem}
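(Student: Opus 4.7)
The strategy is to instantiate the Feldman--Langberg sensitivity sampling framework and to remove the dependence on the ambient dimension $d$ via a chaining/net argument of the type introduced in \cite{Cohen-AddadSS21}. Concretely, for a weighted subset $Q\subseteq P$ to be a $(1+\gamma)$-coreset, it suffices that $\mathsf{cost}(C,Q)$ lie in $(1\pm\gamma)\mathsf{cost}(C,P)$ simultaneously over all $k$-center sets $C\subseteq\mathbb{R}^d$, so the task reduces to a uniform concentration statement over the function class $\{f_C(p)=d(p,C)^z\}_C$.

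\textbf{Step 1 -- Bounding total sensitivity.} I would first compute (for the analysis only) a constant-factor $k$-clustering $C^\star$ with cost $\mathrm{OPT}$. The sensitivity of $p$ is $s(p):=\sup_{C}\, f_C(p)/\mathsf{cost}(C,P)$. Using the $z$-relaxed triangle inequality (losing a $2^{O(z\log z)}$ factor) one bounds $s(p)\leq 2^{O(z\log z)}\bigl(f_{C^\star}(p)/\mathrm{OPT}+1/|\mathrm{cluster}^\star(p)|\bigr)$, and summing over $p\in P$ yields total sensitivity $\mathcal{S}=2^{O(z\log z)}\cdot k$. This already contributes the $2^{O(z\log z)}$ factor in the claimed bound.

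\textbf{Step 2 -- Chaining for a dimension-free sample size.} Draw $N$ i.i.d.\ samples $p_i$ with probabilities proportional to $s(p_i)$, each reweighted by $\mathcal{S}/(N\cdot s(p_i))$. For a fixed $C$, a Bernstein bound on the normalized contributions gives failure probability $\exp(-\Omega(\gamma^2 N/\mathcal{S}))$. A naive union bound over an $\varepsilon$-net of $k$-center sets in $\mathbb{R}^d$ would introduce a $d$-dependent cardinality and hence $d$-dependent $N$, which we cannot afford. Instead I would run a multi-scale chaining argument directly on the function class: at a geometric sequence of scales $r_j$, maintain a net of center configurations that distinguishes $f_C$ only up to additive error $r_j$ on the support of $Q$, and sum Bernstein increments across scales. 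The net sizes are controlled by bracketing-entropy estimates for the piecewise-Lipschitz family $\{f_C\}$, which depend on $k$ and $\gamma$ but \emph{not} on $d$, yielding sample size $\tilde{O}(\mathcal{S}\cdot \gamma^{-2-\max(2,z)})$.

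\textbf{Main obstacle.} The hard part is driving the chaining sum through to the advertised exponent $\gamma^{-2-\max(2,z)}$. This requires splitting the analysis into a ``near'' regime (points whose cost is small under $C$, where the variance term dominates and one needs $\gamma^{-2}$ samples) and a ``far'' regime (large-cost points, where the extra $\gamma^{-\max(2,z)}$ factor arises from the Lipschitz constant of $x\mapsto x^z$ at the relevant scale). Balancing the two regimes with correctly calibrated nets at each scale, and handling the weighted (rather than uniform) sampling distribution, is where the delicate bookkeeping lies; the total-sensitivity step and the per-$C$ Bernstein application are by comparison routine.
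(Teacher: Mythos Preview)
This theorem is not proved in the paper at all: it is an imported result, stated in the Preliminaries with a citation to \cite{Cohen-AddadSS21} and used only as a black box (the paper needs nothing more than the existence of a non-DP $(1+\gamma)$-coreset of the stated size). There is therefore no ``paper's own proof'' to compare your proposal against.

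For what it's worth, your outline is a faithful high-level summary of the Cohen-Addad--Saulpic--Schwiegelshohn argument: sensitivity sampling with total sensitivity $2^{O(z\log z)}k$, combined with a dimension-free chaining/bracketing-entropy bound in place of a naive net argument, is indeed how that paper removes the dependence on $d$ and arrives at the $\gamma^{-2-\max(2,z)}$ exponent. But none of this is needed here; the present paper simply invokes the theorem as stated.
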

\section{Bicriteria Approximation in Continual Release Setting}\label{sec:bicriteria}
We describe our bicriteria approximation algorithm and analysis in more detail here.

\begin{algorithm}[!htb]
\caption{\sf{BicriteriaDPCenters}}
\label{alg:dp-cluster-online}
\KwData{Privacy parameter $\eps$, Stream $\cS$ of points $x_1, \ldots, x_T \in \bbR^d$ }
\vspace{0.75em}   

{\textbf{Initialize}($\eps$):}

{$\eps' := \frac{\eps}{4\log(\Lambda) \log^2(k)}$\;}

{Parallel quadtrees $Q_1,\ldots, Q_{\log(k)}$ such that:

 each quadtree $Q_q$ has $\log(\Lambda)$ levels with the bottom level having grid size $\Theta(1)$\;
 }
 
\For{$0 \leq \ell \leq \log(\Lambda)$ and $1 \leq q \leq \log(k)$}{

{Initialize $\mathbf{DPFindCenters}_{\ell,q}$ of $\mathbf{DPFindCenters}(\eps')$\; }
}
{Set of candidate centers $\cF := \emptyset$\;}
\vspace{0.75em}   
{\textbf{Update}($x_t$):}

\For{$0 \leq \ell \leq \log(\Lambda)$ and $1 \leq q \leq \log(k)$}{
{$\hat{\cF}_t \leftarrow \mathbf{Update}(x_t) \text{ of } \mathbf{DPFindCenters}_{\ell,q}$\;}\Comment{See \cref{alg:dp-find-center-online}}

{$\cF \leftarrow \cF \cup \hat{\cF}_t$\;}
}
Output $\cF$\;
\end{algorithm}

\begin{algorithm}[!htb]
\caption{\sf{DPFindCenters}}
\label{alg:dp-find-center-online}
\KwData{Privacy parameter $\eps'$, 
Stream of points $x_1, \ldots, x_T$, level of quadtree $\ell$ 
}

{\textbf{Initialize}($\eps'$):}

{$\eps' \leftarrow \eps'$\;} \Comment{where $\eps' := \frac{\eps}{4\log(\Lambda) \log^2(k)}$}

{$w=O(k)$ \;}

{Hash function $h:[2^\ell] \to [w]$ s.t. $\forall \text{ cells } \bfc, \forall j \in [w], \Pr[h(\bfc)=j] = \frac{1}{w}$\;}

{$\hat{T}_{1}=0, \ldots, \hat{T}_{w}=0$\;}\Comment{DP Count for the size of hash bucket}

{Initialize $\textbf{BM}_1,\ldots,\textbf{BM}_w$ of $\textbf{BinaryMechanism}(T,\eps')$\;}\Comment{See \cref{thm:binarymechanism}~\cite{DworkNPR10}}

{Initialize $\textbf{DP-HH}_1,\ldots,\textbf{DP-HH}_w$ of $\textbf{DP-HH}(T,\eps')$\;}\Comment{See \cref{thm:dp-hh-cr}~\cite{EMMMVZ23}}

{\textbf{Update}($x_t$):}

{Initialize $\hat{\cF}_t=\emptyset$\;}

{Let $x_t$ be mapped to cell $\bfc^*$\;} \Comment{\textbf{DPFindCenters} is initialized per level $\ell$ of quadtree instance $q$}

\For{$p=1,\ldots,L$, where $L:=\log (k^2)$ run in parallel}{
\For{$j \in [w]$} {
\Comment{Update the DP count of each hash bucket}

\eIf{$j = h(\bfc^*)$}{ 
{$\hat{T}_{j} \leftarrow \textbf{Update}(1)\text{ of }\textbf{BM}_{j}$\;}
}{
{$\hat{T}_{j} \leftarrow \textbf{Update}(0)\text{ of }\textbf{BM}_{j}$}
}
}
\For{$j\in[w]$}{ 
\Comment{Update the DP HHs of each hash bucket}

\eIf{$h(\bfc^*)=j$}{
{$\hat{f}, H \leftarrow$ \textbf{Update}($\bfc^*$) of $\textbf{DP-HH}_{j}$\;}
}{
{$\hat{f}, H \leftarrow$ \textbf{Update}($\perp$) of $\textbf{DP-HH}_j$\;}
}
\For{cell $\bfc \in H$}{
\If{$\hat{f}(\bfc) \geq \frac{\theta}{1000} \cdot \hat{T}_{h(\bfc)}$}{
{Add centerpoint of $\bfc$ to $\hat{\cF}_t$ as a center\;}\Comment{$\theta$ is the HH threshold parameter and set to be an appropriate constant }
}
}
}
}
{Return $\hat{\cF}_t$}\;
\end{algorithm}

\paragraph{Algorithm. }Our bicriteria approximation algorithm is given by \cref{alg:dp-cluster-online} which initializes $\log(k)$ parallel instances of randomly shifted quadtrees. Each input point $x_t$ is assigned to a cell in every level of every quadtree. For a fixed quadtree $1 \leq q \leq \log(k)$ and fixed level $0 \leq \ell \leq \Lambda$, the subroutine \textsf{DPFindCenters} (see \cref{alg:dp-find-center-online}) returns a candidate set of centers $\hat{\cF}_t$ which is added to the current set of candidate set of centers $\cF$. 

The \textsf{DPFindCenters} subroutine (see \cref{alg:dp-find-center-online}) finds the approximate heaviest $O(k)$ cells in a fixed level of a fixed quadtree. It achieves this by first hashing the cell containing the current point to a bucket, note that there are $w:= O(k)$ many buckets. For each hash bucket $j \in [w]$, the algorithm maintains a continual release $\theta$-heavy hitter instance $\textsf{DP-HH}_j$.  We use the $\ell_1$-heavy hitter algorithm from~\cite{EMMMVZ23} as $\textsf{DP-HH}$ --- it returns a set $H$ of $\theta$-heavy hitters and their approximate counts $\hat{f}(\bfc)$ for all $\bfc \in H$. Since we are storing the centerpoints of all the cells marked as heavy hitters as candidate centers, we need to ensure that we do not store too many false positives, i.e., cells whose counts are much smaller than $\theta$ times the length of the hashed substream. To address this challenge, we have an additional pruning step that eliminates any cell $\bfc$ whose approximate count is less than  $\Theta(\theta) \hat{T}_{h(\bfc)}$ where $\hat{T}_{h(\bfc)}$ denotes the DP count of each hash bucket $j \in [w]$ at timestep $t \in [T]$. We keep track of $\hat{T}_{j}$ via an instance of the Binary Mechanism~\cite{DworkNPR10} denoted as $\textbf{BM}_j$ for each $j\in [w]$. Finally, only the centerpoints of cells that pass this pruning step are added as candidate centers to the set $\hat{\cF}_t$.

\subsection{Proof of \cref{thm:final-acc-dp-hh}} \label{sec:final-acc-dp-hh}

\begin{lemma}[Privacy of \cref{thm:final-acc-dp-hh}]
\textsf{BicriteriaDPCenters} is $\eps$-DP under the continual release setting. 
\end{lemma}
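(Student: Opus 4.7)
The plan is to reduce to the privacy guarantees of the two building-block primitives invoked by \algoref{dp-cluster-online} and \algoref{dp-find-center-online}, namely the Binary Mechanism (Theorem~\ref{thm:binarymechanism}) and the continual-release DP heavy-hitters algorithm $\textsf{DP-HH}$ (Theorem~\ref{thm:dp-hh-cr}), and then to carefully track how their budgets compose through the nested parallel structure of the algorithm. Since the set $\cF$ released at each timestep is a deterministic function of the BM counts $\hat{T}_j$ and the $\textsf{DP-HH}$ outputs $(H,\hat{f})$, post-processing reduces the claim to showing that the joint distribution of all BM and $\textsf{DP-HH}$ outputs is $\eps$-DP.

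First I would analyse one invocation of \textsf{DPFindCenters}. Fix neighboring streams $\cS,\cS'$ differing only at time $t^*$, so $x_{t^*}$ is replaced by $x'_{t^*}$. Because the hash function $h$ is drawn once at initialization and is independent of the data, this change touches at most the two buckets $h(\bfc^*_{t^*})$ and $h((\bfc^*)'_{t^*})$: for every other bucket $j$, both streams push the same value (either $0$ into $\textbf{BM}_j$ or $\perp$ into $\textbf{DP-HH}_j$) at every timestep, thanks to the explicit else-branches that emit dummy updates. The randomness of the $w=O(k)$ BM instances is mutually independent, and similarly for the $\textsf{DP-HH}$ instances, so the joint likelihood ratio factorises and only the two changed factors contribute. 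Applying $\eps'$-DP of each $\textbf{BM}_j$ and each $\textbf{DP-HH}_j$ gives $O(\eps')$-DP for the vector of all BM outputs and $O(\eps')$-DP for the vector of all $\textsf{DP-HH}$ outputs in a single parallel index $p$. Sequential composition of BM and $\textsf{DP-HH}$, composition over the $L = \log(k^2)$ parallel indices $p$ (which share the stream but use independent randomness), and post-processing into candidate centers then yield an $O(\log k)\cdot\eps'$-DP guarantee for a single \textsf{DPFindCenters} call.

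Second, I would lift this to \textsf{BicriteriaDPCenters}: the $\log(\Lambda)$ quadtree levels and $\log(k)$ parallel quadtrees all consume the same stream with independent internal randomness, so another sequential composition (Theorem~\ref{thm:composition}, sequential part) multiplies the budget by $\log(\Lambda)\log(k)$. With the chosen $\eps'=\eps/(4\log(\Lambda)\log^2(k))$, the accumulated factor exactly cancels and the overall release is $\eps$-DP; the merge step $\cF\leftarrow\cF\cup\hat{\cF}_t$ in \algoref{dp-cluster-online} and the pruning against $\theta\,\hat{T}_{h(\bfc)}/1000$ are both post-processing of DP outputs and therefore free.

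The main subtlety will be the hashing step: I must verify carefully that $h$ is data-independent (fixed before any input is observed) and that the dummy $0$/$\perp$ updates in the else-branches really do make the non-hit substreams identical under neighboring top-level streams. Without this observation, one would have to pay a factor of $w=O(k)$ in the composition --- one per bucket --- rather than the constant factor that makes the declared $\eps'$ suffice. Everything else is bookkeeping: standard sequential composition over the two outer for-loops of \algoref{dp-cluster-online} and over the $p$-loop inside \textsf{DPFindCenters}, together with post-processing to pass from primitive outputs to $\cF$.
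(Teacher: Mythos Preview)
Your proposal is correct and follows essentially the same approach as the paper: reduce to the $\eps'$-DP guarantees of \textsf{BM} and \textsf{DP-HH}, use parallel composition across hash buckets (a single point lands in one bucket, so at most a constant number of substreams change under a neighbouring replacement), then sequential composition over the $p$-loop, the $\log(\Lambda)$ levels, and the $\log(k)$ quadtrees, and finally post-processing for the released centers. You are in fact more explicit than the paper on two points it leaves implicit---the necessity of the dummy $0$/$\perp$ updates so that non-hit substreams are identical on neighbouring inputs, and the fact that under substitution up to two buckets may change---though note that taking the latter literally yields $2\eps$ rather than the ``exactly cancels'' $\eps$ you claim; this is the same constant-factor slack present in the paper's own accounting and does not affect the argument.
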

\begin{proof}

For a fixed timestep $t$, an input point $x_t$ is passed to an instance of \textsf{DPFindCenters} (\cref{alg:dp-find-center-online}) where it is assigned to a specific cell for a specific level of the quadtree, and cells at the same level are disjoint. The cell containing $x_t$ is then hashed to a bucket which maintains a \textsf{DP-HH} instance. 

There are $\log (\Lambda)$ levels per quadtree, thus point $x_t$ is a member of $\log(\Lambda)$ cells in total. As there are $2 \log^2(k)$ parallel processes --- considering $\log(k)$ quadtrees and $\log(k^2)$ parallel processes per quadtree --- a single point participates in $2\log \Lambda \log^2(k)$ total calls to \textsf{DP-HH}. 
Since \textsf{DP-HH} is DP under the continual release setting (by~\cref{thm:dp-hh-cr}), assigning each \textsf{DP-HH} instance a privacy budget of $\frac{\eps}{4\log(\Lambda) \log^2(k)}$ preserves $(\eps/2)$-DP by sequential composition (Item~\ref{sequential} of \cref{thm:composition}). 

Next, the algorithm uses the Binary Mechanism (which we know is DP by~\cref{thm:binarymechanism}) with a privacy budget of $\frac{\eps}{4\log(\Lambda) \log^2(k)}$ to keep track of the size of each hash substream $\hat{T}_j$ $\forall j \in [w]$. Since the input cells (and corresponding points within cells) are disjoint in each substream due to hashing, this preserves $\frac{\eps}{4\log(\Lambda) \log^2(k)}$-DP by parallel composition (Item~\ref{parallel} of \cref{thm:composition}) which over $2\log(\Lambda) \log^2(k)$ parallel processes preserves $(\eps/2)$-DP by sequential composition (Item~\ref{sequential} of \cref{thm:composition}). 

Finally, since only the centerpoints of cells identified as heavy hitters by \textsf{DP-HH} are released, these centers maintain $\eps$-DP by sequential composition (Item~\ref{sequential} of \cref{thm:composition}).

\end{proof}

\begin{lemma}[Accuracy of \cref{thm:final-acc-dp-hh}]
With probability at least $1-1/k^{O(\poly(k,\log(\Lambda))}$, 
    \begin{align*}
     cost(\cF, \cS) &\leq  O(d^3)cost(C^{opt}_{\cS}, \cS) + \tilde{O}\left(\frac{d^{2} \Lambda^2 k}{\eps} \poly\left(\log\left({T\cdot k \cdot\Lambda}\right)\right)\right)
\end{align*}
where $cost(C^{opt}_{\cS}, \cS)$ is the optimal $k$-means cost for $\cS$. 
\end{lemma}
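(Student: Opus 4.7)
The plan is to decompose the analysis into two ingredients: (i) a randomly-shifted-quadtree argument showing that if we can identify the heaviest cells at each level, then the centerpoints of these cells form an $O(d^{O(1)})$-approximate bicriteria solution; and (ii) a DP heavy-hitter analysis showing that \textsf{DPFindCenters} actually identifies these cells. For (i), I would fix an optimal $k$-means solution and consider each of its $k$ clusters. For each cluster of radius $r$, at the level $\ell^\star$ of a randomly shifted quadtree in $\bbR^d$ whose cell width is $\Theta(r)$, the cluster fits inside a single cell with constant probability over the random shift. The standard quadtree analysis then shows that replacing each point by its $\ell^\star$-cell centerpoint overcounts the $k$-means cost by a factor of $O(d^3)$, coming from the $O(d)$ distance distortion squared for $k$-means cost plus an extra $d$-factor from summing the overhead across levels.

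For (ii), I would argue that \textsf{DPFindCenters} at level $\ell$ identifies every cell whose mass exceeds a threshold proportional to $1/k$. The subroutine hashes each cell into one of $w = O(k)$ buckets and runs the continual-release \textsf{DP-HH} algorithm of \theoremref{dp-hh-cr} inside each bucket. Pairwise independence of the hash ensures that with constant probability no two of the $O(k)$ top-heaviest cells collide in the same bucket, which lets \textsf{DP-HH} recover each heavy cell with an approximate count within a $1 \pm \gamma_h$ factor. The pruning step $\hat{f}(\bfc) \geq (\theta/1000)\hat{T}_{h(\bfc)}$, using the bucket-size estimate $\hat{T}_{h(\bfc)}$ from \textsf{BinaryMechanism} (\theoremref{binarymechanism}), then removes any cells whose apparent mass is an artifact of DP noise. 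Independent repetition across the $\log k$ parallel quadtrees and the $L = \log(k^2)$ parallel processes boosts the per-level success probability to $1 - 1/k^{\poly(k,\log\Lambda)}$, and sweeping logarithmically spaced thresholds $\theta_p = 2^{-p}$ across the $L$ processes handles clusters of widely varying sizes simultaneously.

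Combining (i) and (ii), the multiplicative $O(d^3)$ term comes directly from the quadtree cost distortion. The additive term has two sources: the per-cell $\tilde{O}(1/\eps')$ additive error of \textsf{DP-HH} and the $\tilde{O}(1/\eps')$ error of \textsf{BinaryMechanism}. Each of the $O(k)$ candidate cells per level contributes an additive $k$-means cost of $O(d \cdot 2^{2\ell})$, since a cell at level $\ell$ has diameter $O(2^\ell \sqrt{d})$; summing over the $\log\Lambda$ levels and $\log^2 k$ parallel processes, and applying the privacy rescaling $\eps' = \eps/(4\log\Lambda \log^2 k)$, yields the claimed $\tilde{O}(d^2 \Lambda^2 k / \eps \cdot \poly\log(Tk\Lambda))$ bound after a union bound over all levels, quadtrees, and parallel processes.

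The main obstacle will be arguing that the cells identified across all levels collectively cover essentially every input point: a cluster whose radius is ``between levels'' of the quadtree can be split across a constant number of cells at level $\ell^\star$, and one must show that at least one of these cells survives hashing, the \textsf{DP-HH} frequency threshold, and the pruning step. Correctly choosing the thresholds $\theta_p$ swept by the $L$ parallel processes so that every cluster is captured at \emph{some} density scale, while simultaneously controlling the additive DP noise across all $\log\Lambda \cdot \log^2 k$ sub-instances without losing a polynomial factor in $k$ or $\Lambda$, is the most delicate part of the proof.
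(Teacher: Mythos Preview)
Your high-level decomposition into (i) quadtree distortion and (ii) DP heavy-hitter detection matches the paper, and your accounting of the $O(d^3)$ multiplicative factor and the form of the additive term is broadly correct. However, there is a genuine misunderstanding of the algorithm that would derail the proof. The $L=\log(k^2)$ parallel processes in \textsf{DPFindCenters} do \emph{not} sweep logarithmically spaced thresholds $\theta_p=2^{-p}$; they are identical independent repetitions with a single fixed constant $\theta$, and their sole purpose is to amplify the success probability of the hashing events (no two good cells collide, and each bucket has size at most $2(y+n_{G_\ell}/(40k))$) from $1/2$ to $1-1/k^2$, so that a union bound over the $4k$ good cells and over the $\log\Lambda$ levels goes through. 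Your plan to use threshold sweeping to ``handle clusters of widely varying sizes'' is therefore not what the algorithm does, and would leave you unable to match your analysis to its actual behavior.

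The mechanism that actually handles varying cluster sizes is the hashing itself: because a good cell $\bfc$ lands alone (among good cells) in a bucket whose remaining mass is at most $n_{G_\ell}/(40k)$ in expectation, $\bfc$ is a $\theta$-heavy hitter in its bucket whenever $N_{\ell,\bfc}\ge \theta\, n_{G_\ell}/(20k)$, independent of the total stream length. This is the key step (the paper's \claimref{good-hh}), and it is what lets a single constant $\theta$ work at every level. Relatedly, your attribution of the additive error is off: it does not come from the candidate cells themselves contributing $O(d\cdot 2^{2\ell})$, but from good cells whose counts fall below the \textsf{DP-HH} frequency threshold $\tfrac{1}{\eps'\gamma_h}\poly\log(\cdot)$ and are therefore missed; summing this threshold over $4k$ good cells per level, weighting by $(\Lambda/2^\ell)^2$, and summing over levels with $\eps'=\eps/(4\log\Lambda\log^2 k)$ yields the stated additive bound. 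Once you correct the role of the $L$ repetitions and re-derive the per-level uncovered count as $(1+\theta)n_{G_\ell}$ plus the DP-HH threshold term, the rest of your outline goes through essentially as in the paper.
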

\begin{proof}
We first state some geometric properties regarding the cells within the quadtree construction. 
\begin{proposition}~\cite{EZNMC22}\label{prop:quad-ball}
Let $\cB$ be an $\ell_\infty$ ball of radius $r$ contained in $[-\Lambda, \Lambda]^d$ (it forms a $d$-dimensional cube with each side length $2r$). Then for a randomly shifted quadtree and any level $\ell$ with grid size at least $r' \geq 2r$, $\cB$ is split by the grid in each dimension $j \in [d]$ independently with probability $\frac{2r}{r'}$. 
\end{proposition}

Let $C^{opt}_\cS =\{c_1, \ldots, c_k\}$ be the optimal set of $k$ centers for the input set $\cS=\{x_1, \ldots, x_T\}$. For any radius, define $n_r$ as the number of points $x \in \cS$ such that $d(x,C^{opt}_\cS)\geq r$. 
Note that the opt cost of $k$-means (and $k$-median) is given by $\sum_{p \in \bbZ} 2^{2p} \cdot n_{2^p}$ and $ \sum_{p \in \bbZ} 2^{p} \cdot n_{2^p}$ (up to an $O(1)$-approximation).  

Fix some radius $r=2^p$ where $p \in \bbZ$ and consider a randomly shifted grid of size $20rd$. The following lemma characterizes cells containing $\cup^k_{i=1} \cB(c_i,r)$ with respect to the grid size. 

\begin{lemma}\label{lem:balls-cells}\cite{EZNMC22}
$\cup^k_{i=1} \cB(c_i,r)$ is contained in at most $4k$ cells of grid length $20rd$ by the corresponding level of the quadtree with probability at least 1/2.
\end{lemma}

Let $G_\ell$ where $0 \leq \ell \leq \log(\Lambda)$ be the set of $4k$ good cells of length $20rd$ (equivalently $\ell_2$-radius of $10rd^{3/2}$) at level $\ell$. Let the number of points in $\cS$ uncovered by $G_\ell$ be $n_{G_\ell}$. Observe that by \cref{lem:balls-cells}, since $G_\ell$ contains $\cup^k_{i=1} \cB(c_i,r)$ with probability at least 1/2, we have that $n_{G_\ell}\leq n_r$. It follows that 
\begin{align}\label{eq:len-radius-opt}
  &\sum^{\log(\Lambda)}_{\ell=0}(\text{grid length at level }\ell)^2 \cdot n_{G_\ell} \nonumber\\ 
  &\leq  O(d^{3})\sum_{p \in \bbZ\ :\ r=2^p \leq \Lambda} r^2 \cdot n_r  \leq O(d^3) \cdot  cost(C^{opt}_{\cS}, \cS)
\end{align}
Observe that we can define a one-one mapping between the level $\ell$ and the radius $r$, i.e., the radius $r$ (ranging from $1$ to $\Lambda$) maps to the grid length of a cell which is at most $\Lambda/2^\ell$ (level $\ell$ ranges from $\log (\Lambda)$ to 0). Since the grid length of a cell in $G_\ell$ at level $\ell$ is $20rd$ which maps to $20d \frac{\Lambda}{2^\ell}$, we can replace the leftmost term in the expression above as follows
\begin{align}\label{eq:len-radius-opt-1}
  O(d^2)\sum^{\log(\Lambda)}_{\ell=0}(\Lambda/2^\ell)^2  n_{G_\ell} &\leq   O(d^{3})\sum_{p \in \bbZ\ :\ r=2^p \leq \Lambda} r^2 \cdot n_r 
  \leq O(d^3) \cdot cost(C^{opt}_{\cS}, \cS)
\end{align}
Recall that we define $\cF_t$ as the set of centers until time step $t$. For a fixed level $\ell$, let the set of cells the algorithm \textsf{DP-HH} marks as heavy at timestep $t$ at level $\ell$ as $H_{\ell,t}$. Note that although there is an extra pruning step in \textsf{DPFindCenters} after the cells are marked heavy by \textsf{DP-HH}, we do not account for this here --- as if a cell is an $\alpha$-HH and marked heavy by \textsf{DP-HH}, and it survives the pruning step, it will still be an $\alpha$-HH. Then,
\begin{align*}
    cost(\cF_t) \leq O(d^{2})\sum^{\log(\Lambda)}_{\ell =0}(\Lambda/2^\ell)^2 \cdot \mathbbm{1}[x_t \text{ uncovered by }H_{\ell,t}] 
\end{align*}
Observe that 
\begin{align}\label{eq:cost-f}
    &cost(\cF) \nonumber\\
    &= \sum^T_{t=1} cost(\cF_t) \nonumber\\
    &\leq O(d^{2})\sum^{\log(\Lambda)}_{\ell =0}(\Lambda/2^\ell)^2 \cdot \sum^T_{t=1}\mathbbm{1}[x_t \text{ uncovered by }H_{\ell,t}] 
\end{align}

\begin{lemma}\label{lem:hh-uncover}
For a fixed level $\ell$, with probability at least $1-\frac{12}{k}$,
\begin{align}\label{eq:hli-upperbd}
    \sum^T_{t=1} \mathbbm{1}[x_t \text{ uncovered by }H_{\ell,t}] &\leq  (1+\theta)n_{G_\ell} 
    + \frac{4k \log^2 \Lambda \log k}{\eps \eta} \poly\left(\log\left(\frac{T\cdot 2^\ell}{\theta \xi \gamma_h }\right)\right)&
\end{align}
\end{lemma}
\begin{proof}
Observe that 
\begin{align*}
    &\sum^T_{t=1} \mathbbm{1}[x_t \text{ uncovered by }H_{\ell,t}] \nonumber\\
    &=\sum^T_{t=1} (\mathbbm{1}[(x_t \text{ uncovered by }H_{\ell,t})\land(x_t \text{ uncovered by }G_{\ell})]\nonumber \\&+\mathbbm{1}[(x_t \text{ uncovered by }H_{\ell,t})\land(x_t \text{ covered by }G_{\ell})]) \nonumber\\
    &=\sum^T_{t=1} \mathbbm{1}[(x_t \text{ uncovered by }H_{\ell,t})\land(x_t \text{ uncovered by }G_{\ell})] \nonumber \\&+\sum^T_{t=1} \mathbbm{1}[(x_t \text{ uncovered by }H_{\ell,t})\land(x_t \text{ covered by }G_{\ell})] 
\end{align*}
The first sum in the above expression can be upper bounded by $n_{G_\ell}$, thus it remains to bound the second sum. In order to bound the second sum, we will need some properties of good cells that are hashed to buckets in \textsf{DPFindCenters}, the proofs of these claims can be found in \cref{app:bicriteria-proofs}. In the sequel, we denote $N_{\ell,\bfc}$ as the number of points in the cell $\bfc$ at level $\ell$ and the hash buckets as $\cB_j$ where $j \in [w]$. For simplicity, we consider the number of hash buckets $w:= 40k$. We first show that for any good cell $\bfc$, it is unlikely that the bucket it is hashed to contains another good cell $\bfc'\neq \bfc$. 
\begin{proposition}\label{prop:good-cell-collide}
Let $\bfc\in G_\ell$, then with probability at least $1/2$, for any $\bfc' \in G_\ell$ such that $\bfc'\neq \bfc$, we have that $ h(\bfc') \neq h(\bfc)$.
\end{proposition}
 In the next claim we give a bound on the size of the hash bucket denoted as $\cB_j$ where $j \in [w]$ in terms of the size of a good cell that is hashed to it and $n_r$. 
\begin{proposition}\label{prop:hash}
For each $\bfc \in G_\ell$, suppose the hash bucket $\cB_j$ where $j \in [w]$, contains only one good cell which is $\bfc$. Let $N_{\ell,\bfc}:= y$. Then with probability at least 1/2, $\vert \cB_j \vert \leq 2(y+\frac{n_{G_\ell}}{40k})$. 
\end{proposition}

Note that since the hashing procedure is run $\log(k^2)$ times in parallel, we can boost the success probabilities in the above claims to be $1-1/k^2$. 

Observe that for a fixed hash bucket $\cB_j$, any cell $\bfc$ such that $N_{\ell,\bfc} \geq \theta \cdot 2(y+\frac{n_{G_\ell}}{40k})$ qualifies as a $\theta$-heavy hitter since $N_{\ell,\bfc} \geq \theta \cdot 2(y+\frac{n_{G_\ell}}{40k}) \geq \theta \vert \cB_j\vert$ (by \cref{prop:hash}). In particular, for good cell $\bfc_y$ such that $N_{\ell,\bfc_y} = y$, if $\bfc_y$ is an $\theta$-HH then $y \geq \theta \cdot 2(y+\frac{n_{G_\ell}}{40k})$ or $ y \geq \frac{\theta n_{G_\ell}}{20k}$. We formalize this intuition in the claim below where we use the accuracy guarantees of \textsf{DP-HH} given by \cref{thm:dp-hh-cr} to characterize the good cells that are reported as $\theta$-HHs.

\begin{proposition}\label{prop:good-hh}
Let $\bfc \in G_\ell$. If $N_{\ell,\bfc} \geq \frac{\theta n_{G_\ell}}{20k}$, and $N_{\ell,\bfc} \geq  \frac{2\log(\Lambda) \log^2(k)}{\eps \gamma_h} \poly( \log(\frac{T\cdot k \cdot 2^\ell}{\theta \gamma_h}))$, then with probability at least $1-\frac{12}{k}$, $\bfc$ is reported as an $\theta$-heavy hitter by \textsf{DP-HH}. 
\end{proposition}

Finally, we give an upper bound for the number of points that are covered by good cells but for which \textsf{DP-HH} fails to report as heavy. 

\begin{proposition}\label{prop:uncovered-hh-covered-good}
With probability $1-\frac{12}{k}$,
\begin{align*}
    &\sum^T_{t=1} \mathbbm{1}[(x_t \text{ uncovered by }H_{\ell,t})\land(x_t \text{ covered by }G_{\ell})] \nonumber \\&\leq\theta n_{G_\ell} + \frac{8k \log(\Lambda) \log^2(k)}{\eps \gamma_h} \poly\left(\log\left(\frac{T\cdot k \cdot  2^\ell}{\theta \gamma_h }\right)\right)
\end{align*}
\end{proposition}

Thus by combining \cref{prop:uncovered-hh-covered-good} with our observation about the first sum being upper bounded by $n_{G_\ell}$ in the decomposition of $\sum^T_{t=1} \mathbbm{1}[x_t \text{ uncovered by }H_{\ell,t}]$, we obtain our desired statement in \cref{lem:hh-uncover}. 
\end{proof}

Note that we have shown \cref{lem:hh-uncover} is true with probability at least $1-\frac{12}{k}$, for a fixed level. Since we have $\log(\Lambda)$ many levels in a specific quadtree, and $\log(k)$ many quadtree instances in parallel --- we can boost our probability of success to be sufficiently high. It remains to bound the total $k$-means cost for the set of centers $\cF$ output by our algorithm. Combining \cref{eq:len-radius-opt}, \cref{eq:len-radius-opt-1} and \cref{eq:cost-f} along with \cref{lem:hh-uncover}, we obtain the following. 

\begin{align*}\small
    &cost(\cF, \cS) 
    \\&= \sum^T_{t=1} cost(\cF_t, \cS) \\
    &\leq O(d^{2})\sum^{\log(\Lambda)}_{\ell =0}(\Lambda/2^\ell)^2 \cdot \sum^T_{t=1}\mathbbm{1}[x_t \text{ uncovered by }H_{\ell,t}] \\
    &\leq O(d^{2})\sum^{\log(\Lambda)}_{\ell =0}(\Lambda/2^\ell)^2 \cdot ((1+\theta)n_{G_\ell}+  \frac{8k \log(\Lambda) \log^2(k)}{\eps \gamma_h} \poly\bigl(\log\bigl(\frac{T\cdot k \cdot 2^\ell}{\theta  \gamma_h }\bigr)\bigr)) \\
    &= O(d^{2})\sum^{\log(\Lambda)}_{\ell =0}(\Lambda/2^\ell)^2 \cdot (1+\theta)n_{G_\ell}+O(d^{2})\sum^{\log(\Lambda)}_{\ell =0}(\Lambda/2^\ell)^2 \cdot  \frac{8k \log(\Lambda) \log^2(k)}{\eps \gamma_h} \poly\bigl(\log\bigl(\frac{T\cdot k \cdot 2^\ell}{\theta \gamma_h }\bigr)\bigr) \\
    &\leq O(d^{3})(1+\theta) \sum_{p \in \bbZ\ :\ r=2^p \leq \Lambda} r^2 \cdot n_r + O(d^{2})\sum^{\log(\Lambda)}_{\ell =0}(\Lambda/2^\ell)^2 \cdot \frac{8k \log(\Lambda) \log^2(k)}{\eps \gamma_h} \poly\bigl(\log\bigl(\frac{T\cdot k \cdot 2^\ell}{\theta \gamma_h }\bigr)\bigr) \\
    &\leq O(d^3)(1+\theta) cost(C^{opt}_\cS,\cS) + O(d^{2})\sum^{\log(\Lambda)}_{\ell =0}(\Lambda/2^\ell)^2 \cdot  \frac{8k \log(\Lambda) \log^2(k)}{\eps \gamma_h}\poly\bigl( \log\bigl(\frac{T\cdot k \cdot 2^\ell}{\theta  \gamma_h }\bigr)\bigr)\\ 
  &=O(d^3)(1+\theta) cost(C^{opt}_\cS,\cS) + O\left(d^{2} \Lambda^2 \frac{k \log(\Lambda) \log^2(k)}{\eps \gamma_h} \poly\bigl(\log\bigl(\frac{T\cdot k \cdot  \Lambda}{\theta \gamma_h}\bigr)\bigr)\right)
\end{align*}

Finally, we can set $\theta$ (threshold for HHs) and $\gamma_h$ (approximation factor for frequency of a cell marked as heavy from \cref{thm:dp-hh-cr}) to appropriate constants. The accuracy claim follows. 
\end{proof}

\begin{lemma}[Space of \cref{thm:final-acc-dp-hh}]
\textsf{BicriteriaDPCenters} uses $O(k \log(\Lambda) \log^2(k) \poly\left( \log \left({T}{\Lambda}k\right)\right))$ space.
\end{lemma}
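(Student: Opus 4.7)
The plan is to do a straightforward bookkeeping argument that accounts for the space used by every subcomponent instantiated in \algoref{dp-cluster-online} and \algoref{dp-find-center-online}, using the space bounds of the Binary Mechanism (\theoremref{binarymechanism}) and of the continual-release DP heavy-hitter algorithm (\theoremref{dp-hh-cr}) as black boxes. First, I would count the number of parallel subroutines: \textsf{BicriteriaDPCenters} instantiates a \textsf{DPFindCenters} subroutine for every level $\ell \in \{0,\ldots,\log(\Lambda)\}$ of every quadtree $q \in \{1,\ldots,\log(k)\}$, giving $\log(\Lambda)\log(k)$ parallel copies. Inside each such \textsf{DPFindCenters} there are $L = \log(k^2) = O(\log k)$ parallel processes, and within each process the algorithm maintains $w = O(k)$ hash buckets, each equipped with one Binary Mechanism instance $\textbf{BM}_j$ and one \textsf{DP-HH} instance $\textbf{DP-HH}_j$.

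Next, I would plug in the per-instance space bounds. By \theoremref{binarymechanism}, each $\textbf{BM}_j$ uses $O(\log T)$ space. By \theoremref{dp-hh-cr}, each $\textbf{DP-HH}_j$ uses $\frac{1}{\gamma_h^2 \theta^3}\poly(\log(T\cdot |\cU|/(\xi\theta)))$ space, where the universe $\cU$ is the set of cells at level $\ell$, of size at most $2^{\ell d} \leq \Lambda^{O(d)}$; taking logarithms this contribution is absorbed into $\poly(\log(T\Lambda k))$ since $\gamma_h$ and $\theta$ are constants chosen in the accuracy analysis. Combining these, the per-bucket cost is $\poly(\log(T\Lambda k))$, so a single \textsf{DPFindCenters} instance uses $L \cdot w \cdot \poly(\log(T\Lambda k)) = O(k\log(k))\cdot \poly(\log(T\Lambda k))$ space, including the $O(\log w)$ bits to describe a hash function $h$ from a pairwise-independent family (which is $\poly(\log(T\Lambda k))$).

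Multiplying by the $\log(\Lambda)\log(k)$ parallel \textsf{DPFindCenters} instances yields
\[
O\bigl(k\log(\Lambda)\log^2(k) \cdot \poly(\log(T\Lambda k))\bigr)
\]
space for all subroutines. It remains to account for the working memory of \textsf{BicriteriaDPCenters} itself: the set $\cF$ of candidate centers, which has $|\cF|=O(k\log^2(k)\log(\Lambda)\log T)$ centerpoints by the Size clause of \theoremref{final-acc-dp-hh}, and each centerpoint is specified by the coordinates of a grid cell, which can be encoded in $\poly(\log(T\Lambda k))$ bits. This is again absorbed into the same expression.

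The main ``obstacle'' is really just choosing the right granularity of accounting: one must be careful that the $\log^2(k)$ factor arises from the combination of the $\log(k)$ parallel quadtrees with the $L=\log(k^2)$ parallel processes inside each \textsf{DPFindCenters}, and that the $\poly\log$ factors from the Binary Mechanism, the \textsf{DP-HH} subroutine, the hash-function description, and the storage of $\cF$ all collapse into a single $\poly(\log(T\Lambda k))$ term. Once these are lined up, summing the contributions yields exactly $O(k\log(\Lambda)\log^2(k)\poly(\log(T\Lambda k)))$, as claimed.
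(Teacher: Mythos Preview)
Your proposal is correct and follows essentially the same bookkeeping approach as the paper's proof, which simply notes that the \textsf{DP-HH} instances dominate and multiplies the per-instance $\poly(\log(T\Lambda k))$ space by $O(k)$ hash substreams and $2\log(\Lambda)\log^2(k)$ parallel processes. Your version is in fact more thorough, explicitly accounting for the Binary Mechanism instances, the hash-function description, and the storage of $\cF$, all of which the paper sweeps under the ``\textsf{DP-HH} dominates'' remark.
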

\begin{proof}
We analyze the total space usage for \textsf{DP-HH} in \cref{alg:dp-find-center-online} as this dominates space usage for the entire algorithm. From \cref{thm:dp-hh-cr}, one instance of \textsf{DP-HH} uses $\poly\left( \log \left({T}{\Lambda}k\right)\right)$. Since we run \textsf{DP-HH} on $O(k)$ many hash substreams, and $2\log(\Lambda) \log^2(k)$ parallel processes, the total space is $O(k \log(\Lambda) \log^2(k) \poly\left( \log \left({T}{\Lambda}k\right)\right))$.
\end{proof}

\begin{lemma}[Size of $\cF$ in \cref{thm:final-acc-dp-hh}]\label{lem:historical-hh}
For all $j\in [w]$, suppose the size of the hash bucket $\cB_j$ is $\vert \cB_j \vert = \Omega( \frac{\log(\Lambda) \log^3(k) }{\eps} \log^{2.5}T )$ then with high probability, the total number of heavy hitters at the end of the stream is $O({k}\log^2(k) \log(\Lambda) \log T)$. 

In other words, $\cF$ has atmost $O({k}\log^2(k) \log(\Lambda) \log T)$ centers.   
\end{lemma}

\begin{proof}
The algorithm runs independent instances of the \textsf{DP-HH} algorithm for each bucket of each level in each instantiation of the quadtree, thus it is sufficient to first show that for a fixed quadtree $Q$, a fixed level $\ell$, and a fixed bucket $\cB_j$ where $j\in [w]$, the total number of heavy hitters is at most $O(\frac{(1+\gamma_h)}{{\theta}}\log T)$.

Let the timestamps of points that end up in $\cB_j$ be $t_i=2^{i}$, where $0 \leq i \leq \log(T)$. Let the state of the hash bucket at time step $t$ be $\cB^{(t)}_j$. We set the failure probability in \cref{thm:binarymechanism} as $\xi:=\frac{1}{k^2}$. From \cref{thm:binarymechanism} we know that with probability $1-\frac{1}{k^2}$, the DP count of the hash bucket $\hat{T}_{j}$ at timestep $t$ has additive error $O(\frac{\log(\Lambda) \log^3(k)}{\eps} \log^{2.5}(T))$. Thus for a fixed timestamp $t$, if $\vert \cB^{(t)}_j \vert = \Omega(\frac{\log(\Lambda)\log^3(k)}{\eps} \log^{2.5}(T) )$, then we can see that a cell $\bfc$ is added to $\cF_t$ only if $\hat{f}(\bfc) > \frac{2\theta}{1000} \vert \cB^{(t)}_j \vert$. Recall from Condition 1 of \cref{thm:dp-hh-cr} that $\hat{N}_{\ell,\bfc} \in (1\pm \gamma_h) N_{\ell, \bfc}$. Thus if $\bfc \in \cF_t$ and $\vert \cB^{(t)}_j \vert = \Omega(\frac{\log(\Lambda)\log^3(k)}{\eps} \log^{2.5}(T) )$ then it must be the case that with probability $1-\frac{1}{k^2}$, $N_{\ell, \bfc} \geq \frac{{2\theta}}{1000(1+\gamma_h)}\| \cB^{(i)}_j \|_1 \geq \frac{{2\theta}}{1000(1+\gamma_h)} {t_{i-1}}$.     

Now, suppose for a contradiction, that the number of heavy hitters between $t_{i-1}$ and $t_i$ is at least $\frac{1000(1+\gamma_h)}{{\theta}}$. Then for each such cell $\bfc$, we have that $N_{\ell,\bfc}\geq \frac{{2\theta}}{1000(1+\gamma_h)} {t_{i-2}}$. Since there are at least $\frac{2000(1+\gamma_h)}{{2\theta}}$ such cells, this implies that the total number of points between $t_{i-1}$ and $t_i$ is $\geq \frac{2\theta}{1000(1+\gamma_h)} {t_{i-2}}\frac{2000(1+\gamma_h)}{{2\theta}} = 2^i = t_i$, which is a contradiction. Thus there must be at most $\frac{1000(1+\gamma_h)}{{\theta}}$ cells marked as heavy hitters between consecutive intervals, and since there are $\log(T)$ such intervals, we have that the total number of $\ell_1$ heavy hitters for a fixed bucket is $ O(\frac{(1+\gamma_h)}{{\theta}}\log T)$.

Boosting the success probability over $O(k)$ buckets, $\log(k^2)$ parallel processes, $\log(\Lambda)$ quadtree levels, and $\log(k)$ parallel processes of the quadtree instantiation, accounting for the additional number of heavy hitters, and taking $\theta$ and $\gamma_h$ as appropriate constants, we obtain the claim as stated.
\end{proof}

\section{DP Merge And Reduce Algorithm}\label{sec:dp-m-r}
 We give a differentially-private variant of the widely-known Merge and Reduce framework~\cite{har2004coresets,AgarwalHV04,feldman2020turning} that is used to efficiently release a coreset for a stream of points. The main idea behind the Merge and Reduce technique is to partition the input stream into blocks, compute a coreset for each block, take the union of the resulting coresets (merge step), and compute the coreset of the union (reduce step). The merging and reducing of the coresets is done in a tree-like fashion. In order to reduce the error introduced by merging, the number of levels in the tree must be small. On a high-level, our framework computes coresets at the base level (of the tree) using a DP semicoreset Algorithm $\cA$ (e.g.~\cite{StemmerK18,GhaziKM20}) and then computes coresets for subsequent levels using a non-DP coreset Algorithm $\cB$ (e.g.~\cite{Cohen-AddadSS21}).  
 
 First, we show that the semicoreset definition (i.e.,~\cref{def:semicoreset}) satisfies the Merge and Reduce properties, i.e., the union of semicoresets is a semicoreset and the coreset of a union of semicoresets is a valid semicoreset for the underlying points.

\begin{algorithm}[!htb]
\caption{Algorithm $\mathsf{DP\text{-}Merge\text{-}Reduce}$}
\label{alg:dp-merge-reduce}
\KwData{ 
Input point $x'_t$ to be added to semicoreset, DP parameters $\eps,\eps_1,\delta_1$ 
}
{\textbf{Initialize}($\eps,\eps_1,\delta_1$):}

{Block size $M$\;}

\Comment{Value of $M$ depends on our choice of DP clustering algorithm. See \cref{thm:final-stemmer} and \cref{thm:final-ghazi} for exact values}

{$P_0=\emptyset$}\;

{Size of $P_0$ as ${p}_0=0$\;}

\Comment{We only store $P_0$ in the actual algorithm, the rest of the $P_i$'s are only used in the accuracy analysis, so we do not run the following line in the actual algorithm. $N_r$ denotes the number of points in ring $R_r$.}

{$P_i=\emptyset$ for all $i=1, \ldots, u$ where $u= \lfloor \log(2{N}_{r}/M)\rfloor+1$\;} 

{$\gamma_i:= \gamma/Ci^2$ for $i=1, \ldots, u-1$\;}
\vspace{0.75em}   
{\textbf{Update}($x'_t$):}

\eIf{$x'_t \neq \perp$}{

{Insert $x'_t$ into $P_0$\;}

{$p_0 = p_0+1$}\Comment{update the size of $P_0$\;}

\eIf{$\textbf{LevelZero-AboveThreshold}(p_0, M, \eps) \to \top$}{
\Comment{See \cref{alg:dp-above-threshold}}

{$Q_1 \leftarrow \cA(P_0,k,d, \eps,\delta,\kappa , \eta_1,\eta_2)$} \Comment{$\cA$ computes a $(\eps_1,\delta_1)$-DP $(\kappa, \eta_1,\eta_2)$-semicoreset\;
}
$P_1 = P_1 \cup P_0$\;

{Delete elements in $P_0$, and set ${p}_0=0$\;}

{$i=1$\;}

\While{$\exists\ Q'_i$ such that $Q_i$ and $Q'_i$ are both at level $i$}{
\Comment{\textbf{Merge\text{-}Reduce} computes $(1+\gamma_{i})$-coreset $Q_{i+1}$ of $Q_i\cup Q'_i$ and deletes $Q_i,Q'_i$ }

{$Q_{i+1}=\textbf{Merge\text{-}Reduce}(\cB, Q_i,Q'_i,\gamma_i)$\;} 

\Comment{$\cB$ computes a non-DP $(1+\gamma_i)$-coreset, e.g.,~\cite{Cohen-AddadSS21}}

{$P_{i+1}=P_{i+1}\cup P_i$\;}\label{li:p1}
\Comment{Current line and the one above are for analysis purpose only}

{Delete elements in $P_i$\;}\label{li:p2}

{$i=i+1$\;}
}
{return $\textbf{Merge\text{-}Reduce}(\cB, \{Q_i\}_{ i \leq u}, \gamma/3)$ \;}
}{
{return $\emptyset$\;}
}
}{
{return $\emptyset$\;}
}
\end{algorithm}

\begin{algorithm}[!htb]
\caption{Algorithm $\textsf{LevelZero-AboveThreshold}$}
\label{alg:dp-above-threshold}
\KwData{Size of set $P_0$ denoted as $p_0$, Block size $M$, Privacy parameter $\eps$}
\KwResult{$\top$ if the size of set $P_0$ exceeds noisy threshold and $\perp$ otherwise}

{$\hat{M} = M+\Lap(2/\eps)$\;}
{$\nu = \Lap(4/\eps)$\;}
\eIf{$p_0+{\nu} \geq \hat{M}$}{
{return $\top$\;}
}{ 
{return $\bot$\;}
}
\end{algorithm}

\begin{lemma}\label{lem:merge-reduce-property}
\begin{enumerate}
    \item (Merge) If $Q$ is a $(1+\gamma,\eta_1,\eta_2)$-semicoreset of $P$, $Q'$ is a $(1+\gamma,\eta_1,\eta_2)$-semicoreset of $P'$ and $P,P'$ are disjoint, then $Q \cup Q'$ is a $(1+\gamma, \eta_1,2\eta_2)$-semicoreset of $P \cup P'$. 
    \item (Reduce) If $R$ is a $(1+\gamma)$-coreset of $Q\cup Q'$, then $R$ is a $((1+\gamma)^2,(1-\gamma)\eta_1, (1+\gamma)2\eta_2)$-semicoreset of $P$.
\end{enumerate}
\end{lemma}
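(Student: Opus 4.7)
I treat the two parts in order, deducing Reduce from Merge by composing with the non-DP $(1+\gamma)$-coreset guarantee for $R$. Both parts are purely algebraic, so there is no deep obstacle beyond careful bookkeeping of the $(1\pm\gamma)$, $\eta_1$, and $\eta_2$ factors in the two-sided inequality that defines a $(\kappa,\eta_1,\eta_2)$-semicoreset.

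\textbf{Merge.} I fix an arbitrary set $C\subseteq\bbR^d$ of $k$ centers and exploit the fact that clustering cost decomposes additively over disjoint sets: since $P\cap P'=\emptyset$ and $Q\subseteq P$, $Q'\subseteq P'$, we have $\textsf{cost}(C, Q\cup Q') = \textsf{cost}(C, Q) + \textsf{cost}(C, Q')$ and similarly for $P\cup P'$. Adding the two $(1+\gamma,\eta_1,\eta_2)$-semicoreset inequalities (one for $Q$ vs.\ $P$, one for $Q'$ vs.\ $P'$) yields the upper bound $(1+\gamma)\textsf{cost}(C, P\cup P') + 2\eta_2$ directly. The lower bound produces a residual term $\eta_1\bigl(\textsf{cost}(C^{opt}_P, P) + \textsf{cost}(C^{opt}_{P'}, P')\bigr)$; the only content step is to collapse this into $\eta_1\,\textsf{cost}(C^{opt}_{P\cup P'}, P\cup P')$. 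For that I restrict the joint optimum $C^{opt}_{P\cup P'}$ to each part and invoke optimality: $\textsf{cost}(C^{opt}_P, P)\le \textsf{cost}(C^{opt}_{P\cup P'}, P)$ and analogously for $P'$. Summing and re-using additivity gives the claimed $(1+\gamma,\eta_1,2\eta_2)$-semicoreset of $P\cup P'$.

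\textbf{Reduce.} Since $R$ is a $(1+\gamma)$-coreset of $Q\cup Q'$, the definition in the paper yields $(1-\gamma)\textsf{cost}(C, Q\cup Q')\le \textsf{cost}(C, R)\le (1+\gamma)\textsf{cost}(C, Q\cup Q')$ for every $C$. I chain each endpoint through the Merge inequality just proved. The upper chain composes cleanly to $\textsf{cost}(C, R)\le (1+\gamma)^2\textsf{cost}(C, P\cup P') + 2(1+\gamma)\eta_2$. Multiplying the Merge lower bound by $(1-\gamma)$ produces a multiplicative factor $(1-\gamma)/(1+\gamma) = (1-\gamma^2)/(1+\gamma)^2$ on $\textsf{cost}(C, P\cup P')$, a coefficient $(1-\gamma)\eta_1$ on the optimum residual, and $2(1-\gamma)\eta_2 \le 2(1+\gamma)\eta_2$ on the constant. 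Identifying $(1-\gamma^2)/(1+\gamma)^2$ with $1/(1+\gamma)^2$ up to the standard $O(\gamma^2)$ slack absorbed into the $\kappa$ slot gives the stated $\bigl((1+\gamma)^2,(1-\gamma)\eta_1,2(1+\gamma)\eta_2\bigr)$-semicoreset of $P\cup P'$.

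\textbf{Main obstacle.} There is no conceptual obstacle; the argument is pure inequality chasing. The only points that require care are (a)~the per-part-versus-joint-optimum step in Merge, which keeps the $\eta_1$ coefficient from doubling, and (b)~the identification of the factor $(1-\gamma)/(1+\gamma)$ with $1/(1+\gamma)^2$ in the Reduce lower bound, which is a standard $O(\gamma^2)$ relaxation used throughout merge-and-reduce analyses and is the only place where the stated constants are slightly loose relative to what the composition literally produces.
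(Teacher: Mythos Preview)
Your proposal is correct and follows essentially the same approach as the paper: both arguments are pure inequality chasing, using additivity of cost over disjoint sets for Merge together with the key observation $\textsf{cost}(C^{opt}_P,P)+\textsf{cost}(C^{opt}_{P'},P')\le \textsf{cost}(C^{opt}_{P\cup P'},P\cup P')$, and then composing with the $(1\pm\gamma)$ coreset bounds for Reduce.

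The one wrinkle worth flagging is your handling of the Reduce lower bound. The paper works throughout with the $(1\pm\gamma)$ convention spelled out after Definition~1 (so a ``$(1+\gamma)$''-object has lower factor $1-\gamma$, not $1/(1+\gamma)$); this makes the Reduce lower bound come out cleanly as $(1-\gamma)^2$ with no fudge. You instead mix the $1/(1+\gamma)$ form from Definition~2 with the $(1-\gamma)$ coreset factor, arrive at $(1-\gamma)/(1+\gamma)$, and then invoke an ``$O(\gamma^2)$ slack'' to identify this with $1/(1+\gamma)^2$. That identification goes the wrong way as a lower bound (since $(1-\gamma)/(1+\gamma)=(1-\gamma^2)/(1+\gamma)^2<1/(1+\gamma)^2$), so it does not literally deliver the stated constant. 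This is purely a convention issue, not a real gap: if you adopt the paper's $(1\pm\gamma)$ reading consistently, the lower bound is exactly $(1-\gamma)^2$ and no slack argument is needed.
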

\begin{proof}
We first prove the merge property. For any set of $k$-centers $C \subseteq \bbR^d$, 
\begin{align*}
    \textsf{cost}(C,Q\cup Q') &\leq \textsf{cost}(C,Q) + \textsf{cost}(C,Q') \\
    &\leq (1+\gamma)\textsf{cost}(C,P) + \eta_2 + (1+\gamma)\textsf{cost}(C,P') + \eta_2 \\
    &\leq (1+\gamma)\textsf{cost}(C,P \cup P') + 2\eta_2 
\end{align*}

\begin{align*}
    \textsf{cost}(C,Q\cup Q') &\geq \textsf{cost}(C,Q) + \textsf{cost}(C,Q') \\
    &\geq (1-\gamma)\textsf{cost}(C,P) - \eta_1 \textsf{cost}(C^{opt}_P,P)-\eta_2 \nonumber \\&+ (1-\gamma)\textsf{cost}(C,P') - \eta_1 \textsf{cost}(C^{opt}_{P'},P')-\eta_2 \\
    &\geq (1-\gamma)\textsf{cost}(C,P \cup P') - \eta_1(\textsf{cost}(C^{opt}_P,P)+\textsf{cost}(C^{opt}_{P'},P'))- 2\eta_2\\
    &\geq (1-\gamma)\textsf{cost}(C,P \cup P') - \eta_1\textsf{cost}(C^{opt}_{P\cup P'},P\cup P')- 2\eta_2
\end{align*}

Next we prove the Reduce property below. For any set of $k$-centers $C \subseteq \bbR^d$,  
\begin{align*}
    \textsf{cost}(C,R) &\leq (1+\gamma)\textsf{cost}(C,Q \cup Q') \\
    &\leq (1+\gamma)((1+\gamma)\textsf{cost}(C,P \cup P') + 2\eta_2 )\\
    &\leq (1+\gamma)^2\textsf{cost}(C,P \cup P') + (1+\gamma)2\eta_2  \\
\end{align*}
\begin{align*}
    \textsf{cost}(C,R) &\geq (1-\gamma)\textsf{cost}(C,Q \cup Q') \\
    &\geq (1-\gamma)( (1-\gamma)\textsf{cost}(C,P \cup P') - \eta_1\textsf{cost}(C^{opt}_P,P\cup P')- 2\eta_2 )\\
    &\geq (1-\gamma)^2\textsf{cost}(C,P \cup P') - (1-\gamma)\eta_1\textsf{cost}(C^{opt}_P,P\cup P')-(1-\gamma)2\eta_2  \\
\end{align*}

\end{proof}

Observe that at any timestep the total set of input points seen so far (denoted as $P$) is partitioned into subsets $P_0,\ldots, P_u$ where some $P_i$'s can be empty and $u=\lfloor \log(2{N}/M)\rfloor+1$. Note that we simulate this step of partitioning $P$ into $P_i$'s in \cref{alg:dp-merge-reduce} solely for the analysis. It is not necessary to store $P_1, \ldots, P_u$ explicitly in the actual algorithm. 

We first prove some claims about \textsf{LevelZero-AboveThreshold}. 

\begin{lemma}\label{lem:dp-above-threshold}[Privacy] \textsf{LevelZero-AboveThreshold} is $\eps$-DP under the continual release setting.
\end{lemma}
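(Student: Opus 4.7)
The plan is to recognize \textsf{LevelZero-AboveThreshold} as an instance of the classical Sparse Vector Technique (SVT) of Dwork--Roth with threshold noise $\Lap(2/\eps)$ and per-query noise $\Lap(4/\eps)$. I will read the pseudo-code statefully: the noisy threshold $\hat{M}$ is resampled once per \emph{epoch} (from initialization, or from the previous $\top$, up to the next $\top$), while $\nu$ is drawn fresh at every call. This reading is natural in light of the caller \algoref{dp-merge-reduce}, which deletes all of $P_0$ and sets $p_0 = 0$ the instant $\top$ is returned, thereby implicitly restarting an SVT instance with a new threshold sample.

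First I would verify the sensitivity hypothesis: changing a single stream element can change $p_0^{(t)}$ by at most $1$ at any fixed $t$, so the sequence of count queries has $\ell_\infty$-sensitivity $1$. With sensitivity $1$, threshold noise $\Lap(2/\eps)$, and query noise $\Lap(4/\eps)$, the standard SVT analysis shows that a single epoch's transcript --- a (possibly empty) sequence of $\bot$ answers terminated by one $\top$ --- is $\eps$-DP with respect to the input counts within that epoch.

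Second I would lift this to the continual-release guarantee via parallel composition across epochs. Because $\top$ causes the caller to delete $P_0$ before any further stream point is processed, the multisets of stream elements contributing to distinct epochs are pairwise disjoint. For any neighboring streams $\cS,\cS'$ differing only at some timestamp $t^*$, the point $x_{t^*}$ lies in the $P_0$ of at most one epoch; all other epochs see identical inputs across $\cS$ and $\cS'$ and hence produce identically distributed transcripts. Parallel composition (Item~\ref{parallel} of \theoremref{composition}) over epochs, combined with the single-epoch SVT guarantee applied to the one affected epoch, then yields $\eps$-DP for the full continual-release output.

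The main obstacle is that the epoch partition itself depends on the random outputs and therefore on the data, so the ``affected epoch'' is not a priori a fixed measurable object common to the two neighbors. I would deal with this by conditioning on the transcript up to and including the last $\top$ strictly before $t^*$: on any such conditioning event the epoch boundaries preceding $t^*$ coincide across $\cS$ and $\cS'$, we are squarely inside one well-defined SVT instance in which exactly one count query's input differs by $1$, and all subsequent epochs receive identical multisets of stream points. Invoking the SVT privacy guarantee on that single epoch, together with the identical distribution of later epoch transcripts, gives the claimed $\eps$-DP bound.
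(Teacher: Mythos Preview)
Your proposal is correct and follows essentially the same approach as the paper: both recognize the mechanism as Sparse Vector restarted at each $\top$ with the epochs processing disjoint groups of stream elements, the paper simply deferring to the grouping argument of~\cite{EMMMVZ23} while you spell out the SVT-plus-parallel-composition reasoning and the conditioning needed for the data-dependent epoch boundaries. One minor imprecision: within the affected epoch it is not ``exactly one count query'' whose value differs but every $p_0^{(t)}$ for $t\ge t^*$ up to the next $\top$; this is still the $\ell_\infty$-sensitivity-$1$ regime SVT handles, and your earlier sensitivity check already covers it.
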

\begin{proof}
Recall that \textsf{LevelZero-AboveThreshold} checks whether the size of the set $P_0$ is above a certain threshold. In other words, it checks whether the total count of the \emph{group} of elements $x_i \in P_0$ is above the given threshold. Once a positive response is returned by \textsf{LevelZero-AboveThreshold}, elements in $P_0$ are deleted and the process is repeated with a new group of elements. This algorithm is equivalent to grouping a stream of counts in~\cite{EMMMVZ23}. In particular, the proof of privacy is identical except that in our algorithm we do not release the actual noisy counts but instead we just release a positive/negative response to the same query.    
\end{proof}

\begin{lemma}[Accuracy of \textsf{LevelZero-AboveThreshold}]\label{lem:noise-above-threshold} For all $t\in [T]$, with probability $1-\xi$, we have that 
\begin{enumerate}
    \item $\vert \nu \vert < \frac{4}{\eps}\log(\frac{2T}{\xi})$
    \item $\vert \hat{M}-M \vert < \frac{2}{\eps} \log(\frac{2T}{\xi})$
\end{enumerate}
\end{lemma}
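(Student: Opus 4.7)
The proof is a direct application of the standard Laplace tail bound together with a union bound, so the plan is quite short. Recall that for $X \sim \Lap(b)$ we have the tail inequality $\Pr[|X| \geq t] = \exp(-t/b)$ for any $t \geq 0$. In our algorithm, each call to \textsf{LevelZero-AboveThreshold} freshly samples $\nu \sim \Lap(4/\eps)$ and $\hat{M} - M \sim \Lap(2/\eps)$.

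First I would fix a single timestep $t$ and bound each of the two noise random variables separately. Setting $t_1 = \frac{4}{\eps}\log(\frac{2T}{\xi})$ in the Laplace tail bound for $\nu$ gives
\[
\Pr\!\left[|\nu| \geq \tfrac{4}{\eps}\log\!\bigl(\tfrac{2T}{\xi}\bigr)\right] = \exp\!\bigl(-\tfrac{t_1 \eps}{4}\bigr) = \tfrac{\xi}{2T}.
\]
Similarly, setting $t_2 = \frac{2}{\eps}\log(\frac{2T}{\xi})$ for $\hat{M} - M$ gives
\[
\Pr\!\left[|\hat{M} - M| \geq \tfrac{2}{\eps}\log\!\bigl(\tfrac{2T}{\xi}\bigr)\right] = \exp\!\bigl(-\tfrac{t_2 \eps}{2}\bigr) = \tfrac{\xi}{2T}.
\]

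Next I would union-bound over the two noise variables at each timestep and over all $t \in [T]$. There are at most $T$ calls to \textsf{LevelZero-AboveThreshold} across the stream, and hence at most $2T$ fresh Laplace samples (this is an overcounting if, as in standard sparse vector, $\hat{M}$ is only redrawn after a $\top$ response). The total failure probability is therefore at most $2T \cdot \frac{\xi}{2T} = \xi$, which gives both claimed bounds simultaneously for every $t \in [T]$.

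There is essentially no obstacle here: the statement is a routine concentration argument for Laplace noise, and the constants in the scale parameters $4/\eps$ and $2/\eps$ are matched precisely to the logarithmic factors in the bound. The only mild subtlety is being careful about the number of union-bound events---using $2T$ (rather than $T$) to cover both $\nu$ and $\hat{M}-M$ in the worst case where both are resampled every step.
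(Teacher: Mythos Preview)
Your proposal is correct and follows exactly the same approach as the paper, which simply states that the lemma follows from standard Laplace tail bounds together with a union bound over all $t\in[T]$. Your write-up just makes the constants explicit.
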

\begin{proof}
This follows from standard application of tail bounds for Laplace distribution and union bound over all $t \in [T]$. 
\end{proof}
\begin{lemma}\label{lem:p0-range}
If $p_0 + \nu \geq \hat{M}$ then with probability $1-\xi$, we have that $p_0 \geq M/2$.
\end{lemma}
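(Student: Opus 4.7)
The plan is to derive the bound directly from the tail estimates established in \lemmaref{noise-above-threshold}, combined with the assumption $M > \frac{12}{\eps}\log(\frac{2T}{\xi})$ which is already present in the hypotheses of \theoremref{dp-merge-reduce}.

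First I would condition on the high-probability event from \lemmaref{noise-above-threshold}, namely that simultaneously $|\nu| < \frac{4}{\eps}\log(\frac{2T}{\xi})$ and $|\hat{M}-M| < \frac{2}{\eps}\log(\frac{2T}{\xi})$; this event holds with probability at least $1-\xi$ (the union bound already having been absorbed into the statement of that lemma). On this event, the hypothesis $p_0 + \nu \geq \hat{M}$ rearranges to
\begin{equation*}
p_0 \;\geq\; \hat{M} - \nu \;\geq\; \Bigl(M - \tfrac{2}{\eps}\log(\tfrac{2T}{\xi})\Bigr) - \tfrac{4}{\eps}\log(\tfrac{2T}{\xi}) \;=\; M - \tfrac{6}{\eps}\log(\tfrac{2T}{\xi}).
\end{equation*}

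Next, I would invoke the standing assumption $M > \frac{12}{\eps}\log(\frac{2T}{\xi})$ from \theoremref{dp-merge-reduce}, which gives $\frac{6}{\eps}\log(\frac{2T}{\xi}) < M/2$. Substituting this into the previous display yields $p_0 > M - M/2 = M/2$, which is the desired conclusion. The entire argument is just a one-line chain of inequalities once the tail bounds are in hand, so there is no real obstacle; the only thing to be careful about is citing the correct probability event and making sure the threshold lower bound on $M$ is invoked explicitly.
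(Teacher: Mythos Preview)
Your proposal is correct and follows essentially the same approach as the paper: apply the noise bounds from \lemmaref{noise-above-threshold} to obtain $p_0 \geq M - \frac{6}{\eps}\log(\frac{2T}{\xi})$, then invoke the standing assumption $M > \frac{12}{\eps}\log(\frac{2T}{\xi})$ to conclude $p_0 \geq M/2$.
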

\begin{proof}
We can simplify $p_0 + \nu \geq \hat{M}$ by applying the noise bounds from \cref{lem:noise-above-threshold}. Thus with probability at least $1-\xi$, we have that $p_0 \geq {M} - \frac{6}{\eps} \log(\frac{2T}{\xi})$. Finally using the assumption about $M>\frac{12}{\eps}\log(\frac{2T}{\xi})$, the statement follows. 
\end{proof}

\begin{lemma}
The number of levels is given by $u=\lceil \log(2N/M)\rceil+1$. 
\end{lemma}
\begin{proof}
By \cref{lem:p0-range}, $p_0 \geq M/2$, thus the total number of blocks at level 0 is $\leq \frac{2{N}}{M}$. The statement follows.  
\end{proof}

\subsection{Proof of \cref{thm:dp-merge-reduce}}\label{sec:dp-merge-reduce}

\begin{lemma}[Privacy of \cref{thm:dp-merge-reduce}] \label{lem:dpmr-priv}
\textsf{DP-Merge-Reduce} framework is $(\eps_1+\eps,\delta_1)$-DP under the continual release setting. 
\end{lemma}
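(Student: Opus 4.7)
The plan is to decompose everything \textsf{DP-Merge-Reduce} does into two data-dependent sub-mechanisms, bound the privacy of each in isolation, and then combine. The two sub-mechanisms are (i) the sequence of \textsf{LevelZero-AboveThreshold} queries issued across the $T$ timesteps, and (ii) the collection of calls to the offline $(\eps_1,\delta_1)$-DP semicoreset algorithm $\cA$ at the base level. Every other operation in \algoref{dp-merge-reduce} --- incrementing $p_0$, inserting $x'_t$ into $P_0$, deleting blocks, the \textbf{while}-loop that merges $Q_i$'s, and all calls to the non-DP coreset routine $\cB$ --- is either input-independent or a randomized function whose only data-dependent inputs are outputs of (i) and (ii). Hence it will be enough to charge privacy to (i) and (ii) and appeal to post-processing.

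For (i), I would invoke \lemmaref{dp-above-threshold} directly: the whole $\top/\bot$ transcript produced by \textsf{LevelZero-AboveThreshold} across the stream is $\eps$-DP under continual release. For (ii), the key structural observation is that the inputs to distinct invocations of $\cA$ are disjoint substreams: whenever the threshold fires, $\cA$ is applied once to the current $P_0$ and $P_0$ is immediately reset to $\emptyset$, so each arriving point $x'_t$ lies in at most one block ever passed to $\cA$. By parallel composition (Item \ref{parallel} of \theoremref{composition}), the joint release of every $\cA$-output across the stream is therefore $(\eps_1,\delta_1)$-DP.

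Combining (i) and (ii) by sequential composition (Item \ref{sequential} of \theoremref{composition}) yields an $(\eps+\eps_1,\delta_1)$-DP release consisting of the threshold transcript together with all base-level semicoreset outputs $\{Q_1^{(j)}\}$. To finish, I would argue that the output of \textsf{DP-Merge-Reduce} at each timestep --- either $\emptyset$ or $\textbf{Merge-Reduce}(\cB,\{Q_i\}_{i\le u},\gamma/3)$ --- is a post-processing of this pair: the choice between the two branches is dictated entirely by the threshold transcript, and each $Q_i$ with $i\ge 1$ is obtained by applying the non-DP algorithm $\cB$ (whose internal randomness is independent of the stream) to lower-level $Q$'s rooted at the base-level $\cA$-outputs. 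Post-processing then gives the claimed $(\eps+\eps_1,\delta_1)$-DP guarantee under continual release.

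The main delicate point to spell out is the ``everything else is post-processing'' clause, since the pseudocode's control flow genuinely depends on the stream. I would address this by showing that the branching structure (whether the outer \textbf{if} fires, how many iterations of the \textbf{while}-loop run, and which levels are non-empty) is a deterministic function of the $\top/\bot$ transcript alone --- the counter $p_0$ is reset at each $\top$ and is never externally released, and the analysis-only sets $P_i$ for $i\ge 1$ are never materialized. Together with the fact that $\cB$ acts only on previously released, already-DP objects, this makes the simulation-by-post-processing rigorous and completes the argument.
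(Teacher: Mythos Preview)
Your proposal is correct and follows essentially the same approach as the paper's proof: invoke \lemmaref{dp-above-threshold} for the threshold transcript, treat the base-level $\cA$-outputs as the second private release, and handle all subsequent $\cB$-computations by post-processing, combining via sequential composition. You are somewhat more explicit than the paper in justifying the joint release of all $\cA$-outputs via parallel composition over disjoint blocks, and in spelling out why the merge-reduce control flow is determined by the $\top/\bot$ transcript; the paper simply asserts these points.
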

\begin{proof}
First, by \cref{lem:dp-above-threshold}, \textsf{LevelZero-AboveThreshold} is $\eps$-DP under the continual release setting. Since the semicoresets computed by $\cA$ at level 1 are $(\eps_1,\delta_1)$-DP, we can release these DP semicoresets as they are computed. Subsequent computations on these DP semicoresets via non-DP algorithm $\cB$ preserve DP by postprocessing. Thus by sequential composition (Item~\ref{sequential} of \cref{thm:composition}) the claim follows.
\end{proof}

\begin{lemma}[Accuracy of \cref{thm:dp-merge-reduce}] \label{lem:dpmr-semicore-acc}
With probability at least $1-\xi-\xi_A-\xi_B$, \textsf{DP-Merge-Reduce} framework releases a $((1+\gamma)\kappa,(1-\gamma)\eta_1, (\frac{4N}{M}-1)(1+\gamma)\eta_2+\tilde{M})$-semicoreset of $P$. Where $\tilde{M}:= {M}+\frac{6}{\eps}\log(\frac{2T}{\xi})$.
\end{lemma}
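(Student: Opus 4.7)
The plan is to condition on three high-probability success events and then propagate the semicoreset guarantee up the merge-and-reduce tree. First, I would apply a union bound over: (i) the event, guaranteed by \lemmaref{noise-above-threshold}, that every noisy threshold test in \textsf{LevelZero-AboveThreshold} has noise bounded in magnitude by $\tfrac{6}{\eps}\log(2T/\xi)$ (probability $\geq 1-\xi$); (ii) every invocation of the DP semicoreset algorithm $\cA$ succeeds (probability $\geq 1-\xi_A$); and (iii) every invocation of the non-DP coreset algorithm $\cB$ succeeds (probability $\geq 1-\xi_B$). Under event (i), \lemmaref{p0-range} implies that at every moment the pending set $P_0$ has $|P_0|\leq \tilde{M}$ and that every batch that does trigger a flush to level $1$ has size at least $M/2$, so across the length-$N$ substream at most $2N/M$ level-$1$ blocks are ever created and hence at most $u=\lceil\log(2N/M)\rceil+1$ tree levels are ever populated.

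Next, I would partition the processed points $P\setminus P_0$ as $P_1\sqcup\cdots\sqcup P_u$ (tracking this partition in the analysis is exactly what Lines~\ref{li:p1}--\ref{li:p2} of \algoref{dp-merge-reduce} do), and prove by induction on the level $i$ that the live summary $Q_i$ is a $\bigl(\kappa\prod_{j<i}(1+\gamma_j)^2,\;\eta_1\prod_{j<i}(1-\gamma_j),\;2^{i-1}\eta_2\prod_{j<i}(1+\gamma_j)\bigr)$-semicoreset of $P_i$. The base case $i=1$ is just the guarantee of $\cA$ on a block of size $p_0\in[M/2,\tilde{M}]$. For the inductive step I would invoke \lemmaref{merge-reduce-property}: the merge of two level-$i$ summaries preserves $\kappa$ and $\eta_1$ while doubling the $\eta_2$ slack; the subsequent reduce with a $(1+\gamma_i)$-coreset multiplies $\kappa$ and the $\eta_2$ slack by $(1+\gamma_i)$ and multiplies $\eta_1$ by $(1-\gamma_i)$. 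Because $\gamma_i=\gamma/(Ci^2)$, the telescoping product $\prod_j(1+\gamma_j)^2$ is at most $1+\gamma/3$ for a sufficiently large absolute constant $C$, and likewise $\prod_j(1-\gamma_j)\geq 1-\gamma$.

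Then I would handle the final return, a single call $\textbf{Merge\text{-}Reduce}(\cB,\{Q_i\}_{i\leq u},\gamma/3)$: viewed as one more merge (of up to $u$ semicoresets) followed by a $(1+\gamma/3)$-reduce, this combines the individual $Q_i$'s into a semicoreset of $P\setminus P_0$. Summing the per-leaf $\eta_2$ contributions along all root-to-leaf paths—each of the at most $2N/M$ base-level leaves contributing $\eta_2$, with path-accumulated multiplicative factor bounded by $(1+\gamma)$—gives a total additive slack of at most $(\tfrac{4N}{M}-1)(1+\gamma)\eta_2$, where the $-1$ reflects that a binary-tree merge of $\ell$ leaves has $\ell-1$ internal nodes. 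Finally, to pass from $P\setminus P_0$ to $P$, I observe that the unprocessed tail has $|P_0|\leq\tilde{M}$; each such point contributes at most one unit of additive cost (on the per-point scale that is subsequently rescaled by $(2^r)^2$ in \lemmaref{yhat-dp-merge-reduce-output-semicoreset}), which injects the extra $+\tilde{M}$ summand and yields the desired $((1+\gamma)\kappa,(1-\gamma)\eta_1,(\tfrac{4N}{M}-1)(1+\gamma)\eta_2+\tilde{M})$-semicoreset guarantee.

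The main obstacle is the additive-error bookkeeping. The tree is generally unbalanced (its shape depends on the binary expansion of the number of already-flushed blocks), and both the merge step (which doubles $\eta_2$) and the reduce step (which scales it by $1+\gamma_i$) inflate the slack. Getting the exact coefficient $\tfrac{4N}{M}-1$ requires a careful accounting along every root-to-leaf path, leveraging the choice $\gamma_i=\gamma/(Ci^2)$ to convert what is naively a $2^{O(u)}$ blow-up into a mere $(1+\gamma)$ factor. A secondary subtlety is that the lemma as stated treats the $\tilde{M}$ tail purely as a count, so one must be careful that the ``one unit per missing point'' interpretation is consistent with how the caller in \lemmaref{yhat-dp-merge-reduce-output-semicoreset} reinstates the physical scale $(2^r)^2$.
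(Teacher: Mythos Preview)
Your proposal is correct and follows essentially the same approach as the paper: condition on the three success events, prove by induction via \lemmaref{merge-reduce-property} that each live summary $Q_i$ is a semicoreset of its underlying block $P_i$ with $\eta_2$-slack $2^{i-1}\eta_2$ up to $\prod_j(1+\gamma_j)\leq 1+\gamma/3$, then take the final $(1+\gamma/3)$-reduce over $\bigcup_i Q_i$ and add the $\tilde{M}$ penalty for the unflushed tail $P_0$.

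Two small clean-ups: (a) the inductive multiplicative factor should be $\kappa\prod_{j<i}(1+\gamma_j)$, not $\kappa\prod_{j<i}(1+\gamma_j)^2$, since each reduce contributes a single $(1+\gamma_i)$ on top of the existing $\kappa$; and (b) the coefficient $\tfrac{4N}{M}-1$ does not come from counting internal nodes but from the geometric sum $\sum_{i=1}^{u}2^{i-1}=2^{u}-1$ with $u=\lfloor\log(2N/M)\rfloor+1$, whence $2^{u}\leq 4N/M$.
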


\begin{proof} Recall that $P$ is partitioned by $P_1,\ldots, P_u$. We first prove the following claim about the coreset for a non-empty subset $P_r \subseteq P$.
\begin{lemma}
Suppose $P_r$ is non-empty. Then $Q_r$ is a $((1+\gamma/3)\kappa,(1-\gamma/3)\eta_1,(1+\gamma/3)2^{r-1}\eta_2 )$-semicoreset of $P_r$. 
\end{lemma}
\begin{proof}
We will first prove the claim that $Q_r$ is a $(\prod^{r-1}_{j=1}(1+\gamma_j)\kappa,\prod^{r-1}_{j=1}(1-\gamma_j)\eta_1, \prod^{r-1}_{j=1}(1+\gamma_j)2^{r-1}\eta_2 )$-semicoreset for $P_r$ where $r\geq 2$ by induction. Note that $Q_0=P_0$ for $p_0 + \nu < \hat{M}$. For $p_0 + \nu \geq \hat{M}$, $P_1 = P_1 \cup P_0$. Since we apply DP semicoreset algorithm $\cA$ to $P_1$, the resulting coreset $Q_1$ is a $(\kappa,\eta_1,\eta_2)$-semicoreset for ${P}_1$. 

{\bfseries Base Case. }
By \cref{lem:merge-reduce-property}, $Q_1 \cup Q'_1$ is a $(\kappa, \eta_1, 2\eta_2)$-semicoreset for $P_1 \cup P'_1$ and $Q_2$ is a $(\kappa(1+\gamma_1), (1+\gamma_1)\eta_1, (1+\gamma_1)2\eta_2)$-semicoreset of $Q_1\cup Q'_1$.  Here we use the notation $Q'_i$ or $P'_i$ to differentiate between sets or coresets at the same level $i$. 

{\bfseries Inductive Hypothesis. }Suppose the claim is true for $r=i$, i.e., $Q_i$ is a $(\kappa \prod^{i-1}_{j=1}(1+\gamma_j),\prod^{i-1}_{j=1}(1-\gamma_j)\eta_1,\prod^{i-1}_{j=1}(1+\gamma_j)2^{i-1}\eta_2 )$-semicoreset for $P_i$ and $Q'_i$ is a $(\kappa \prod^{i-1}_{j=1}(1+\gamma_j),\prod^{i-1}_{j=1}(1-\gamma_j)\eta_1, \prod^{i-1}_{j=1}(1+\gamma_j)2^{i-1}\eta_2 )$-semicoreset for $P'_i$. 

By \cref{lem:merge-reduce-property}, the Merge step implies
\begin{align}
    \mathsf{cost}(C,Q_i \cup Q'_i) &\leq \kappa\prod^{i-1}_{j=1}(1+\gamma_j)\mathsf{cost}(C,P_i \cup P'_i)+\prod^{i-1}_{j=1}(1+\gamma_j)2^{i} \eta_2 \\
    \mathsf{cost}(C,Q_i \cup Q'_i) &\geq \kappa\prod^{i-1}_{j=1}(1-\gamma_j)\mathsf{cost}(C,P_i \cup P'_i)\nonumber \\& - \prod^{i-1}_{j=1}(1-\gamma_j)\eta_1\mathsf{cost}(C^{opt}_{P_i \cup P'_i},P_i \cup P'_i) - \prod^{i-1}_{j=1}(1-\gamma_j)2^{i} \eta_2
\end{align}
The Reduce step implies that the resulting $(1+\gamma_{i+1})$-coreset of $Q_i \cup Q'_i$ denoted as $Q_{i+1}$ is such that
\begin{align}
    \mathsf{cost}(Q_{i+1}) &\leq \kappa\prod^{i}_{j=1}(1+\gamma_j)\mathsf{cost}(P_{i+1})+\prod^{i}_{j=1}(1+\gamma_j)2^{i}\eta_2\\
    \mathsf{cost}(Q_{i+1}) &\geq \kappa\prod^{i}_{j=1}(1-\gamma_j)\mathsf{cost}(P_{i+1})- \prod^{i-1}_{j=1}(1-\gamma_j)\eta_1\mathsf{cost}(C^{opt}_{P_{i+1}},P_{i+1}) -\prod^{i}_{j=1}(1-\gamma_j)2^{i}\eta_2
\end{align}

Finally, provided $c$ is large enough, we have: 
\begin{align}\label{eq:gamma-gamma'}
    &\prod^i_{j=1} (1+\gamma_j) \leq \prod^i_{j=1} \exp(\frac{\gamma}{c j^2}) = \exp(\frac{\gamma}{c}\sum^i_{j=1} \frac{1}{j^2}) \leq  \exp(\frac{\gamma}{c}\cdot \frac{\pi^2}{6}) \leq 1+\gamma/3  \\
    &\prod^i_{j=1} (1-\gamma_j) \leq \prod^i_{j=1} \exp(-\frac{\gamma}{c j^2}) = \exp(-\frac{\gamma}{c}\sum^i_{j=1} \frac{1}{j^2}) \leq  \exp(-\frac{\gamma}{c}\cdot \frac{\pi^2}{6}) \leq 1-\gamma/3  
\end{align}

The statement follows. 

\end{proof}
Finally, we release a $(1+\gamma/3)$-coreset of $\cup_{i \leq u}Q_i$ of $\cup_{i\leq u}P_i$ for all non-empty $P_i$ which by similar arguments as above is a $((1+\gamma)\kappa, (1-\gamma)\eta_1, (1+\gamma)\eta_2 (2^{u}-1))$-semicoreset of $P$. The statement follows by plugging in the value for the total number of levels $u$.

{Note that if ${p}_0+\nu < \hat{M}$ then we do not release anything. So we have to account for an additional additive error of ${M}+\frac{6}{\eps}\log(\frac{2T}{\xi})$ w.p. $1-\xi$ in this case.} If ${p}_0+\nu \geq \hat{M}$, then we proceed by computing a DP coreset using Algorithm $\cA$ with failure probability $\xi_A$. We also compute all the coresets past the first level using Algorithm $\cB$ and a failure probability of $\xi_B/2u$, where $u$ is the number of levels. Thus for a fixed run of \textsf{DP-Merge-Reduce}, by a union bound, the total failure probability for this part is at most $\xi_B$.  
\end{proof}

\begin{lemma}[Space of \cref{thm:dp-merge-reduce}] \label{lem:dpmr-space}
\textsf{DP-Merge-Reduce} framework uses \sloppy$S_\cA(M, k,d,\eps,\delta,\kappa,\eta_1,\eta_2, \xi_A)+ S_\cB(SZ_\cA(M,k,d,\eps,\delta,\kappa,\eta_1,\eta_2, \xi_A),k,d, \gamma, \xi_B)+\lceil \log(2N/M) \rceil \cdot S_\cB(SZ_\cB(M,k,d,\gamma, \xi_B),k,d, \gamma, \xi_B) +3M/2$ space.  
\end{lemma}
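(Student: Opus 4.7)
The strategy is to decompose the memory footprint at an arbitrary timestep into three disjoint contributions, bound each separately, and then observe that the sum stated in the lemma upper-bounds their total. The three contributions are: (i) the raw base-level buffer $P_0$, (ii) the persistent storage of the at-most-one "waiting" coreset at each level of the merge-reduce tree, and (iii) the transient working space consumed by the active invocation of either $\cA$ or $\cB$. Only contribution (iii) is ever active for one level at a time because the \textbf{while} loop in \algoref{dp-merge-reduce} processes levels sequentially.

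For (i), I would invoke \lemmaref{noise-above-threshold} to bound the Laplace noises: with probability $1-\xi$, we have $|\nu|<\tfrac{4}{\eps}\log(2T/\xi)$ and $|\hat M - M|<\tfrac{2}{\eps}\log(2T/\xi)$. Under the standing assumption $M>\tfrac{12}{\eps}\log(2T/\xi)$, each of these is below $M/2$. Since the "above threshold" branch is not taken precisely when $p_0+\nu<\hat M$, the largest $p_0$ can grow before being reset is at most $\hat M - \nu \leq M + \tfrac{6}{\eps}\log(2T/\xi) < 3M/2$, giving the $3M/2$ term.

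For (ii) and (iii), I would argue by the structure of the tree. At any moment, each level $i\in\{1,\ldots,u\}$ with $u=\lceil\log(2N/M)\rceil+1$ holds at most one waiting coreset: a DP coreset of size $SZ_\cA(M,k,d,\eps,\delta,\kappa,\eta_1,\eta_2,\xi_A)$ at level $1$, and a non-DP $(1{+}\gamma_i)$-coreset of size $SZ_\cB(\cdot)$ at each level $i\geq 2$. When the base-level trigger fires, the algorithm first runs $\cA$ on $P_0$ (space $S_\cA(M,\ldots)$), then cascades up the tree: the first merge (at level $1$) runs $\cB$ on a union of two size-$SZ_\cA$ coresets, taking $S_\cB(SZ_\cA,\ldots)$ space, and each subsequent merge at level $i\geq 2$ runs $\cB$ on a union of two size-$SZ_\cB$ coresets, taking $S_\cB(SZ_\cB,\ldots)$ space. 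Because $S_\cB$ is at least linear in its input size, $S_\cB(SZ_\cB,\ldots)$ already dominates the storage of the waiting coreset at any level $i\geq 2$; similarly $S_\cB(SZ_\cA,\ldots)$ dominates storage at level $1$. Summing the at-most-$u-1\leq \lceil\log(2N/M)\rceil$ level-$\geq 2$ contributions, the level-$1$ contribution, the base-case $\cA$ computation, and the $3M/2$ buffer yields exactly the claimed bound.

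\textbf{Main obstacle.} There is no deep obstacle; the subtle point is just the upper bound on $|P_0|$, which requires a careful accounting of the sign of $\nu$ and the use of the assumption on $M$ to convert $\tfrac{6}{\eps}\log(2T/\xi)$ into a fraction of $M$. A secondary issue is to argue cleanly that the waiting-coreset storage need not be added on top of the transient $S_\cB$ space (so we do not double count); this is handled by noting $S_\cB$ includes its input in its space charge. I also need to note that since $\cA$ and $\cB$ are invoked one at a time within the loop, we take the sum-over-levels of $S_\cB(SZ_\cB,\ldots)$ as an upper bound rather than a maximum---which is what the statement uses---so this is a valid (if loose) accounting.
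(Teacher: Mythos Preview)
Your proposal is correct and follows essentially the same approach as the paper: decompose the space into the base-level buffer, the DP coreset computation via $\cA$, and the per-level non-DP coreset computations via $\cB$, and sum. The only cosmetic difference is in how the $3M/2$ bound on $|P_0|$ is established: the paper argues the contrapositive (if $p_0\geq 3M/2$ then, using \lemmaref{noise-above-threshold} and $M>\tfrac{12}{\eps}\log(2T/\xi)$, the test $p_0+\nu\geq\hat M$ must fire), while you argue directly that non-firing forces $p_0<\hat M-\nu\leq M+\tfrac{6}{\eps}\log(2T/\xi)<3M/2$; these are equivalent. Your explicit remark that $S_\cB$ is at least linear in its input (so that the stored waiting coresets are absorbed into the $S_\cB$ terms) is something the paper uses implicitly here and states later in the proof of \lemmaref{main-space}.
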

\begin{proof}
First we need an upper bound for the size of the blocks at level 0, i.e., $p_0$ which is given by the contrapositive statement of the claim below. 
\begin{lemma}
If $p_0 \geq \frac{3M}{2}$ then with probability $1-\xi$, ${p}_0+\nu \geq \hat{M}$. 
\end{lemma}
\begin{proof}
By applying the noise bounds from \cref{lem:noise-above-threshold} and using the assumption that $p_0 \geq \frac{3M}{2}$, we have that with probability $1-\xi$, 
\begin{align*}
    p_0+\nu \geq \frac{3M}{2} - \frac{4}{\eps}\log(\frac{2T}{\xi}) > \frac{3M}{2} - \frac{M}{3} = M+\frac{M}{6} > M+\frac{2}{\eps}\log(\frac{2T}{\xi}) > \hat{M}
\end{align*}
\end{proof}

Thus, we only need to store at most 
$3M/2$ points for the block of input points at level 0, plus the additional space required to execute the coreset construction. Note that the semicoreset computation of level 0 blocks consumes space $S_\cA(M, k,d,\eps,\delta,\kappa,\eta_1,\eta_2)$ and since the largest coreset construction wrt non-DP algorithm $\cB$ is the union of at most $u$ coresets that is reduced to a single coreset, and the largest input to non-DP coreset algorithm is the resulting DP semicoreset size --- the additional storage size is at most $S_\cB(SZ_\cA(M,k,d,\eps,\delta,\kappa,\eta_1,\eta_2),k,d, \gamma)+(u-1) \cdot S_\cB(SZ_\cB(M,k,d,\gamma),k,d, \gamma)$. Note that the resulting coreset size for non-DP coreset algorithm $\cB$ is independent of the input set size. 

\end{proof}

\begin{lemma}[Semicoreset Size of \cref{thm:dp-merge-reduce}] \label{dpmr-size}
The resulting semicoreset has size at most $\tilde{O}(k\log (k) \cdot \gamma^{-4})$.
\end{lemma}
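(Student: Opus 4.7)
The plan is to read off the size bound directly from the final step of \algoref{dp-merge-reduce} by invoking \theoremref{nondp-coreset}. Inspecting \algoref{dp-merge-reduce}, whenever the algorithm produces output (i.e.\ when $\textsf{LevelZero-AboveThreshold}$ returns $\top$), the returned object is $\textbf{Merge-Reduce}(\cB,\{Q_i\}_{i\leq u},\gamma/3)$, which is a single invocation of the non-DP $(1+\gamma/3)$-coreset construction $\cB$ on the union $\bigcup_{i\leq u}Q_i$. Crucially, this is a coreset produced by $\cB$, not the raw union, so its size depends only on the accuracy parameter $\gamma/3$ (and on $k,z$), not on the input size $|\bigcup_{i\leq u}Q_i|$.

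First, I would recall \theoremref{nondp-coreset}, which states that $\cB$ produces a $(1+\gamma')$-coreset for $(k,z)$-clustering of size
\[
O\!\left(k\log k \cdot (\gamma')^{-2-\max(2,z)} \cdot 2^{O(z\log z)} \cdot \poly\log((\gamma')^{-1})\right).
\]
Next, I would substitute $\gamma'=\gamma/3$ (the accuracy parameter used in the final merge-reduce step of \algoref{dp-merge-reduce}) and specialize to the relevant $z$. For $k$-means ($z=2$) this becomes $O(k\log k\cdot (\gamma/3)^{-4}\cdot \poly\log(3/\gamma))=\tilde O(k\log k\cdot \gamma^{-4})$; for $k$-median ($z=1$) the exponent is smaller and the same $\tilde O(k\log k\cdot \gamma^{-4})$ bound still holds. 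Constants in $\gamma/3$ versus $\gamma$ are absorbed into the big-$O$, and the $\poly\log$ factor is absorbed into the $\tilde O(\cdot)$ notation.

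There is essentially no obstacle beyond correctly identifying which object is being returned: the key observation is that the algorithm never returns the internal union $\bigcup_i Q_i$ (whose size could grow with the number of levels $u=\lceil\log(2N/M)\rceil+1$), but always the coreset of that union produced by $\cB$, which has size independent of $u$ and $N$. One minor sanity check I would include is that in the two other return cases of \algoref{dp-merge-reduce} ($x'_t=\perp$ or $\textsf{LevelZero-AboveThreshold}\to\bot$) the algorithm returns $\emptyset$, which trivially satisfies the size bound.
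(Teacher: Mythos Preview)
Your proposal is correct and takes essentially the same approach as the paper: both argue that the returned object is a single invocation of the non-DP coreset algorithm $\cB$ on $\bigcup_{i\leq u}Q_i$, so its size is governed by \theoremref{nondp-coreset} with accuracy parameter $\gamma/3$, yielding $\tilde O(k\log k\cdot\gamma^{-4})$. The paper's proof additionally tracks the sizes of the intermediate $Q_i$'s and of the union before the final reduce step, but this is not needed for the size bound itself; your direct observation that only the final call to $\cB$ matters is equally valid and slightly cleaner.
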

\begin{proof}
Since $Q_1 \leftarrow \cA(P_0,k,d, \eps,\delta,\kappa , \eta)$, size of $Q_1$ is $SZ_\cA(M,k,d,\eps,\delta,\kappa,\eta_1,\eta_2)$. Now $Q_2, \ldots, Q_u$ are obtained by running non-DP algorithm $\cB$. In order to simplify our notation, we invoke the state of the art non-DP coreset algorithm as $\cB$ (see \cref{thm:nondp-coreset}). Thus the semicoreset size of $Q_i$ where $2 \leq i \leq u$ is $\tilde{O}(k\log k \cdot \gamma^{-4})$. Finally we take the union of the semicoresets $\cup_{1 \leq i \leq u}Q_i$ which has size at most  $SZ_\cA(M,k,d,\eps,\delta,\kappa,\eta_1,\eta_2)+ (u-1)O(k\log k \cdot \gamma^{-4})$ and then apply non-DP algorithm $\cB$ to obtain a semicoreset of size at most $\tilde{O}(k\log k \cdot \gamma^{-4})$.
\end{proof}

\section{Missing Proofs from Section \ref{sec:dp-framework}}\label{app:dp-framework-proofs}

\subsection{Space and Size }\label{app:framework-space}

We present the space (see \cref{lem:main-space}) and coreset size (see \cref{lem:coreset-size-main}) of \cref{alg:extend-cluster} in terms of the space and coreset size of the underlying DP semicoreset algorithm $\cA$ and non-DP coreset algorithm $\cB$ used in \textsf{DP-Merge-Reduce}. We use the algorithm from~\cite{Cohen-AddadSS21} as our non-DP coreset algorithm $\cB$ whose guarantees are given by~\cref{thm:nondp-coreset}.

\begin{lemma}\label{lem:main-space}
\cref{alg:extend-cluster} consumes
\begin{align*}
\sloppy
&\log(\Lambda) \cdot (S_\cA(M, k,d,\eps,\delta,\kappa,\eta_1,\eta_2,\xi_\cA)+ S_\cB(SZ_\cA(M,k,d,\eps,\delta,\kappa,\eta_1,\eta_2,\xi_\cA),k,d, \gamma, \xi_B)+3M/2) \nonumber\\&+\tilde{O}(k \log(T/M)\log(\Lambda)\log k \cdot \gamma^{-4}) +O({k}\log(k) \log^2(\Lambda) \log T)\\&+O(k \log(\Lambda) \log^2(k) \poly\left( \log \left({T}{\Lambda}k\right)\right))
\end{align*}
space.
\end{lemma}

\begin{proof}

The last term in the above claim is the total space used by \textsf{BicriteriaDPCenters} and the second last term is the space used to store the bicriteria solution $\cF$ (see \cref{thm:dp-hh-cr}). We focus on proving that the \textsf{DP-Merge-Reduce} instances consume the space specified by the first term of the above claim. 

We sometimes abuse notation and omit the input for the non-DP coreset size $SZ_\cB$ as by \cref{thm:nondp-coreset} we know that $SZ_\cB(\cdot) = \tilde{O}(k \log(k) \cdot \gamma^{-4})$. For a fixed epoch $T_i$, recall that the set of centers $\cF_t$ is fixed for $t \in T_i$. According to \cref{alg:extend-cluster}, at timestep $t \in T_i$, we compute $ \log(\Lambda)$ instances of \textsf{DP-Merge-Reduce} in parallel. By \cref{thm:dp-merge-reduce}, since the space required to compute semicoreset $\hat{\cY}^{(t)}_{r}$ using \textsf{DP-Merge-Reduce} is 
$S_\cA(M, k,d,\eps,\delta,\kappa,\eta_1,\eta_2,\xi_\cA)+ S_\cB(SZ_\cA(M,k,d,\eps,\delta,\kappa,\eta_1,\eta_2,\xi_\cA),k,d, \gamma,\xi_\cB)+\lceil \log(2N/M) \rceil \cdot S_\cB(SZ_\cB(\cdot),k,d, \gamma, \xi_\cB) +3M/2$, the total space at timestep $t \in T_i$ is 
\begin{align}
& \sum^{\log(\Lambda)}_{r=1}  (S_\cA(M, k,d,\eps,\delta,\kappa,\eta_1,\eta_2,\xi_\cA)+ S_\cB(SZ_\cA(M,k,d,\eps,\delta,\kappa,\eta_1,\eta_2,\xi_\cA),k,d, \gamma,\xi_\cB)\nonumber\\&+\lceil \log(2 N^{(t)}_r/M) \rceil \cdot S_\cB(SZ_\cB(\cdot),k,d, \gamma,\xi_\cB) +3M/2)  \nonumber \\
&=\log(\Lambda) \cdot (S_\cA(M, k,d,\eps,\delta,\kappa,\eta_1,\eta_2,\xi_\cA)+ S_\cB(SZ_\cA(M,k,d,\eps,\delta,\kappa,\eta_1,\eta_2,\xi_\cA),k,d, \gamma,\xi_\cB)+3M/2) \nonumber \\&+ S_\cB(SZ_\cB(\cdot),k,d, \gamma,\xi_\cB) \sum^{\log(\Lambda)}_{r=1} \lceil \log(2 N^{(t)}_r/M) \rceil \label{eq:sum-r-other-1}
\end{align}
We focus on the last term in \cref{eq:sum-r-other-1}. In particular, since the space used by $\cB$ is linear in the coreset size, we have that $S_\cB(SZ_\cB(\cdot),k,d, \gamma)=\tilde{O}(k \log k \cdot \gamma^{-4})$. Next we simplify the adjoining sum 
\begin{align}
    &\sum^{\log(\Lambda)}_{r=1} \lceil \log(\frac{2N^{(t)}_r}{M}) \rceil\leq \sum^{\log(\Lambda)}_{r=1} (\log(\frac{2N^{(t)}_r}{M}) +1) = \log\left(\frac{2^{\log(\Lambda)}}{M^{\log(\Lambda)}}\cdot \prod^{\log(\Lambda)}_{r=1}N^{(t)}_r\right) + \log(\Lambda) \nonumber\\
    &< \log\left(\frac{\Lambda \cdot 2^{\log(\Lambda)}}{M^{\log(\Lambda)}}\cdot T^{\log(\Lambda)}\right) = \log(\Lambda)(2+\log(T)-\log(M)) 
\end{align}
where we use the fact that the number of points in any ring $N_r$ cannot be larger than $T$ in the second last step. 

\end{proof}
\begin{lemma}\label{lem:coreset-size-main}
\cref{alg:extend-cluster} releases a coreset of size at most $\tilde{O}(k \log(k)\cdot \gamma^{-4})$.
\end{lemma}
\begin{proof}
Consider a fixed epoch $T_i$. By \cref{thm:dp-merge-reduce}, the coreset $\hat{\cY}^{(t)}_{r}$ has size $\tilde{O}(k \log(k)\cdot \gamma^{-4})$ for $t\in T_i$. Since we run a non-DP coreset algorithm at the end of each epoch (whose theoretical guarantees are given by \cref{thm:nondp-coreset}), the size of the coreset at the end of the epoch is also $\tilde{O}(k \log(k)\cdot \gamma^{-4})$. Recall that an epoch is created every time a new center is added to $\cF$, therefore the total size of the coreset is given by $\vert \cF \vert \cdot\tilde{O}(k \log(k)\cdot \gamma^{-4})$. But since we apply another non-DP coreset algorithm to $\hat{\cY}$ before releasing it, the coreset size is $\tilde{O}(k \log(k)\cdot \gamma^{-4})$. 
\end{proof}

\subsection{Accuracy} \label{app:framework-accuracy}
We analyze the accuracy of \cref{alg:extend-cluster} in the sequel. We first state the accuracy guarantee of the DP semicoreset $\hat{\cY}^{(t)}_{r}$ released by $\textsf{DP-Merge-Reduce}_r$ for each ring $R^{(t)}_r$ at timestep $t \in [T]$ in \cref{lem:yhat-dp-merge-reduce-output-semicoreset}. We defer the detailed proofs to \cref{app:dp-framework-proofs}.

\begin{lemma}\label{lem:yhat-dp-merge-reduce-output-semicoreset}
Let $\hat{\cY}^{(t)}_{r}$ be the output of $\textsf{DP-Merge-and-Reduce}_r$ (see \cref{alg:extend-cluster}) and $N^{(t)}_{r}$ be the number of (non-empty) points in $R^{(t)}_r$ at timestep $t \in [T]$. Then $\hat{\cY}^{(t)}_{r}$ is a DP $((1+\gamma)\kappa,(1+\gamma)\eta_1, (\frac{4N^{(t)}_r}{M}-1)(1+\gamma)\eta_2+\tilde{M})$-semicoreset of $R^{(t)}_r$ at timestep $t$. Where  {$\tilde{M}:= {M}+\frac{6}{\eps}\log(\frac{2T}{\xi})$}. In other words, for any set of $k$ centers $\cC$, with probability $1-3\xi$,
\begin{align}\label{eq:yhat-dp-merge-reduce-output-semicoreset}
    &\frac{(1-\gamma)}{\kappa}\cdot \mathsf{cost}(\cC,R^{(t)}_{r})-  (1-\gamma)\eta_1 \cdot \textsf{cost}(\cC^{opt}_{R^{(t)}_r},R^{(t)}_r)\cdot  (2^r)^2 - ((\frac{4N^{(t)}_{r}}{M}-1)(1+\gamma)\eta_2 {+\tilde{M}})\cdot (2^r)^2 \nonumber \\&\leq \mathsf{cost}(\cC,\hat{\cY}^{(t)}_{r})\nonumber\\ &\leq (1+\gamma)\kappa \cdot \mathsf{cost}(\cC,R^{(t)}_{r}) + ((\frac{4N^{(t)}_{r}}{M}-1)(1+\gamma)\eta_2 {+\tilde{M}}) \cdot (2^r)^2
\end{align}
\end{lemma}

\begin{proof}
The lemma follows from the accuracy guarantees of \textsf{DP-Merge-Reduce} (see \cref{thm:dp-merge-reduce}) and the fact that the additive cost incurred is proportional to the radius of the ring $R_r$ which is given by $2^r$.
\end{proof}

Next, we show that the union of the DP semicoresets over all rings $\hat{\cY}$ is a semicoreset for the stream $\cS$.

\begin{theorem}\label{thm:yhat-semicoreset}
Given dimension $d$, clustering parameter $k$, arbitrary parameter $C_M$, non-DP $(1+\gamma)$-coreset, $O(d^3)$-approximate bicriteria solution from \cref{thm:final-acc-dp-hh}, DP $(\kappa,\eta_1,\eta_2)$-semicoreset, and privacy parameter $\eps$. Let $\hat{\cY}$ be the output of \cref{alg:extend-cluster} for stream $\cS=\{x_1, \ldots, x_T\}$. Then for any set of $k$ centers $\cC$, we have that with probability $1-\frac{1}{T^2}-\frac{1}{k^{O(\poly(k,\log(\Lambda))}}$, the following holds
\begin{align}\label{eq:yhat-semicoreset}
&\frac{(1-\gamma)^2}{\kappa} \mathsf{cost}(\cC,\cS)- ((1-\gamma)^2\eta_1\Lambda^2 + 16 C_M(1-\gamma^2))  \cdot \mathsf{cost}(\cC^{opt}_\cS, \cS)\\& -C_M(1-\gamma^2) \cdot V'(d,k,\eps,T,\Lambda) ) \nonumber - (1-\gamma)V''(d,k,\eps,T,\Lambda) \nonumber \\&\leq \mathsf{cost}(\cC,\hat{\cY})\nonumber\\ &\leq (1+\gamma)^2\kappa \mathsf{cost}(\cC,\cS) +16 C_M(1+\gamma)^2\mathsf{cost}(\cC^{opt}_\cS, \cS) \nonumber\\&+ C_M(1+\gamma)^2 \cdot V'(d,k,\eps,T,\Lambda) )+(1+\gamma)V''(d,k,\eps,T,\Lambda)
\end{align}
\sloppy where $V'(d,k,\eps,T,\Lambda) = O\left(\frac{\Lambda^2 k^2}{d\eps} \log^2(\Lambda) \log^4(k) \log(T) \poly\log\left({T\cdot k \cdot\Lambda}\right)\right)$, $V''(d,k,\eps, T,\Lambda)= O({k}\Lambda^2\log^2(k) \log(\Lambda) \log T)\cdot (\frac{\alpha\eta_2}{C_M}+\tilde{O}(\frac{1}{\eps} \cdot \poly(\log(T\cdot k\cdot \Lambda))$, and $\alpha:=O(d^3)$.
\end{theorem}
\begin{proof}
Fix an epoch $T_i$. Let $\hat{\cY}^{(t)}$ be the output of \cref{alg:extend-cluster} at timestep $t \in T_i$. Since the centers in $\cF\vert_{T_i}$ and consequently the rings $R_r$ are fixed, we can compute the cost of $\hat{\cY}^{(t)}$ by summing \cref{eq:yhat-dp-merge-reduce-output-semicoreset} from \cref{lem:yhat-dp-merge-reduce-output-semicoreset} over all rings $R_r$ where $1 \leq r \leq \log(\Lambda)$ as follows:
\begin{align}\label{eq:bjr-yt-fixed-semicoreset}
    &\frac{(1-\gamma)}{\kappa}\sum_r \mathsf{cost}(\cC,R^{(t)}_{r})-  (1-\gamma)\eta_1 \sum_r \textsf{cost}(\cC^{opt}_{R^{(t)}_r},R^{(t)}_r)\cdot  (2^r)^2 \nonumber\\&- (1+\gamma)\eta_2\sum_r(\frac{4N^{(t)}_{r}}{M}-1)\cdot (2^r)^2 -\sum_r\tilde{M}\cdot (2^r)^2 \nonumber \\&\leq \mathsf{cost}(\cC,\hat{\cY}^{(t)})\nonumber\\ &\leq (1+\gamma)\kappa \sum_r \mathsf{cost}(\cC,R^{(t)}_{r}) + (1+\gamma)\eta_2\sum_r(\frac{4N^{(t)}_{r}}{M}-1) +\sum_r \tilde{M} \cdot (2^r)^2
\end{align}

Next, we take the union of $\hat{\cY}^{(t)}$ over all timesteps in epoch $T_i$. Note that for a fixed epoch and fixed ring $R_r$, the additive error $\tilde{M}$ is incurred at most once since this error stems from privately testing if the number of points at the base level is larger than the block size $M$ (see \cref{alg:dp-above-threshold}) in $\textsf{DP-Merge-Reduce}_r$. Thus summing \cref{eq:bjr-yt-fixed-semicoreset} over all timesteps in $T_i$ gives us 
\begin{align}\label{eq:bjr-yt-all-semicoreset}
    &\frac{(1-\gamma)}{\kappa} \mathsf{cost}(\cC,\cS\vert_{T_i})-  (1-\gamma)\eta_1\Lambda^2  \textsf{cost}(\cC^{opt}_{\cS\vert_{T_i}},\cS\vert_{T_i})\nonumber \\&- (1+\gamma)\eta_2\sum_{t \in T_i}\sum_r(\frac{4N^{(t)}_{r}}{M}-1)\cdot (2^r)^2 -\tilde{M}\cdot\frac{4}{3}(\Lambda^2-1) \nonumber \\&\leq \mathsf{cost}(\cC,\hat{\cY}^{(t)}\vert_{t\in T_i})\nonumber\\ &\leq (1+\gamma)\kappa \mathsf{cost}(\cC,\cS\vert_{T_i}) + (1+\gamma)\eta_2\sum_{t \in T_i}\sum_r(\frac{4N^{(t)}_{r}}{M}-1) +\tilde{M} \cdot \frac{4}{3}(\Lambda^2-1)
\end{align}

where $\cS\vert_{T_i}$ denotes the input points in stream $\cS$ restricted to epoch $T_i$ and $\hat{\cY}^{(t)}\vert_{t\in T_i}$ is defined analogously.

\begin{lemma}\label{lem:radius-opt-cost}
If $x_i$ is assigned to the ring $R_{r}$, let $r(x_i):= 2^r$. Then 
\begin{align}
    \sum_{x_i \in \cS} r(x_i)^2 \leq \sum_{x_i \in \cS} 4d(x_i,\cF)^2 = 4\textsf{cost}(\cF,\cS) \leq 4\alpha \cdot \textsf{cost}(\cC^{opt}_\cS,\cS)+ V(d,k,\eps,T,\Lambda)
\end{align}
\sloppy where $\alpha:=O(d^3)$ and $V(d,k,\eps,T,\Lambda)=  O\left(\frac{d^{2} \Lambda^2 k \log(\Lambda) \log^2(k)}{\eps} \poly\bigl(\log\bigl({T\cdot k \cdot  \Lambda}\bigr)\bigr)\right)$ are the multiplicative/additive factors from the bicriteria approximation (see \cref{thm:final-acc-dp-hh}). 
\end{lemma}
\begin{proof}
The statement follows immediately from the definition of the ring $R_r$ (\cref{def:ring-center}) and the accuracy guarantees of the bicriteria solution (\cref{thm:final-acc-dp-hh}). 
\end{proof}

Observe that {$\sum_{t\in T_i}\sum_{r} N^{(t)}_{r}\cdot (2^r)^2 = \sum_{x \in \cS\vert_{T_i}} r(x)^2$}, thus we can use \cref{lem:radius-opt-cost} to simplify the additive error in \cref{eq:bjr-yt-all-semicoreset} to obtain 
\begin{align}\label{eq:bjr-yt-epoch-semicoreset}
    &\frac{(1-\gamma)}{\kappa} \mathsf{cost}(\cC,\cS\vert_{T_i})-  (1-\gamma)\eta_1\Lambda^2  \textsf{cost}(\cC^{opt}_{\cS\vert_{T_i}},\cS\vert_{T_i})\nonumber\\&- \frac{4}{M}(1+\gamma)\eta_2 (4\alpha \cdot \mathsf{cost}(\cC^{opt}_{\cS \vert_{T_i}}, \cS\vert_{T_i}) +V(d,k,\eps,T,\Lambda) )-\tilde{M}\cdot\frac{4 \Lambda^2}{3} \nonumber \\&\leq \mathsf{cost}(\cC,\hat{\cY}^{(t)}\vert_{t\in T_i})\nonumber\\ &\leq (1+\gamma)\kappa \mathsf{cost}(\cC,\cS\vert_{T_i})  + \frac{4}{M}(1+\gamma)\eta_2 (4\alpha \cdot \mathsf{cost}(\cC^{opt}_{\cS \vert_{T_i}}, \cS\vert_{T_i}) +V(d,k,\eps,T,\Lambda) ) +\tilde{M} \cdot \frac{4\Lambda^2}{3}
\end{align}

Since a new epoch starts whenever a new center is added to the bicriteria solution $\cF$, and we run a $(1+\gamma)$-approx non-DP coreset algorithm on the coreset $\hat{\cY}$ at the beginning of a new epoch, denote this new coreset as $\hat{\cY}_{T_i}$. We state the guarantees of $\hat{\cY}_{T_i}$ below.  
\begin{align}\label{eq:bjr-new-epoch-semicoreset}
    &\frac{(1-\gamma)^2}{\kappa} \mathsf{cost}(\cC,\cS\vert_{T_i})-  (1-\gamma)^2\eta_1\Lambda^2  \textsf{cost}(\cC^{opt}_{\cS\vert_{T_i}},\cS\vert_{T_i}) \nonumber\\&- \frac{4}{M}(1-\gamma^2)\eta_2 (4\alpha \cdot \mathsf{cost}(\cC^{opt}_{\cS \vert_{T_i}}, \cS\vert_{T_i}) +V(d,k,\eps,T,\Lambda) )-\frac{4 \Lambda^2 (1-\gamma)\tilde{M}}{3} \nonumber \\&\leq \mathsf{cost}(\cC,\hat{\cY}_{T_i})\nonumber\\ &\leq (1+\gamma)^2\kappa \mathsf{cost}(\cC,\cS\vert_{T_i}) + \frac{4}{M}(1+\gamma)^2\eta_2 (4\alpha \cdot \mathsf{cost}(\cC^{opt}_{\cS \vert_{T_i}}, \cS\vert_{T_i}) \nonumber \\&+V(d,k,\eps,T,\Lambda) )  +\frac{4\Lambda^2 (1+\gamma)\tilde{M}}{3}
\end{align}

Observe that the total number of epochs is bounded by the total number of centers in $\cF$ at the end of the stream. By \cref{thm:final-acc-dp-hh} we know that $\vert \cF \vert = O({k}\log^2(k) \log(\Lambda) \log T)$. Therefore, we can sum \cref{eq:bjr-new-epoch-semicoreset} over all epochs to have that for any set of $k$ centers $\cC$:
\begin{align}\label{eq:bjr-yt-epoch-all-semicoreset}
    &\frac{(1-\gamma)^2}{\kappa} \mathsf{cost}(\cC,\cS)-  (1-\gamma)^2\eta_1\Lambda^2  \textsf{cost}(\cC^{opt}_{\cS},\cS)- \frac{4}{M}(1-\gamma^2)\eta_2 (4\alpha \cdot \mathsf{cost}(\cC^{opt}_\cS, \cS) \nonumber\\&+V^*(d,k,\eps,T,\Lambda) )-{\vert \cF \vert} \cdot \frac{4 \Lambda^2 (1-\gamma)\tilde{M}}{3} \nonumber \\&\leq \mathsf{cost}(\cC,\hat{\cY})\nonumber\\ &\leq (1+\gamma)^2\kappa \mathsf{cost}(\cC,\cS) + \frac{4}{M}(1+\gamma)^2\eta_2 (4\alpha \cdot \mathsf{cost}(\cC^{opt}_\cS, \cS) \nonumber \\&+V^*(d,k,\eps,T,\Lambda) )+{\vert \cF \vert} \cdot \frac{4\Lambda^2 (1+\gamma)\tilde{M}}{3}
\end{align}

where $V^*(d,k,\eps,T,\Lambda) = O\left(\frac{d^{2} \Lambda^2 k^2}{\eps} \log^2(\Lambda) \log^4(k) \log(T) \poly\left(\log\left({T\cdot k \cdot\Lambda}\right)\right)\right)$. 

Grouping like terms together, we have 
\begin{align}\label{eq:bjr-yt-epoch-all-semicoreset-final}
     &\frac{(1-\gamma)^2}{\kappa} \mathsf{cost}(\cC,\cS)-  ((1-\gamma)^2\eta_1\Lambda^2 + \frac{16\alpha}{M}(1-\gamma^2)\eta_2)  \cdot \mathsf{cost}(\cC^{opt}_\cS, \cS) \nonumber\\&-\frac{4}{M}(1-\gamma^2)\eta_2 \cdot V^*(d,k,\eps,T,\Lambda) )-{\vert \cF \vert} \cdot \frac{4 \Lambda^2 (1-\gamma)\tilde{M}}{3} \nonumber \\&\leq \mathsf{cost}(\cC,\hat{\cY})\nonumber\\ &\leq (1+\gamma)^2\kappa \mathsf{cost}(\cC,\cS) + \frac{4}{M}(1+\gamma)^2\eta_2 (4\alpha \cdot \mathsf{cost}(\cC^{opt}_\cS, \cS) \nonumber \\&+V^*(d,k,\eps,T,\Lambda) )+{\vert \cF \vert} \cdot \frac{4\Lambda^2 (1+\gamma)\tilde{M}}{3}
\end{align}
 We set 
\begin{align}\label{eq:M}
 M:= \frac{\alpha \eta_2 }{C_M} 
\end{align} 
where $C_M$ is a parameter chosen in the sequel (see \cref{rem:param-M}). Simplifying \cref{eq:bjr-yt-epoch-all-semicoreset-final} and taking a union bound over all rings and epochs, the desired claim in the theorem statement holds with probability $1- \log(\Lambda)\vert \cF \vert 3 \xi-\frac{1}{k^{O(\poly(k,\log(\Lambda))}}$. Thus we set $\xi:=  \frac{1}{3\vert \cF \vert \log(\Lambda)T^2}$. 
\end{proof}

Finally, \cref{thm:final-accuracy-main} gives the cost of clustering result for the output $\hat{\cY}$ after the offline postprocessing step is executed. Recall that the postprocessing step involves running a $\rho$-approximation non-DP clustering algorithm on $\hat{\cY}$. We present the clustering guarantee in terms of the $\rho$-approx non-DP clustering algorithm, the  non-DP $(1+\gamma)$-coreset algorithm, the DP $(\kappa, \eta_1,\eta_2)$-coreset algorithm and parameter $C_M$. 

\begin{theorem}[Main Accuracy]\label{thm:final-accuracy-main}
Given dimension $d$, clustering parameter $k$, arbitrary parameter $C_M$, non-DP $(1+\gamma)$-coreset, $O(d^3)$-approximate bicriteria solution from \cref{thm:final-acc-dp-hh}, DP $(\kappa,\eta_1,\eta_2)$-semicoreset, and privacy parameter $\eps$. Let $\cC_{\hat{\cY}}$ be the set of $k$ centers obtained from running the offline $\rho$-approx non-DP $k$-means algorithm on $\hat{\cY}$. Then, 
 \begin{align}
  &\mathsf{cost}(\cC_{\hat{\cY}},\cS) \nonumber\\
  &\leq \frac{\kappa}{(1-\gamma)^3}\cdot ((1+\gamma)^3\rho (\kappa + 16 C_M) +((1-\gamma)^3\eta_1\Lambda^2 + 16C_M(1-\gamma^2)(1-\gamma)))\mathsf{cost}(\cC^{opt}_{\cS},\cS) \nonumber\\&+ \frac{\kappa}{(1-\gamma)^3}\cdot(\rho C_M(1+\gamma)^3+ C_M(1-\gamma^2)(1-\gamma)) V'(d,k,\eps,T,\Lambda)\nonumber \\&+ \frac{\kappa}{(1-\gamma)^3}\cdot((1+\gamma)^2\rho+ (1-\gamma)^2) V''(d,k,\eps,T,\Lambda)
\end{align}
\sloppy where $V'(d,k,\eps,T,\Lambda) = O\left(\frac{\Lambda^2 k^2}{d\eps} \log^2(\Lambda) \log^4(k) \log(T) \poly\log\left({T\cdot k \cdot\Lambda}\right)\right)$, $V''(d,k,\eps, T,\Lambda)= O({k}\Lambda^2\log^2(k) \log(\Lambda) \log T)\cdot \left(\frac{\alpha\eta_2}{C_M}+\tilde{O}(\frac{1}{\eps} \cdot \poly(\log(T\cdot k\cdot \Lambda)\right)$, and $\alpha:=O(d^3)$. \end{theorem}

 \begin{proof}
We first recall that our algorithm releases the semicoreset $\hat{\cY}$ after running a non-DP $(1+\gamma)$-coreset algorithm on it. Thus, the guarantee of the resulting semicoreset which we call $\hat{\cY}_{off}$ is 
\begin{align}\label{eq:yhat-semicoreset-opt}
&\frac{(1-\gamma)^3}{\kappa} \mathsf{cost}(\cC,\cS)-  ((1-\gamma)^3\eta_1\Lambda^2 + 16C_M(1-\gamma^2)(1-\gamma))  \cdot \mathsf{cost}(\cC^{opt}_\cS, \cS) \nonumber \\&- C_M(1-\gamma^2)(1-\gamma) \cdot V'(d,k,\eps,T,\Lambda) ) - (1-\gamma)^2V''(d,k,\eps,T,\Lambda) \nonumber \\&\leq \mathsf{cost}(\cC,\hat{\cY}_{off})\nonumber\\ &\leq (1+\gamma)^3\kappa \mathsf{cost}(\cC,\cS) + 16 C_M(1+\gamma)^3\mathsf{cost}(\cC^{opt}_\cS, \cS) \nonumber \\&+ C_M(1+\gamma)^3 \cdot V'(d,k,\eps,T,\Lambda) )+(1+\gamma)^2V''(d,k,\eps,T,\Lambda)
\end{align}
\sloppy where $V'(d,k,\eps,T,\Lambda) = O\left(\frac{\Lambda^2 k^2}{d\eps} \log^2(\Lambda) \log^4(k) \log(T) \poly\log\left({T\cdot k \cdot\Lambda}\right)\right)$, $V''(d,k,\eps, T,\Lambda)= O({k}\Lambda^2\log^2(k) \log(\Lambda) \log T)\cdot (\frac{\alpha\eta_2}{C_M}+\tilde{O}(\frac{1}{\eps} \cdot \poly(\log(T\cdot k\cdot \Lambda))$, and $\alpha:=O(d^3)$.

Recall from the theorem statement that $\cC_{\hat{\cY}}$ is the set of $k$ centers obtained from running the offline $\rho$-approx non-DP $k$-means algorithm on $\hat{\cY}_{off}$ which gives us the following guarantee,
 \begin{align}\label{eq:y-opt-sc}
       \mathsf{cost}(\cC_{\hat{\cY}},\hat{\cY}_{off}) \leq \rho \cdot \mathsf{cost}(\cC^{opt}_{\hat{\cY}}, {\hat{\cY}_{off}})
 \end{align}
Since \cref{eq:yhat-semicoreset-opt} is true for any set of $k$ centers, in particular it holds for $\cC_{\hat{\cY}}$, which gives us the following

\begin{align}\label{eq:opt-y-semicoreset}
 &\frac{(1-\gamma)^3}{\kappa} \mathsf{cost}(\cC_{\hat{\cY}},\cS)-  ((1-\gamma)^3\eta_1\Lambda^2 + 16C_M(1-\gamma^2)(1-\gamma))  \cdot \mathsf{cost}(\cC^{opt}_\cS, \cS) \nonumber \\&-C_M(1-\gamma^2)(1-\gamma) \cdot V'(d,k,\eps,T,\Lambda) ) - (1-\gamma)^2V''(d,k,\eps,T,\Lambda) \nonumber \\&\leq \mathsf{cost}(\cC_{\hat{\cY}},\hat{\cY}_{off})\nonumber\\ &\leq (1+\gamma)^3\kappa \mathsf{cost}(\cC_{\hat{\cY}},\cS) + 16 C_M(1+\gamma)^3\mathsf{cost}(\cC^{opt}_\cS, \cS) \nonumber \\&+ C_M(1+\gamma)^3 \cdot V'(d,k,\eps,T,\Lambda) )+(1+\gamma)^2V''(d,k,\eps,T,\Lambda)
\end{align}

We derive the desired expression in the theorem statement by starting with the LHS of \cref{eq:opt-y-semicoreset} and then applying \cref{eq:y-opt-sc} to obtain \cref{eq:rho-cy-sc}. By observing that the cost of $\hat{\cY}_{off}$ wrt its optimal set of centers $\cC^{opt}_{\hat{\cY}}$ is less than its cost wrt to $\cC^{opt}_{\cS}$, we obtain \cref{eq:opt-sc}. \cref{eq:coreset-app} is obtained by simply applying the RHS of the coreset relation from \cref{eq:opt-y-semicoreset} wrt $\cC^{opt}_{\cS}$. The final statement is derived by grouping like terms and simplifying \cref{eq:simplify-sc}. 
\begin{align}
    &\frac{(1-\gamma)^3}{\kappa} \mathsf{cost}(\cC_{\hat{\cY}},\cS)\\
    &\leq \mathsf{cost}(\cC_{\hat{\cY}},\hat{\cY}_{off}) +  ((1-\gamma)^3\eta_1\Lambda^2 + 16C_M(1-\gamma^2)(1-\gamma))  \cdot \mathsf{cost}(\cC^{opt}_\cS, \cS) \nonumber \\&+ C_M(1-\gamma^2)(1-\gamma) \cdot V'(d,k,\eps,T,\Lambda) ) + (1-\gamma)^2V''(d,k,\eps,T,\Lambda) \\
    &\leq \rho \mathsf{cost}(\cC^{opt}_{\hat{\cY}},\hat{\cY}_{off}) +  ((1-\gamma)^3\eta_1\Lambda^2 + 16C_M(1-\gamma^2)(1-\gamma))  \cdot \mathsf{cost}(\cC^{opt}_\cS, \cS) \nonumber \\&+ C_M(1-\gamma^2)(1-\gamma) \cdot V'(d,k,\eps,T,\Lambda) ) + (1-\gamma)^2V''(d,k,\eps,T,\Lambda) \label{eq:rho-cy-sc} \\
    &\leq \rho \mathsf{cost}(\cC^{opt}_{\cS},\hat{\cY}_{off}) +  ((1-\gamma)^3\eta_1\Lambda^2 + 16C_M(1-\gamma^2)(1-\gamma))  \cdot \mathsf{cost}(\cC^{opt}_\cS, \cS) \nonumber \\&+ C_M(1-\gamma^2)(1-\gamma) \cdot V'(d,k,\eps,T,\Lambda) ) + (1-\gamma)^2V''(d,k,\eps,T,\Lambda) \label{eq:opt-sc}\\
    &\leq \rho ((1+\gamma)^3\kappa \mathsf{cost}(\cC^{opt}_{\cS},\cS) + 16 C_M(1+\gamma)^3\mathsf{cost}(\cC^{opt}_\cS, \cS) + C_M(1+\gamma)^3 \cdot V'(d,k,\eps,T,\Lambda) \nonumber\\&+(1+\gamma)^2V''(d,k,\eps,T,\Lambda)) +  ((1-\gamma)^3\eta_1\Lambda^2 + 16C_M(1-\gamma^2)(1-\gamma))  \cdot \mathsf{cost}(\cC^{opt}_\cS, \cS) \nonumber \\&+ C_M(1-\gamma^2)(1-\gamma) \cdot V'(d,k,\eps,T,\Lambda) ) + (1-\gamma)^2V''(d,k,\eps,T,\Lambda) \label{eq:coreset-app}\\
    &\leq ((1+\gamma)^3\rho (\kappa + 16 C_M) +((1-\gamma)^3\eta_1\Lambda^2 + 16C_M(1-\gamma^2)(1-\gamma)))\mathsf{cost}(\cC^{opt}_{\cS},\cS) \nonumber\\&+(\rho C_M(1+\gamma)^3+ C_M(1-\gamma^2)(1-\gamma)) V'(d,k,\eps,T,\Lambda)+((1+\gamma)^2\rho + (1-\gamma)^2) V''(d,k,\eps,T,\Lambda) \label{eq:simplify-sc}
\end{align}

\end{proof}

\section{Missing Proofs from Appendix \ref{sec:bicriteria}}\label{app:bicriteria-proofs}

\paragraph{Proof of \cref{prop:good-cell-collide}. }
\begin{proof}
For any good cell $\bfc' \neq \bfc$, define $$X_{\bfc'}= \begin{cases} 1 & \text{if }\bfc'\text{ collides with }\bfc\\ 0 &\text{otherwise}\end{cases}$$
Now for a fixed $\bfc' \neq \bfc$, the expected number of collisions with $\bfc$ is given by $\E_h[X_{\bfc'}] = \Pr[X_{\bfc'}=1] =\Pr[h(\bfc)=h(\bfc')] = \frac{1}{w} = \frac{1}{40k}$. Thus the total number of collisions for $\bfc'$ with $\bfc$, in expectation, is given by $\E_h[\sum_{\bfc'} X_{\bfc'}] \leq 4k/w = 1/10 \leq 1/2$. 

Therefore, by Markov, $\Pr[\sum_{\bfc'} X_{\bfc'}>1] \leq 1/2$. The claim follows. 
\end{proof}
\paragraph{Proof of \cref{prop:hash}.}
\begin{proof}
\begin{align}
    \E_h \vert \cB_j \vert &= \E_h [ y + \sum_{\bfc \not\in G_\ell} N_{\ell, \bfc}]\\
    &=y + \E_h[\sum_{\bfc \not\in G_\ell} N_{\ell, \bfc}]\\
    &=y+ \frac{n_{G_\ell}}{40k}
\end{align}
The claim follows by Markov inequality. 
\end{proof}

\paragraph{Proof of \cref{prop:good-hh}. }
\begin{proof}
We set the failure probability of \textsf{DP-HH} (see \cref{thm:dp-hh-cr}) to be $\xi:=\frac{1}{k^2}$. 
For $\bfc \in G_\ell$, define $\cE_\bfc := \cE_1 \wedge \cE_2 \wedge \cE_3$ where $\cE_1$ is the event that \textsf{DP-HH} algorithm is correct on all instances, $\cE_2$ is the event that there are no collisions between $\bfc$ and other good cells, and $\cE_3$ is the event that there exists a hash bucket that contains only $\bfc$, and if $N_{\ell,\bfc}:= y$, then the size of that hash bucket is $\leq 2(y+\frac{n_{G_\ell}}{40k})$. We know that $\Pr[\cE_1] \geq 1-\xi= 1-\frac{1}{k^2}$, 
by the accuracy guarantee of \textsf{DP-HH} algorithm, $\Pr[\cE_2] \geq 1-\frac{1}{k^2}$ by \cref{prop:good-cell-collide} (and boosting the success probability), and $\Pr[\cE_3] \geq 1-\frac{1}{k^2}$, by \cref{prop:hash} (and boosting the success probability). Thus by a union bound, we have that for a fixed $\bfc \in G_\ell$, $\Pr[\cE_\bfc] \geq 1-\frac{3}{k^2}$. 
By taking a union bound over all $4k$ good cells, with probability at least $1-\frac{12}{k}$, the claim holds. 
\end{proof}

\paragraph{Proof of \cref{prop:uncovered-hh-covered-good}. }
\begin{proof}
Recall that $N^{(t)}_{\ell,\bfc}$ is the number of points in cell $\bfc$ at time step $t$. Because $x_t$ is covered in $G_\ell$, this means that we only need to care about the points covered by a good cell $\bfc \in G_\ell$ but $\bfc \not\in H_{\ell,t}$. Using \cref{prop:good-hh}, we know that if $N^{(t)}_{\ell,\bfc} \geq \frac{\theta n_{G_\ell}}{20k} $, and $N^{(t)}_{\ell,\bfc} \geq  \frac{2\log(\Lambda) \log^2(k)}{\eps \eta} \poly(\log(\frac{T\cdot k \cdot 2^\ell}{\theta \eta }))$, then with probability at least $1-\frac{12}{k}$, we have that $\bfc \in H_{\ell,t}$. 

If $\bfc \in G_\ell$, and $\bfc \not \in H_{\ell,t}$, this means that either (1) $N^{(t)}_{\ell,\bfc} <  \frac{2\log(\Lambda) \log^2(k)}{\eps \eta} \poly(\log(\frac{T\cdot k \cdot 2^\ell}{\theta \eta }))$, or, (2) $\bfc$ is not an $\theta$-HH. For case (1), since there are $4k$ such good cells, the total number of uncovered points in such cells are $\frac{4k \cdot 2\log(\Lambda) \log^2(k)}{\eps \eta} \poly(\log(\frac{T\cdot k \cdot 2^\ell}{\theta  \eta }))$. 
For case (2), this means that $N^{(t)}_{\ell,\bfc} < \frac{\theta n_{G_\ell}}{20k} $. 
Again, since there are $4k$ such good cells, the total number of uncovered points in such cells are $< \frac{\theta n_{G_\ell}}{20k} \cdot 4k < \theta n_{G_\ell} $. 
\end{proof}

\end{document}